\newcommand{\subtitle}[1]{%
  \posttitle{%
    \par\end{center}
    \begin{center}\large#1\end{center}
    \vskip0.5em}%
}
\newcommand{\Var}{{\text{Var}}}
\newcommand{\Cov}{{\text{Cov}}}
\newcommand{\E}{{\mathbb{E}}}
\newtheorem{theorem}{Theorem}[section]
\newtheorem{proposition}[theorem]{Proposition}
\newtheorem{example}{Example}
\newcolumntype{C}[1]{>{\centering\arraybackslash}m{#1}}
\begin{document}
\pagestyle{plain}

\newtheoremstyle{mystyle}
{\topsep}
{\topsep}
{\it}
{}
{\bf}
{.}
{.5em}
{}
\theoremstyle{mystyle}
\newtheorem{assumptionex}{Assumption}
\newenvironment{assumption}
  {\pushQED{\qed}\renewcommand{\qedsymbol}{}\assumptionex}
  {\popQED\endassumptionex}
\newtheorem{assumptionexp}{Assumption}
\newenvironment{assumptionp}
  {\pushQED{\qed}\renewcommand{\qedsymbol}{}\assumptionexp}
  {\popQED\endassumptionexp}
\renewcommand{\theassumptionexp}{\arabic{assumptionexp}$'$}

\newtheorem{assumptionexpp}{Assumption}
\newenvironment{assumptionpp}
  {\pushQED{\qed}\renewcommand{\qedsymbol}{}\assumptionexpp}
  {\popQED\endassumptionexpp}
\renewcommand{\theassumptionexpp}{\arabic{assumptionexpp}$''$}

\newtheorem{assumptionexppp}{Assumption}
\newenvironment{assumptionppp}
  {\pushQED{\qed}\renewcommand{\qedsymbol}{}\assumptionexppp}
  {\popQED\endassumptionexppp}
\renewcommand{\theassumptionexppp}{\arabic{assumptionexppp}$'''$}

\renewcommand{\arraystretch}{1.3}

\newcommand\kosuke[1]{\cmnt{#1}{Kosuke}}
\newcommand\ambarish[1]{\cmnt{#1}{Ambarish}}
\newcommand\jose[1]{\cmnt{#1}{Jose}}

\newcommand{\argmin}{\mathop{\mathrm{argmin}}}
\makeatletter
\newcommand{\grande}{\bBigg@{2.25}}
\newcommand{\enorme}{\bBigg@{5}}

\newcommand{\blind}{0}

\newcommand{\tit}{Design-based inference for generalized network experiments with stochastic interventions}

\if0\blind

{\title{\tit\thanks{This work was supported through a grant from the Sloan Foundation (Economics Program; 2020-13946).}}
\author{Ambarish Chattopadhyay\thanks{Stanford Data Science, Stanford University, 450 Jane Stanford Way Wallenberg, Stanford, CA 94305; email: \url{hsirabma@stanford.edu}.}, \and Kosuke Imai\thanks{Department of Government and Department of Statistics, Harvard University, 1737 Cambridge Street, Institute for Quantitative Social Science, Cambridge, MA 02138; email: \url{imai@harvard.edu} URL: \url{https://imai.fas.harvard.edu}},\and Jos\'{e} R. Zubizarreta\thanks{Departments of Health Care Policy, Biostatistics, and Statistics, Harvard University, 180 Longwood Avenue, Office 307-D, Boston, MA 02115; email: \url{zubizarreta@hcp.med.harvard.edu}.}
}

\date{\today} 

\maketitle
}\fi

\if1\blind
\title{\bf \tit}
\maketitle
\fi

\begin{abstract}
A growing number of researchers are conducting randomized experiments to analyze causal relationships in network settings where units influence one another. A dominant methodology for analyzing these experiments is design-based, leveraging random treatment assignments as the basis for inference. In this paper, we generalize this design-based approach to accommodate complex experiments with a variety of causal estimands and different target populations.
An important special case of such generalized network experiments is a bipartite network experiment, in which treatment is randomized among one set of units, and outcomes are measured on a separate set of units. We propose a broad class of causal estimands based on stochastic interventions for generalized network experiments. Using a design-based approach, we show how to estimate these causal quantities without bias and develop conservative variance estimators. We apply our methodology to a randomized experiment in education where participation in an anti-conflict promotion program is randomized among selected students. Our analysis estimates the causal effects of treating each student or their friends among different target populations in the network. We find that the program improves the overall conflict awareness among students but does not significantly reduce the total number of such conflicts.
\end{abstract}

\noindent Keywords:
{bipartite network experiment; partial interference; peer effects; randomized experiment; spillover effects}

\clearpage
\doublespacing


\section{Introduction}
\label{sec_introduction}

In randomized experiments across the health and social sciences, units routinely interact with one another.
This often leads to the phenomenon of \textit{interference} where the outcome of one unit is influenced by the treatments assigned to other units \citep[e.g.,][]{halloran1995causal,nickerson2008voting,gupta2019top}.
Even when analyzing such complex experiments, the randomization of treatment assignment is under the control of the investigator and hence can serve as a ``reasoned basis'' for statistical inference \citep{fisher1935design}. 
This explains why the design-based or randomization-based inference has been a dominant approach to analyzing randomized experiments under interference \cite[e.g.,][]{ rosenbaum2007interference, hudgens2008toward, aronow2017estimating, athe:eckl:imbe:18}.

A strand of literature has developed design-based approaches for analyzing randomized experiments under {\it clustered network} or {\it partial} interference where spillover effects are assumed to occur only within the same cluster of units \citep[e.g.,][]{rosenbaum2007interference, hudgens2008toward, liu2014large, imai2021causal, park2022spillover}. 
Existing methods, however, can only be applied to experiments where all units in the network are eligible to receive the treatment.
This restriction represents an important limitation because many modern network experiments involve some units that are not eligible for treatment assignment or outcome measurement.
A prominent example is bipartite network experiments, where treatment is randomized among one set of units while the outcome is measured for a separate set of units \citep{doudchenko2020causal,harshaw2021design,zigler2021bipartite}. 
Conducting design-based inference for such experiments is challenging due to the inherent dependence within and between the ineligible and eligible units. 

In addition, although the existing methods focus on the average treatment effect among all treatment-eligible units, researchers may be interested in estimating causal effects for different target populations in the network, such as treatment-ineligible units or a group that includes some of both treatment-eligible and ineligible units.
For instance, in experiments on ride-sharing platforms such as Lyft and Uber, the treatment (e.g., price discount) may be applied only to riders while analysts wish to estimate causal effects separately for riders and drivers \citep{bajari2023experimental}.
In our motivating application \citep{paluck2016changing}, one question of interest is how popular students' participation in an anti-bullying program can influence the attitudes and behavior of their close friends who are ineligible for the program. 
To our knowledge, existing methods do not directly incorporate design-based inference for various target populations within the network.

In this paper, we propose a design-based causal inference framework and methodology for {\it generalized network experiments}, where an arbitrary subset of units are eligible to receive the treatment and a target population of interest may include both treatment-eligible and ineligible units.
Importantly, our framework does not make parametric assumptions nor imposes restrictions on the interference structure. 


As an important special case, generalized network experiments encompass bipartite network experiments, where treatment is randomized among one set of units while the outcome is measured for a separate set of units.
Bipartite network experiments are often used in two-sided markets where, for example, a price discount (treatment) is administered to a group of products (eligible units) whereas the amount purchased (outcome) is measured on buyers (ineligible units).
Although several scholars have recently proposed methods for analyzing bipartite network experiments \citep{doudchenko2020causal,harshaw2021design,zigler2021bipartite}, they are not fully design-based and are not applicable to other types of generalized network experiments, such as the school conflict experiment by \cite{paluck2016changing}. 

We first propose a broad class of causal estimands for generalized network experiments based on \textit{stochastic interventions}, which represent a probabilistic treatment assignment mechanism on the treatment-eligible units (Section~\ref{sec_ageneral}).
Under our framework, one can specify a stochastic intervention that assigns the treatment to each unit with different probabilities.
This allows us to formalize unit-level causal quantities under different treatment assignment mechanisms.
The proposed class of estimands extends the existing definitions of average direct, indirect, and total effects to generalized network experiments with arbitrary target populations \citep{hudgens2008toward, zigler2021bipartite}. 
These target populations may correspond to, for example, all treatment-eligible units, all treatment-ineligible units, all the units, or units defined by a set of covariates.

Second, we propose Horvitz-Thompson and H\'{a}jek estimators and develop design-based inferential approaches (Section~\ref{sec_designbased}). 
We show that the Horvitz-Thompson estimator is unbiased in finite samples whereas the H\'{a}jek estimator is unbiased in large samples.
Moreover, we obtain closed-form expressions for the design-based variances of these estimators. 
We show that under certain assumptions about the structure of interference, it is possible to obtain conservative estimators of these variances.

To this end, we consider two interference structures. The first extends the notion of stratified interference \citep{hudgens2008toward} to generalized network experiments. The second builds on recent works on semiparametric modeling by \cite{zhang2023individualized} to propose a flexible additive interference structure in a design-based setting. 
We show that while both approaches lead to conservative variance estimators in finite samples, the added flexibility of additive interference comes at the cost of greater standard errors.
In a simulation study, we find that the H\'{a}jek estimator systematically produces more efficient estimates when compared to the Horvitz-Thompson estimators across different simulation settings (Section~\ref{sec_simulation}).

Finally, we apply our methodology to reanalyze an influential randomized clustered network experiment \citep{paluck2016changing} concerning an anti-conflict program in public middle schools (Section~\ref{sec_empirical}).
The original analysis focused on understanding whether encouraging a group of students to take a public stance against conflict (i.e., treatment) can shift overall levels of conflict behavior in schools.
Our analysis, instead, examines whether and to what extent the behavior of students is influenced by their own treatment status or the treatment of their close friends. 
This alternative question is of interest because for 
any given student, their own treatment status and that of their close eligible friend are more likely to influence their behavior, when compared to the treatment status of the other students. Moreover, we examine the impact of the program on all the students and separately among eligible and ineligible students. 
We find that intervening on their close friends and themselves improves students' awareness and overall stance against conflict, while it does not significantly reduce the number of conflict cases in schools, on average. 

\paragraph{Related literature.}
There exists extensive literature on causal inference with interference \cite[see][for earlier reviews]{tchetgen2012causal,halloran2016dependent}.
The most commonly analyzed experimental design in this literature is two-stage randomization.
Building on the seminal work by \cite{hudgens2008toward}, many scholars have developed and applied methods to estimate various direct and spillover effects \citep[e.g.,][]{sinclair2012detecting,crepon2013labor,liu2014large,baird2018optimal,basse2018analyzing,imai2021causal}.
Similar to this literature, we allow general network interference within each cluster, but unlike two-stage randomized designs, we consider the possible existence of treatment-ineligible units and a broader class of spillover effects. 

Beyond this specific experimental design, \cite{aronow2017estimating} propose an exposure mapping approach by assuming that the potential outcome of one unit depends on the treatment assignments of other units in a network only through a known low-dimensional function of the treatment assignments \citep{toulis2013estimation,leung2020treatment}.
In practice, however, it is often impossible to observe all the ways in which units interact with one another.
As a result, the assumptions that severely restrict the structure of interference may be difficult to justify.

We take an alternative approach based on stochastic interventions that avoid the specification of an exposure map while maintaining the interpretability of empirical findings.
For variance estimation, however, we also assume a certain form of interference as done in the previous works that analyze randomized experiments under interference. 
In particular, our analysis incorporates extensions of stratified interference and a more flexible additive interference, each of which enables us to obtain conservative variance estimators \citep{yu2022estimating,zhang2023individualized}.

Our work also contributes to the fast growing literature on bipartite network experiments.
There are two basic approaches.
First, a series of recent works build upon the aforementioned exposure mapping approach under bipartite settings.
Under a linear exposure map, \cite{pouget2019variance} develop a clustering algorithm for estimating the global average treatment effect, i.e., the average effect of treating all eligible units.  
In addition, under an arbitrary but known exposure map, \cite{doudchenko2020causal} show how to estimate the global average treatment effect using regression and weighting methods based on generalized propensity scores.
More recently, \cite{harshaw2021design} assume a linear model for both the exposure and the response to estimate the global average treatment effect and develop an inferential approach using asymptotic approximations.
As mentioned above, we do not adopt the exposure mapping approach in order to avoid, whenever possible, restrictions on the structure of interference.

The second approach to bipartite network experiments is based on stochastic interventions.
\cite{zigler2021bipartite} introduce this alternative approach.
While they formulate a set of estimands using stochastic interventions, and propose Horvitz-Thompson-type estimators for these estimands, the authors do not consider formal variance estimation. 
Building on this seminal work, we generalize their estimands and develop design-based assumption-lean inference.

We also contribute to the recent literature on stochastic interventions.  
Two-stage randomization discussed above can be seen as an application of stochastic intervention.
More recently, stochastic interventions have been used in a variety of settings, including causal inference in longitudinal studies \citep{kennedy2019nonparametric}, mediation analysis \citep{diaz2019causal}, analysis of spatio-temporal data \citep{papadogeorgou2020causal}, and other types of observational studies \citep[e.g.,][]{munoz2012population,young2014identification,papadogeorgou2019causal,zigler2020bipartite}.
We further extend stochastic interventions to generalized network experiments.

Finally, another related literature focuses on the design and analysis of spatial experiments. 
In particular, \cite{wang2020design} define causal estimands by considering a circle-average outcome for each treatment-eligible point in space by focusing on Bernoulli assignments \citep[see also][]{wang2021causal}. 
In contrast, the outcomes and estimands in our framework are defined at the level of both eligible and ineligible units and allow for arbitrary assignment mechanisms.

\section{Effectiveness of anti-conflict interventions in schools}
\label{sec_theeffect}

In this section, we introduce the clustered network experiment analyzed later in the paper and discuss the substantive questions that motivate our proposed methodology. 

\subsection{Background}

An important question in the behavioral and social sciences is whether and how a shift in the attitude and behavior of a few individuals can be transmitted through social networks to induce community-wide changes.
\cite{paluck2016changing} used an innovative experimental design to study this question in the context of school conflicts, such as bullying, harassment, and other antagonistic interactions among students.
A primary goal of the study was to identify influential students who can effectively change the norms and behavior of other students in the same school. 

The authors conducted a randomized experiment across 56 public middle schools in New Jersey, of which 28 were randomly selected for an anti-conflict intervention (i.e., treatment). 
This program was designed to encourage participating students to take a public stance against school conflicts.
In each treated school, a group of students (called ``seed-eligible students'') was selected non-randomly.
Many of them were popular and reported having many friends within their schools.
On average, there were about 50 seed-eligible students and 200 ineligible students in each school.
Among the seed-eligible students, half of them (``seed students'') were randomly selected within a pre-defined stratum to participate in the anti-conflict intervention program. 
In addition, based on the number of social connections among students in each school, a group of highly connected seed students (``referent students'') were identified. On average, there were about five referent students per treated school.

A pre-experiment survey was fielded to collect student-level baseline data on demographics, social connections, and conflict behaviors and perceptions.
Each student reported up to 10 close friends that they spent time with during the last few weeks.
Post-experiment data were also collected using a similar survey at the end of the school year, along with the schools' administrative records.
The outcome variables measured awareness about conflict (e.g., whether students wore anti-conflict wristbands) and instances of conflict (e.g., number of cases of conflict).

\subsection{Motivating questions}

The authors of the original study were primarily interested in comparing the treated and control schools to estimate the causal effects of the intervention at the school level.
To this end, the authors conducted a model-based analysis by fitting a linear regression model of school-level outcomes (e.g., number of cases of conflict) on school-level characteristics (e.g., proportion of referent students) and school-level treatment status. 

Another key part of the original analysis focused on the effect of treating referent students on all the students in their social network.
Specifically, for the population of all the students in the network, the authors estimated the average causal effects of a new four-level treatment --- (i) having a seeded friend who is also a referent, (ii) having a seeded friend but no referent friends, (iii) being in a treated school but having no seeded friends, and (iv) being in a control school.

Estimation was done using a covariate-adjusted inverse probability weighted regression model with student-level data, and randomization-based inference was performed under the sharp null hypothesis of constant treatment effects.
The authors found that, in terms of peer-to-peer social influence, exposure to referent students increases awareness and perceived social norms against conflict, but it does not decrease instances of conflict.

In this paper, we provide an alternative approach to analyzing this experiment. Our analysis differs from that of \cite{paluck2016changing} in terms of the causal questions, the target populations, and the mode of estimation and inference.
First, unlike the original analysis, we examine how students' conflict behaviors are influenced by their own treatment status or that of their close friends, where closeness is determined by the information provided in the baseline survey.
We also examine whether, after taking into account the influence of close friends, the students' conflict behaviors are further affected by the referent students who are highly connected seed students.




Our proposed framework embeds these questions into a direct and indirect effects estimation problem under generalized network experiments (Section~\ref{sec_ageneral}).
A critical component of our framework is the notion of a \textit{key-intervention} unit.
For any unit in a network, its key-intervention units refer to one or more treatment-eligible units whose influence is of particular interest.
For instance, in this experiment, the key-intervention units of a seed-eligible student may be themselves, while the key-intervention units of a seed-ineligible student may be their closest seed-eligible friends.
The idea of a single key-intervention unit was introduced in \cite{zigler2021bipartite} for bipartite experiments.
We extend this notion by enabling multiple units to serve as key-intervention units under generalized network experiments.

Second, the original analysis estimates the average treatment effect on all the students within a school.
In contrast, our analysis separately estimates causal effects for seed-eligible and seed-ineligible students. The effects may vary between these two populations because the seed-eligible students are non-randomly selected and hence their characteristics differ.
Moreover, since the seed-ineligible students never receive the treatment, the effects of seed-eligible students' program participation on these students can be interpreted as peer effects. 
By comparing the average effects between these two target populations, we can examine the extent to which the treatment effect transmits from eligible students to their ineligible peers through their friendship network.

Finally, the original study used both model-based and design-based approaches.
In their model-based analysis, the inferential validity relies on the appropriateness of the assumed regression models.
In contrast, our analyses are fully design-based and do not require modeling assumptions. 
In their design-based analysis, the uncertainty quantification relies upon the assumption of constant additive treatment effects.
We address this limitation by providing design-based confidence intervals while allowing for heterogeneous treatment effects.

\section{Methodological framework}
\label{sec_ageneral}
\subsection{Setup and notation}
\label{sec_setup}

Consider a generalized network experiment on a finite population comprising a set of treatment-eligible or \textit{intervention} units $\mathcal{I}$ (e.g., seed-eligible students in our application) and a set of treatment-ineligible or \textit{non-intervention} units $\mathcal{O}$ (e.g., seed-ineligible students).
These units are grouped into $K \geq 1$ non-overlapping clusters (e.g., schools). 
While most of our proposed methodology applies to experiments with a single ($K = 1$) cluster, we retain the clustered setting throughout the paper to maintain consistency with our motivating application in Section \ref{sec_theeffect}.

We write $\mathcal{I} = \mathcal{I}_1 \cup ... \cup \mathcal{I}_K$ and $\mathcal{O} = \mathcal{O}_1 \cup ... \cup \mathcal{O}_K$, where $\mathcal{I}_k$ and $\mathcal{O}_k$ denote the sets of $n_k$ intervention units and $m_k$ non-intervention units in cluster $k$, respectively. 
By definition, $\mathcal{I}_k$ and $\mathcal{O}_k$ are disjoint.
Denote $\mathcal{S}_k \subseteq \mathcal{I}_k \cup \mathcal{O}_k$ as the target population of interest in cluster $k$ for which we wish to learn certain causal effects of the intervention. 
Finally, we use $\mathcal{S} = \mathcal{S}_1 \cup \mathcal{S}_2 \cup ... \cup \mathcal{S}_K$ to denote the combined target population across all clusters. 

\begin{figure}[!t]
  \centering
    \includegraphics[scale = 0.3]{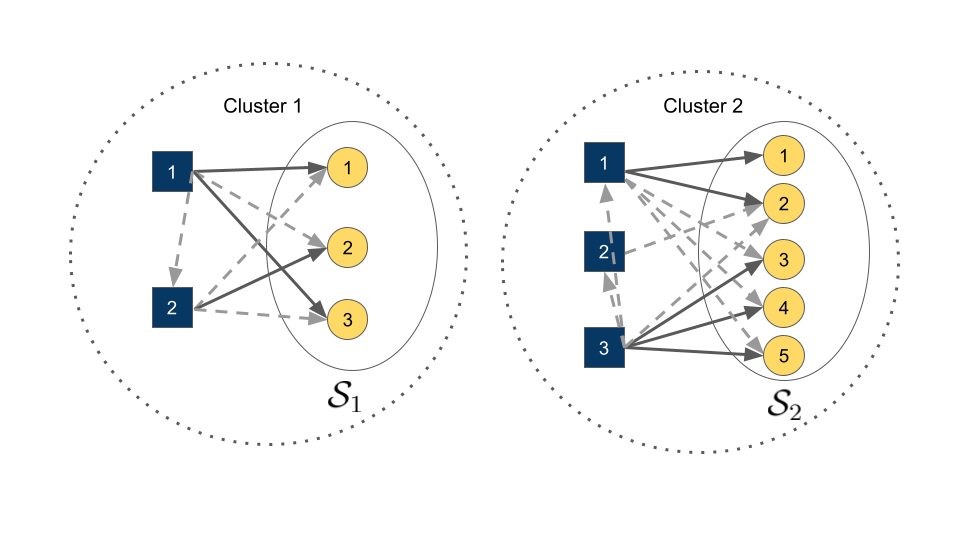}
    \caption{An example of a generalized network experiment with two clusters. An arrow from an intervention unit (blue square) to a non-intervention unit (yellow circle) indicates that the treatment received by the intervention unit may affect the outcome of the non-intervention unit. For each non-intervention unit, solid arrows correspond to the key-intervention units and the dashed arrows correspond to other intervention units. For each intervention unit, the corresponding key-intervention unit is itself. Here, the target population is the set of non-intervention units.} 
    \label{fig:bip}
\end{figure}

Figure~\ref{fig:bip} presents an example with $K=2$ clusters.
The intervention units are represented by blue squares while the non-intervention units are yellow circles. 
An arrow denotes a potential causal effect of an intervention unit's treatment on another unit's outcome, which may or may not be eligible for the treatment.
Here, the target population is the set of non-intervention units. 

With different choices of target population $\mathcal{S}$, this setup encompasses other common network experimental designs as illustrated in the following examples.
\begin{example}[Standard clustered network experiments] \normalfont \singlespacing
$\mathcal{S}_k = \mathcal{I}_k$, where inferences are made for the units that can be assigned to either the treatment or control condition.  
\end{example}
\begin{example}[Bipartite experiments] \normalfont \singlespacing
$\mathcal{S}_k = \mathcal{O}_k$, where inferences are made for the units that are not eligible to receive treatment.
\end{example}
\begin{example} \normalfont \singlespacing
$\mathcal{S}_k = \mathcal{I}_k \cup \mathcal{O}_k$, where inferences are made for the entire population of units.    
\end{example}
In addition, $\mathcal{S}_k$ may correspond to a population characterized by covariates.
For instance, in the school conflict experiment, the population of interest may be all the female students or all students who were involved in at least one case of conflict before the experiment took place.

Next, let $\bm{A}_k = (A_{ki}:i \in \mathcal{I}_k)$ denote the vector of treatment assignment indicators in cluster $k$, with $A_{ki} = 1$ if intervention unit $i \in \mathcal{I}_k$ receives the treatment, and $A_{ki} = 0$ otherwise. 
Also, we measure $d$ baseline covariates for each unit in cluster $k$.
Define $\bm{X}_k \in \mathbb{R}^{d(n_k + m_k)}$ as the stacked vector of observed covariates across all the units in cluster $k$, and write $\mathcal{X} = \{\bm{X}_1,...,\bm{X}_K\}$. 
For each unit $j \in \mathcal{S}_k$, let $Y_{kj}(\bm{a}_1,...,\bm{a}_K)$ represent its potential outcome (e.g., the number of conflict incidents) when the vectors of treatment levels in clusters $1,2,...,K$ equal $\bm{a}_1,\bm{a}_2,...,\bm{a}_K$, respectively.
We use $\mathcal{Y} = \{Y_{kj}(\bm{a}_1,...,\bm{a}_K): \bm{a}_k \in \{0,1\}^{n_k}, j\in \mathcal{S}_k, k\in\{1,...,K\} \}$ to represent the set of all possible potential outcomes across all units in $\mathcal{S}$.
Finally, for $j \in \mathcal{S}_k$, let $Y^{\text{obs}}_{kj} = Y_{kj}(\bm{A}_1,...,\bm{A}_K)$ be the corresponding observed outcome. 

Throughout the paper, we adopt a finite population causal inference framework \citep{neyman1923application}, where the sets of potential outcomes $\mathcal{Y}$ and the covariates $\mathcal{X}$ are fixed and randomness stems from the treatment assignments ($\bm{A}_1,...,\bm{A}_K$) alone. 
We assume that the potential outcomes of each unit in a cluster may depend on the treatments assigned to any intervention unit in the same cluster but do not depend on the treatment assignments of units in other clusters.
\begin{assumption}[Partial interference \citep{sobel2006randomized,hudgens2008toward}] \normalfont \singlespacing
For all $k \in \{1,2,...,K\}$ and $j \in \mathcal{I}_k \cup \mathcal{O}_k$, $Y_{kj}(\bm{a}_1,...,\bm{a}_K) = Y_{kj}(\bm{a}'_1,...,\bm{a}'_K)$ if $\bm{a}_k = \bm{a}'_k $.
\label{assump_partial}
\end{assumption}
Assumption~\ref{assump_partial} allows for interference within each cluster, but rules out interference across clusters. 
In our application, the assumption implies that students do not influence one another across schools.
Under this assumption, we can write $Y_{kj}(\bm{a}_k) = Y_{kj}(\bm{a}_1,...,\bm{a}_K)$ for all $k \in \{1,2,...,K\}$ and $j \in \mathcal{S}_k$.
We emphasize, however, that the subsequent theory and methods apply even if Assumption~\ref{assump_partial} is violated, as such experiments can be conceptualized as experiments with a single ($K=1$) cluster.

Finally, for each unit $j$ in cluster $k$, denote $i^* = i^*(j) \in \mathcal{I}_k$ as its \emph{key-intervention} unit \citep{zigler2021bipartite}. 
The key-intervention unit is often of interest as they are likely to influence the behavior of the corresponding non-intervention unit. 
In our application, the key intervention unit of a seed-eligible student may be themselves, while the key-intervention unit of a seed-ineligible student may be the best friend of the student. 
We can directly incorporate the role of key-intervention unit in the definition of causal estimands, as shown below.

\subsection{Estimands}
\label{sec_estimand}

Under the setup described above, we define a broad class of causal estimands for generalized network experiments using stochastic interventions, extending the existing estimands.

\paragraph{General formulation.}
For any unit $j \in \mathcal{S}_k$, we consider a stochastic intervention $\pi_{kj,\mathcal{X}}(\cdot): \{0,1\}^{n_k} \to [0,1]$ on the intervention units in cluster $k$. That is $\pi_{kj,\mathcal{X}}(\cdot)$ is a probability distribution over all possible treatment assignments on the units in $\mathcal{I}_k$, which may depend on unit $j$ and the set of covariates $\mathcal{X}$. 
For notational simplicity, we will omit the subscript $\mathcal{X}$ and write $\pi_{kj}(\cdot)$.

Our proposed causal estimand formalizes the notion of target population average potential outcome, where for each unit $j \in \mathcal{S}_k$, treatments are assigned to the units in $\mathcal{I}_k$ using the assignment mechanism $\pi_{kj}(\cdot)$. 
The formal definition is given by,
\begin{equation}
\tau^\pi = \frac{1}{K}\sum_{k=1}^{K}\left[\frac{1}{|\mathcal{S}_k|}\sum_{j \in \mathcal{S}_k}\left\{\sum_{\bm{a}\in \{0,1\}^{n_k}}\pi_{kj}(\bm{a})Y_{kj}(\bm{a})\right\}\right].
\label{eq:general_est}
\end{equation}
This estimand involves averaging at three levels. 
First, for each unit $j \in \mathcal{S}_k$, the potential outcomes $Y_{kj}(\bm{a})$ are averaged over all possible assignments $\bm{a}$ of the intervention units according to $\pi_{kj}(\cdot)$. 
Second, these unit-level average potential outcomes are further averaged over all units in $\mathcal{S}_k$ to obtain cluster-level average potential outcomes. 
Finally, they are averaged over all the clusters to obtain the target population average potential outcome. 

In this paper, we characterize the stochastic intervention corresponding to each $j \in \mathcal{S}_k$ in terms of a set of design-admissible treatment assignments $\mathcal{C}_{kj} \subseteq \{0,1\}^{n_k}$.
For instance, $\mathcal{C}_{kj} = \{\bm{a} \in \{0,1\}^{n_k}: A_{ki^*} = 1\}$ corresponds to the subset of possible treatment assignments where the key-intervention unit $i^*$ of unit $j$ receives the treatment. 
This leads to the stochastic intervention $\pi_{kj}(\cdot) = \pi_k(\cdot\mid\mathcal{C}_{kj})$, where $\pi_k(\cdot)$ is a probability distribution, free of $j$. 
For the school conflict experiment with $\mathcal{S}_k = \mathcal{O}_k$, we can interpret the resulting estimand $\tau^\pi$ as the target population average potential outcome (e.g., the average number of conflicts in all schools) where, for each seed-ineligible student $j$ in school $k$, the seed-eligible students in cluster $k$ receive the intervention with probability $\pi_k(\cdot)$, restricted to the set of design-admissible assignments $\mathcal{C}_{kj}$. 

\paragraph{Effects of a single key-intervention unit.}
Using $\tau^\pi$, we can encapsulate several practically relevant causal estimands as special cases, including the direct, indirect, and total effects in standard network experiments \citep{hudgens2008toward} and bipartite experiments \citep{zigler2021bipartite}. To see this, we first set $\mathcal{C}_{kj} = \{\bm{a} \in \{0,1\}^{n_k}: A_{ki^*} = a\}$ where $a \in \{0,1\}$ and note that $\tau^\pi$ can be written as, 
\begin{equation}
\mu^\pi_{a} = \frac{1}{K}\sum_{k=1}^{K} \left[\frac{1}{|\mathcal{S}_k|}\sum_{j \in \mathcal{S}_k}\left\{\sum_{\bm{s}}\pi_k(\bm{A}_{k(-i^*)} = \bm{s}\mid A_{ki^*} = a)Y_{kj}(A_{ki^*} = a, \bm{A}_{k(-i^*)} = \bm{s})\right\}\right].     
\end{equation}
In the school conflict experiment, setting $\mathcal{S}_k = \mathcal{O}_k$, $\mu^\pi_a$ represents the average potential outcome when in each school $k$, the best seed-eligible friend of each seed-ineligible student is assigned to the treatment condition $a \in \{0,1\}$, while the treatment assignment for all the other seed-eligible students in the school follows the distribution $\pi_k(\cdot)$.

Using the definition of $\mu^\pi_a$ above, we can write the direct effect as follows, 
\begin{equation}
\text{DE}^\pi = \mu^\pi_1 - \mu^\pi_0.   
\label{eq_DE}
\end{equation}
In the school conflict experiment, setting $\mathcal{S}_k  =\mathcal{O}_k$, we can interpret $\text{DE}^\pi$ as the average effect of treating the best seed-eligible friend of every seed-ineligible student, letting the treatment assignment of all the other seed-eligible students in the school $k$ follow $\pi_k(\cdot)$. 
In this case, $\text{DE}^\pi$ equals the existing definition of the direct effect in bipartite experiments \citep{zigler2021bipartite}.
Likewise, for $\mathcal{S}_k  =\mathcal{I}_k$, $\text{DE}^\pi$ is equivalent to the existing definition of the direct effect in standard network experiments \citep{hudgens2008toward}.

For a fixed treatment level $a \in \{0,1\}$, we can also formalize the indirect effect as
\begin{equation}
 \text{IE}^{\pi,\tilde{\pi}}_{a} = \mu^\pi_a - \mu^{\tilde{\pi}}_a,  
 \label{eq_IE}
\end{equation}
where $\tilde{\pi}_k(\cdot): \{0,1\}^{n_k} \to [0,1]$ is another stochastic intervention on the intervention units in cluster $k$. For $\mathcal{S}_k  = \mathcal{O}_k$, we can interpret $\text{IE}^{\pi,\tilde{\pi}}_{a}$ as the average effect of changing the treatment assignment mechanism of all but the best seed-eligible friend of every seed-ineligible student in school $k$ from $\tilde{\pi}_k(\cdot)$ to $\pi_k(\cdot)$, while holding the treatment level of the best-seed eligible friend fixed at $a$.
Here, $\pi_k(\cdot)$ and $\tilde{\pi}_k(\cdot)$ may correspond to assignment mechanisms where we treat a higher proportion of referent students in one and a lower proportion in the other.
Once again, this definition extends the existing notions of indirect effect to generalized network experiments. 

We can also contrast the treatment status of the key-intervention unit and two different stochastic interventions simultaneously to define an average total effect, $\text{TE}^{\pi,\tilde{\pi}} = \mu^\pi_1 - \mu^{\tilde{\pi}}_0$. For $\mathcal{S}_k  = \mathcal{O}_k$, we can interpret $\text{TE}^{\pi,\tilde{\pi}}$ as the average effect of providing the treatment to the best seed-eligible friend of every seed-ineligible student in school $k$, while changing the treatment assignment mechanism of the other seed-eligible students from $\tilde{\pi}_k(\cdot)$ to $\pi_k(\cdot)$.

\paragraph{Effects of multiple key-intervention units.}

In addition, $\tau^\pi$ can also incorporate causal quantities based on \textit{multiple} key-intervention units. For example, we can define the average potential outcome under a stochastic intervention that intervenes on a fixed proportion (e.g., 0.5) of the seed-eligible friends of a student. 
To formalize this intervention, for unit $j \in \mathcal{S}_k$, denote $\bm{i}^* = \{i^*_1,...,i^*_{r_j}\}$ as the corresponding set of $r_j$ seed-eligible key-intervention units. 
With multiple key-intervention units, an analog of $\mu^\pi_a$ can be obtained by setting $\mathcal{C}_{kj} = \{\bm{a} \in \{0,1\}^{n_k}: A_{ki^*_{s}} = a, s\in \{1,...,r_j\}\}$. More generally, we can set $\mathcal{C}_{kj} = \{\bm{a} \in \{0,1\}^{n_k}: \sum_{s=1}^{r}A_{ki^*_s}/r_j = p^*\}$, where $p^* \in [0,1]$. In this case, $\tau^\pi$ represents the target population average potential outcome under the intervention mechanism $\pi_k(\cdot)$, while fixing, for each unit in $\mathcal{S}_k$, the proportion of treated key-intervention units to $p^*$.

\subsection{Nonparametric identification and estimation}
\label{sec_identify}

Let $f_{\mathcal{X},\mathcal{Y}}(\cdot)$ and $f_{k,\mathcal{X},\mathcal{Y}}(\cdot)$ denote the joint distributions of the assignment mechanisms in the overall population and in cluster $k$, respectively, which may depend on the set of covariates and the potential outcomes.
Unless otherwise specified, for notational simplicity, we omit the additional subscripts $\mathcal{X}$ and $\mathcal{Y}$. 
The proposed estimand $\tau^\pi$ in Equation~\eqref{eq:general_est} can be non-parametrically identified under the following assumptions. 
\begin{assumption}[Identification conditions]
\normalfont \singlespacing
\begin{enumerate}[label=(\alph*)]
\item[]
    \item \textit{Overlap:} For all $k \in \{1,2,...,K\}$, $\text{Supp}(\pi_{kj,\mathcal{X}}) \subseteq \text{Supp}(f_{k,\mathcal{X},\mathcal{Y}})$.

    \item \textit{Unconfoundedness:} For all $k \in \{1,2,...,K\}$ and for all $\bm{a} \in \{0,1\}^{n_k}$, $f_{k,\mathcal{X},\mathcal{Y}}(\bm{a}) = f_{k,\mathcal{X}}(\bm{a})$. 
\end{enumerate}
\label{assump_identify}
\end{assumption}

Assumption~\ref{assump_identify}(a) states that any treatment assignment with a strictly positive probability under the stochastic intervention $\pi_{kj}(\cdot)$ also has a strictly positive probability under the actual intervention $f_{k}(\cdot)$. 
This assumption can be satisfied by choosing the intervention distribution appropriately.
Assumption~\ref{assump_identify}(b) is analogous to the usual unconfoundedness assumption in observational studies, stating that given the set of covariates, the assignment mechanism does not depend on the potential outcomes. In randomized experiments, this assumption is satisfied by design. 

Under Assumption \ref{assump_identify}, we can nonparametrically identify $\tau^\pi$ as
\begin{equation}
    \tau^\pi =  \frac{1}{K}\sum_{k=1}^{K}\left[\frac{1}{|\mathcal{S}_k|}\sum_{j \in \mathcal{S}_k}\mathbb{E}\left\{\sum_{\bm{a}\in \{0,1\}^{n_k}}\mathbbm{1}(\bm{A}_k = \bm{a})\frac{\pi_{kj}(\bm{a})}{f_k(\bm{a})}Y^{\text{obs}}_{kj}\right\}\right].
    \label{eq_identify}
\end{equation}
This identification result suggests the following Horvitz-Thompson-type estimator of $\tau^\pi$, 
\begin{equation}
\hat{\tau}^\pi_{\text{HT}} = \frac{1}{K}\sum_{k=1}^{K}\left\{\frac{1}{|\mathcal{S}_k|}\sum_{j \in \mathcal{S}_k}\frac{\pi_{kj}(\bm{A}_k)}{f_k(\bm{A}_k)}Y^{\text{obs}}_{kj}\right\}.    
\end{equation}
In the special case of $\tau^\pi = \mu^\pi_a$, we obtain, 
\begin{equation}
\hat{\tau}^\pi_{\text{HT}} = \frac{1}{K}\sum_{k=1}^{K}\left\{\frac{1}{|\mathcal{S}_k|}\sum_{j \in \mathcal{S}_k}\mathbbm{1}(A_{ki^*} = a)\frac{\pi_{k}(\bm{A}_{k(-i^*)} \mid A_{ki^*} = a)}{f_k(\bm{A}_k)}Y^{\text{obs}}_{kj}\right\} =: \hat{\mu}^\pi_{a,\text{HT}}.    
\end{equation}
For bipartite experiments (i.e., $\mathcal{S}_k = \mathcal{O}_k$), $\hat{\mu}^\pi_{a,\text{HT}}$ becomes the Horvitz-Thompson estimator of $\mu^\pi_a$ proposed by \cite{zigler2021bipartite}.

The theorem below shows that $\hat{\tau}^\pi_{\text{HT}}$ is unbiased for $\tau^\pi$ under the design-based framework.
\begin{theorem}[Unbiasedness] \normalfont \singlespacing
    $\mathbb{E}(\hat{\tau}^\pi_{\text{HT}}) = \tau^\pi$, where the expectation is taken over the assignment mechanism $f(\cdot)$.
    \label{thm_HT_unbiased}
\end{theorem}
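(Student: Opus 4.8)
The plan is to take the expectation of $\hat{\tau}^\pi_{\text{HT}}$ over the assignment mechanism $f(\cdot)$ and to show, term by term, that it collapses to $\tau^\pi$ as written in Equation~\eqref{eq:general_est}. Since the potential outcomes $\mathcal{Y}$ and covariates $\mathcal{X}$ are fixed in the finite-population framework and the $1/K$ and $1/|\mathcal{S}_k|$ factors are nonrandom constants, linearity of expectation reduces the task to computing, for each cluster $k$ and each unit $j \in \mathcal{S}_k$, the single expectation $\mathbb{E}\bigl[\tfrac{\pi_{kj}(\bm{A}_k)}{f_k(\bm{A}_k)}Y^{\text{obs}}_{kj}\bigr]$.

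First I would invoke Assumption~\ref{assump_partial} to write $Y^{\text{obs}}_{kj} = Y_{kj}(\bm{A}_k)$, so that the $j$-th summand depends on the global assignment only through $\bm{A}_k$. I would then expand the ratio-times-outcome as a sum over the finitely many possible assignments, namely $\tfrac{\pi_{kj}(\bm{A}_k)}{f_k(\bm{A}_k)}Y_{kj}(\bm{A}_k) = \sum_{\bm{a} \in \{0,1\}^{n_k}} \mathbbm{1}(\bm{A}_k = \bm{a}) \tfrac{\pi_{kj}(\bm{a})}{f_k(\bm{a})} Y_{kj}(\bm{a})$, which is exactly the bracketed quantity appearing in the identification result of Equation~\eqref{eq_identify}. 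Because each summand depends on the assignment only through $\bm{A}_k$, the relevant law is the cluster-$k$ marginal of $f(\cdot)$, which is $f_k(\cdot)$; hence $\mathbb{E}[\mathbbm{1}(\bm{A}_k = \bm{a})] = \Pr(\bm{A}_k = \bm{a}) = f_k(\bm{a})$. The factor $f_k(\bm{a})$ then cancels the denominator, leaving $\sum_{\bm{a}} \pi_{kj}(\bm{a}) Y_{kj}(\bm{a})$, the unit-level average potential outcome. Substituting back through the three levels of averaging recovers the claim $\mathbb{E}(\hat{\tau}^\pi_{\text{HT}}) = \tau^\pi$.

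The one point requiring care — and the only real obstacle — is the cancellation $f_k(\bm{a}) \cdot \tfrac{\pi_{kj}(\bm{a})}{f_k(\bm{a})} = \pi_{kj}(\bm{a})$, whose legitimacy rests on two of the stated conditions. First, the factor $f_k(\bm{a})$ that I pull out of $\mathbb{E}[\mathbbm{1}(\bm{A}_k = \bm{a})]$ is the true law of $\bm{A}_k$ only because the assignment does not depend on the potential outcomes; this is precisely unconfoundedness (Assumption~\ref{assump_identify}(b)), which holds by design in the randomized experiment and guarantees that the denominator $f_k$ used in the estimator coincides with the actual assignment probability. Second, the cancellation is undefined when $f_k(\bm{a}) = 0$, which is exactly what the overlap condition (Assumption~\ref{assump_identify}(a)) rules out: whenever $\pi_{kj}(\bm{a}) > 0$ we have $f_k(\bm{a}) > 0$, and for assignments with $f_k(\bm{a}) = 0$ the weight $\pi_{kj}(\bm{a})$ also vanishes, so those terms contribute nothing (under the convention $0/0 = 0$) and may be dropped. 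I would therefore restrict every sum to $\text{Supp}(f_k)$ before taking the expectation, after which the argument above goes through verbatim.

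Overall, I expect the proof to be short: the substantive content is entirely contained in the already-stated identification result of Equation~\eqref{eq_identify}, and Theorem~\ref{thm_HT_unbiased} is essentially the observation that the inner expectation in that expression is what the Horvitz--Thompson estimator averages empirically. The only genuine care is bookkeeping around the support of $f_k$, and confirming that the fixed-potential-outcome design makes the $Y_{kj}(\bm{a})$ constants that pass through the expectation unchanged.
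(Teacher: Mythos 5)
Your proposal is correct and follows essentially the same route as the paper's proof: expand $\pi_{kj}(\bm{A}_k)Y^{\text{obs}}_{kj}/f_k(\bm{A}_k)$ as a sum over assignments weighted by indicators, take the expectation of each indicator to get $f_k(\bm{a})$, and cancel with the denominator to recover $\tau^\pi$. Your additional bookkeeping about restricting sums to $\text{Supp}(f_k)$ and invoking overlap to justify the cancellation is sound and is in fact slightly more careful than the paper's own argument, which performs the cancellation without comment.
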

Theorem \ref{thm_HT_unbiased} also suggests the following H\'{a}jek-type estimator of $\tau^\pi$,
\begin{equation}
\hat{\tau}^{\pi}_{\text{H\'{a}jek}} = \frac{\sum_{k=1}^{K}\left\{\frac{1}{|\mathcal{S}_k|}\sum_{j \in \mathcal{S}_k}\frac{\pi_{kj}(\bm{A}_k)}{f_k(\bm{A}_k)}Y^{\text{obs}}_{kj}\right\}}{\sum_{k=1}^{K}\left\{\frac{1}{|\mathcal{S}_k|}\sum_{j \in \mathcal{S}_k}\frac{\pi_{kj}(\bm{A}_k)}{f_k(\bm{A}_k)}\right\}},
\end{equation}
which replaces the denominator $K$ in $\hat{\tau}^{\pi}_{\text{HT}}$ with its Horvitz-Thompson estimator. In the special case of $\tau^\pi = \mu^\pi_a$, we have, 
\begin{equation}
\hat{\tau}^\pi_{\text{H\'{a}jek}} = \frac{\sum_{k=1}^{K}\left\{\frac{1}{|\mathcal{S}_k|}\sum_{j \in \mathcal{S}_k}\mathbbm{1}(A_{ki^*} = a)\frac{\pi_{k}(A_{ki^*} = a\mid\bm{A}_{k(-i^*)})}{f_k(\bm{A}_k)}Y^{\text{obs}}_{kj}\right\}}{\sum_{k=1}^{K}\left\{\frac{1}{|\mathcal{S}_k|}\sum_{j \in \mathcal{S}_k}\mathbbm{1}(A_{ki^*} = a)\frac{\pi_{k}(A_{ki^*} = a\mid\bm{A}_{k(-i^*)})}{f_k(\bm{A}_k)}\right\}} =: \hat{\mu}^\pi_{\text{H\'{a}jek}}.    
\end{equation}
Our H\'{a}jek estimators extends the existing H\'{a}jek estimators under interference (see, e.g., \citealt{wang2020design}) to general target populations, assignment mechanisms, and stochastic interventions.

While the H\'{a}jek estimator is not unbiased for $\tau^\pi$ in finite samples, we show that under certain regularity conditions, it is consistent for $\tau^\pi$.
The direct and indirect effects given in Equations~\eqref{eq_DE}~and~\eqref{eq_IE} are estimated analogously by replacing each component term by its Horvitz-Thompson and H\'{a}jek estimators.

\section{Design-based inference}
\label{sec_designbased}

In this section, we discuss design-based inference based on the estimators outlined in Section~\ref{sec_ageneral}. 
We derive the design-based variances of these estimators and obtain closed-form conservative estimators of these variances. 
For conciseness, we focus on the the Horvitz-Thompson estimators of the proposed causal quantities, and relegate related discussions on the H\'{a}jek estimators to Appendix \ref{appsec_hajek} of the Supplementary Materials.


\subsection{The general variance expression}
\label{sec_generalvar}

Throughout this section, we maintain the partial interference (Assumption~\ref{assump_partial}) and the identification assumptions (Assumption~\ref{assump_identify}). Additionally, we assume that the treatment assignment mechanisms are independent across clusters.
\begin{assumption}[Independence of treatment assignment mechanisms across clusters] \normalfont \singlespacing
    $\bm{A}_1,...,\bm{A}_K$ are mutually independent.
    \label{assump_indep}
\end{assumption}
We make this assumption to simplify the variance calculations, although it can be relaxed to incorporate dependence among clusters.
An example of such dependence includes the use of complete randomization across clusters in two-stage randomized experiments.
We note that this assumption is satisfied in the school conflict experiment.

Next, we consider the case of a single key-intervention unit and obtain a closed-form variance expression for $\hat{\mu}^\pi_{a,\text{HT}}$.
Appendix \ref{sec_appendix_multiple} presents the generalization of this result to the case of multiple key-intervention units.   
\begin{theorem}[Variance of the Horvitz-Thompson Estimator] \normalfont \singlespacing
Under Assumptions~\ref{assump_partial}--\ref{assump_indep},
\begin{align}
\Var(\hat{\mu}^\pi_{a,\text{HT}}) =   \frac{1}{K^2}\sum_{k = 1}^{K} \frac{1}{|\mathcal{S}_k|^2}\left(\sum_{j \in \mathcal{S}_k} \Lambda_{1,k,j} + \mathop{\sum\sum}_{j \neq j' \in \mathcal{S}_k}\Lambda_{2,k,j,j'} \right),  
\end{align}
where 
\begin{align}
\Lambda_{1,k,j} \ = \ & \sum_{\bm{s}}\frac{\pi^2_{k}(\bm{A}_{k(-i^*)} = \bm{s}\mid A_{ki^*} = a)}{f_k(A_{ki^*} = a, \bm{A}_{k(-i^*)} = \bm{s})}Y^2_{kj}(A_{ki^*} = a, \bm{A}_{k(-i^*)} = \bm{s}) \nonumber\\
& \quad - \left\{\sum_{\bm{s}}\pi_{k}(\bm{A}_{k(-i^*)} = \bm{s}\mid A_{ki^*} = a)Y_{kj}(A_{ki^*} = a, \bm{A}_{k(-i^*)} = \bm{s}) \right\}^2,
\label{eq_lambda1} \\
    \Lambda_{2,k,j,j'} \ = \  & \sum_{\tilde{\bm{s}}}\frac{\pi^2_k(A_{ki^{*}} = a, A_{ki^{*'}} = a, \bm{A}_{k(-i^*, -i^{*'})} = \tilde{\bm{s}})}{f_k(A_{ki^{*}} = a, A_{ki^{*'}} = a, \bm{A}_{k(-i^*, -i^{*'})} = \tilde{\bm{s}})\pi_k(A_{ki^*} = a)\pi_k(A_{ki^{*'}} = a)} \nonumber\\
& \quad  \times Y_{kj}(A_{ki^{*}} = a, A_{ki^{*'}} = a, \bm{A}_{k(-i^*,-i^{*'}) } = \tilde{\bm{s}})Y_{kj'}(A_{ki^{*}} = a,A_{ki^{*'}} = a, \bm{A}_{k(-i^*,-i^{*'})} = \tilde{\bm{s}}) \nonumber\\
& - \left\{\sum_{\bm{s}}\pi_{k}(\bm{A}_{k(-i^*)} = \bm{s}\mid A_{ki^*} = a)Y_{kj}(A_{ki^*} = a, \bm{A}_{k(-i^*)} = \bm{s}) \right\} \nonumber \\ 
& \times \left\{\sum_{\bm{s}}\pi_{k}(\bm{A}_{k(-i^{*'})} = \bm{s}\mid A_{ki^{*'}} = a)Y_{kj'}(A_{ki^{*'}} = a, \bm{A}_{k(-i^{*'})} = \bm{s}) \right\},
\end{align}
and $i^{*'} = i^*(j')$ is the key-intervention unit of unit $j'$.
\label{thm_generalvar}   
\end{theorem}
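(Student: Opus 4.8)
The plan is to combine cross-cluster independence (Assumption~\ref{assump_indep}) with the elementary decomposition of a variance into diagonal and off-diagonal pieces, and then to identify each piece with $\Lambda_{1,k,j}$ and $\Lambda_{2,k,j,j'}$. To organize the computation I would write $\hat{\mu}^\pi_{a,\text{HT}} = \frac{1}{K}\sum_{k=1}^{K}\frac{1}{|\mathcal{S}_k|}\sum_{j\in\mathcal{S}_k}W_{kj}$, where
\[
W_{kj} \ := \ \mathbbm{1}(A_{ki^*}=a)\,\frac{\pi_k(\bm{A}_{k(-i^*)}\mid A_{ki^*}=a)}{f_k(\bm{A}_k)}\,Y^{\text{obs}}_{kj}.
\]
Because $\bm{A}_1,\dots,\bm{A}_K$ are mutually independent, every between-cluster covariance vanishes, so
\[
\Var(\hat{\mu}^\pi_{a,\text{HT}}) = \frac{1}{K^2}\sum_{k=1}^{K}\frac{1}{|\mathcal{S}_k|^2}\Big(\sum_{j\in\mathcal{S}_k}\Var(W_{kj}) + \mathop{\sum\sum}_{j\neq j'\in\mathcal{S}_k}\Cov(W_{kj},W_{kj'})\Big).
\]
It then suffices to prove the two identities $\Lambda_{1,k,j}=\Var(W_{kj})$ and $\Lambda_{2,k,j,j'}=\Cov(W_{kj},W_{kj'})$.

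For the diagonal term I would first reuse the computation behind Theorem~\ref{thm_HT_unbiased}: taking the expectation over $f_k$ cancels $f_k(\bm{A}_k)$ in the denominator (Overlap, Assumption~\ref{assump_identify}(a)) and yields $\E(W_{kj})=\sum_{\bm{s}}\pi_k(\bm{A}_{k(-i^*)}=\bm{s}\mid A_{ki^*}=a)\,Y_{kj}(A_{ki^*}=a,\bm{A}_{k(-i^*)}=\bm{s})$, which is exactly the quantity squared and subtracted in~\eqref{eq_lambda1}. For $\E(W_{kj}^2)$ the key fact is $\mathbbm{1}(A_{ki^*}=a)^2=\mathbbm{1}(A_{ki^*}=a)$, so squaring introduces $\pi_k^2/f_k^2$; multiplying by $f_k$ under the expectation and summing over assignments with $A_{ki^*}=a$ leaves a single surviving factor of $f_k$ in the denominator, producing the first line of~\eqref{eq_lambda1}. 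Subtracting gives $\Var(W_{kj})=\Lambda_{1,k,j}$.

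The off-diagonal term is the crux. Here $j$ and $j'$ may carry distinct key-intervention units $i^*$ and $i^{*'}$, so $W_{kj}W_{kj'}$ contains the product of indicators $\mathbbm{1}(A_{ki^*}=a)\mathbbm{1}(A_{ki^{*'}}=a)$ together with two conditional factors conditioned on different events. The device is to rewrite each conditional density as a ratio, $\pi_k(\bm{a}_{(-i^*)}\mid A_{ki^*}=a)=\pi_k(\bm{a})/\pi_k(A_{ki^*}=a)$ on the event $A_{ki^*}=a$, so that the two factors combine into $\pi_k^2(\bm{a})/\{\pi_k(A_{ki^*}=a)\pi_k(A_{ki^{*'}}=a)\}$. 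Taking the expectation over $f_k$ cancels one power of $f_k$, and restricting the sum to assignments with both $A_{ki^*}=a$ and $A_{ki^{*'}}=a$ (reparametrized by $\tilde{\bm{s}}$ over the remaining coordinates) reproduces precisely the first block of $\Lambda_{2,k,j,j'}$. The subtracted term $\E(W_{kj})\E(W_{kj'})$ is read off from the diagonal calculation applied to each factor, matching the final two lines, so $\Cov(W_{kj},W_{kj'})=\Lambda_{2,k,j,j'}$.

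I expect the main obstacle to be bookkeeping in this cross-moment: correctly tracking which coordinates are held at $a$ versus summed over, and verifying that the two differently-conditioned $\pi_k$ factors collapse into $\pi_k^2(\bm{a})$ divided by the two marginals. One subtlety worth checking separately is the degenerate case $i^*=i^{*'}$, in which the two target units share a key-intervention unit: there only one indicator and one conditioning event remain, $\bm{A}_{k(-i^*,-i^{*'})}$ reduces to $\bm{A}_{k(-i^*)}$, and I would confirm the stated $\Lambda_{2,k,j,j'}$ formula still simplifies correctly under this interpretation.
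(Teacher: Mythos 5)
Your proposal is correct and follows essentially the same route as the paper's proof: cross-cluster independence reduces the problem to within-cluster variances, the within-cluster variance is split into diagonal terms $\Var(W_{kj})=\Lambda_{1,k,j}$ and cross terms $\Cov(W_{kj},W_{kj'})=\Lambda_{2,k,j,j'}$, and the key manipulations (idempotence of the indicator giving the $\pi_k^2/f_k$ factor, and rewriting the two conditional densities as $\pi_k(\bm{a})/\pi_k(A_{ki^*}=a)$ so they combine over the joint event $A_{ki^*}=A_{ki^{*'}}=a$) are exactly those in the paper, which merely organizes the same algebra as an expansion over variances and covariances of the mutually exclusive assignment indicators rather than as $\E(W^2)-\{\E(W)\}^2$. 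Your check of the degenerate case $i^*=i^{*'}$ is a sensible addition; the stated formula does collapse correctly there.
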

Theorem~\ref{thm_generalvar} implies that in general, the variance of $\hat{\mu}^\pi_{a,\text{HT}}$ is non-identifiable. 
The second term in the expression of $\Lambda_{1,kj}$ involves products of potential outcomes that are not observable simultaneously, e.g., $Y_{kj}(A_{ki^*} = a, \bm{A}_{k(-i^*)} = \bm{s})Y_{kj}(A_{ki^*} = a, \bm{A}_{k(-i^*)} = \bm{s}')$, where $\bm{s} \ne \bm{s}'$. Therefore, to identify $\Var(\hat{\mu}^\pi_{a,\text{HT}})$, we need to invoke some additional assumptions. In Appendix \ref{sec_partial_ht}, we discuss an approach to partially identify $\Var(\hat{\mu}^\pi_{a,\text{HT}})$ in completely randomized experiments, assuming a form of Lipschitz continuity for the potential outcomes.  

\subsection{Variance estimation under stratified interference}
\label{sec_stratified}

To point identify the variance of $\hat{\mu}^\pi_{a,\text{HT}}$, we need stronger restrictions on the pattern of interference.
To this end, we extend the assumption of stratified interference \citep[e.g.,][]{hudgens2008toward, liu2014large, imai2021causal}, which is often used to identify variances of estimated treatment effects in the presence of interference, to generalized network experiments. 
In particular, we assume that the potential outcomes of a unit in the target population depend on the treatments assigned to the intervention units in its cluster only through the assignment of its key-intervention unit and the proportion of treated intervention units in the cluster.
\begin{assumption}[Stratified interference] \normalfont \singlespacing
   For each unit $j \in \mathcal{S}_k$, if $\bm{a}, \bm{a}' \in \{0,1\}^{n_k}$ are such that $a_{i^*} = a'_{i^*}$ and $\bm{a}^\top\bm{1} = \bm{a}'^\top\bm{1}$, then $Y_{kj}(\bm{a}) = Y_{kj}(\bm{a}')$.\label{assump_stratified}
\end{assumption} 
In addition, we impose a mild design restriction that within each cluster $k$, we treat a fixed proportion $p_k$ of intervention units. Complete and stratified randomized experiments satisfy this restriction. More generally, this restriction is also satisfied by designs that allow differential assignment probabilities on each assignment vector having a proportion $p_k$ of treated units.  
\begin{assumption}[Fixed proportion treated] \normalfont \singlespacing
 For cluster $k \in \{1,2,...,K\}$, $\frac{\bm{A}^\top_k \bm{1}}{|\mathcal{I}_k|} = p_k$ for some fixed $p_k \in (0,1)$.
 \label{assump_fixedprop}
\end{assumption}
This assumption is satisfied in the school conflict experiment with $p_k = 0.5$ for all $k$. Under Assumptions~\ref{assump_stratified}~and~\ref{assump_fixedprop}, we can write $Y_{kj}(\bm{a}) = Y_{kj}(a_{i^*},p_k)$ for all $\bm{a}$ such that $\frac{\bm{a}^\top \bm{1}}{n_k} = p_k$.
Although existing literature provides variance estimators for standard experiments under Assumptions \ref{assump_stratified} and \ref{assump_fixedprop}, the direct application of these results to generalized network experiments is not evident due to the dependence between the eligible and ineligible units. For instance, two ineligible units $j,j' \in \mathcal{S}_k$ may share the same key-intervention unit $i^*$, introducing additional dependence between their outcomes, even under Assumptions \ref{assump_stratified} and \ref{assump_fixedprop}.

To this end, let us further denote $\mathbbm{1}(j \leftarrow i)$ as an indicator variable that equals one if intervention unit $i$ is the key-intervention unit of unit $j$, and equals zero otherwise. For intervention unit $i$ in cluster $k$ and $a \in \{0,1\}$, we define the \textit{pooled potential outcome} $\Tilde{Y}_{ki}(a,p_k) = \sum_{j \in \mathcal{S}_k}\mathbbm{1}(j \leftarrow i) Y_{kj}(a,p_k)$. In other words, $\Tilde{Y}_{ki}(a,p_k)$ sums up the potential outcomes of all the units in the target population whose key-intervention unit is $i$. Accordingly, we denote the pooled observed outcome as $\tilde{Y}^{\text{obs}}_{ki} = \tilde{Y}_{ki}(A_i, p_k)$. Under Assumptions~\ref{assump_partial}--\ref{assump_fixedprop}, Theorem \ref{thm_var1} provides a closed form expression of the variance of $\hat{\mu}^\pi_{a,\text{HT}}$ in terms of the pooled potential outcomes. 
\begin{theorem}[Variance under stratified interference] \normalfont \singlespacing
Under Assumptions~\ref{assump_partial}--\ref{assump_fixedprop}, 
\begin{align}
\Var(\hat{\mu}^\pi_{a,\text{HT}}) = \frac{1}{K^2}\sum_{k=1}^{K}\frac{1}{|\mathcal{S}_k|^2}\left\{\sum_{i=1}^{n_k}c_{i,a} \tilde{Y}^2_{ki}(a,p_k) + \mathop{\sum\sum}_{i \neq i'}d_{ii',a}\tilde{Y}_{ki}(a,p_k)\tilde{Y}_{ki'}(a,p_k)\right\}\label{eq_quadform}
\end{align}
for $a \in \{0,1\}$, where
$$\begin{aligned}
  c_{i,a} = & \frac{1}{\pi^2_k(A_{ki} = a)}\sum_{\bm{s}}\frac{\pi^2_k(A_{ki} = a, \bm{A}_{k(-i)} = \bm{s})}{f_k(A_{ki} = a, \bm{A}_{k(-i)} = \bm{s})} - 1, \\
  d_{ii',a} = & \frac{1}{\pi_k(A_{ki} = a)\pi_k(A_{ki'} = a)}\sum_{\bm{s}}\frac{\pi^2_k(A_{ki} = a, A_{ki'} = a, A_{k(-i,i')} = \bm{s})} {f_k(A_{ki} = a, A_{ki'} = a, A_{k(-i,i')} = \bm{s})} - 1.
\end{aligned}
$$ \label{thm_var1}
\end{theorem}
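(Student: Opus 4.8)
The plan is to exploit the pooling device that stratified interference makes available, reducing the variance computation to that of a single Horvitz--Thompson sum indexed by the intervention units. First I would invoke Assumptions~\ref{assump_stratified} and~\ref{assump_fixedprop} to write $Y_{kj}(\bm{a}) = Y_{kj}(a_{i^*},p_k)$ for every admissible $\bm{a}$ with $\bm{a}^\top\bm{1}/n_k = p_k$, so that each target unit's potential outcome no longer varies with the off-$i^*$ coordinates of $\bm{a}$. Since the weight $\mathbbm{1}(A_{ki^*}=a)\,\pi_k(\bm{A}_{k(-i^*)}\mid A_{ki^*}=a)/f_k(\bm{A}_k)$ attached to $j$ in $\hat{\mu}^\pi_{a,\text{HT}}$ depends on $j$ only through $i^*=i^*(j)$, I can group the inner sum $\sum_{j\in\mathcal{S}_k}$ by key-intervention unit and obtain
\[
\hat{\mu}^\pi_{a,\text{HT}} = \frac{1}{K}\sum_{k=1}^K \frac{1}{|\mathcal{S}_k|}\sum_{i=1}^{n_k}\mathbbm{1}(A_{ki}=a)\frac{\pi_k(\bm{A}_{k(-i)}\mid A_{ki}=a)}{f_k(\bm{A}_k)}\,\tilde{Y}^{\text{obs}}_{ki},
\]
because the pooled observed outcome $\tilde{Y}^{\text{obs}}_{ki}$ equals $\tilde{Y}_{ki}(a,p_k)$ precisely on the event $\{A_{ki}=a\}$ enforced by the indicator, and both sides vanish off that event. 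This puts the estimator in the form of a Horvitz--Thompson sum over the intervention units with the pooled outcomes in place of the original ones.

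Given this reformulation, I would compute the variance directly. Assumption~\ref{assump_indep} makes the per-cluster contributions independent, so $\Var(\hat{\mu}^\pi_{a,\text{HT}}) = K^{-2}\sum_k |\mathcal{S}_k|^{-2}\,\Var\big(\sum_i W_{ki}\tilde{Y}^{\text{obs}}_{ki}\big)$ with $W_{ki}$ the displayed weight, and it remains to expand the within-cluster variance into its diagonal and off-diagonal parts. For the diagonal I would evaluate $\E[W_{ki}\tilde{Y}^{\text{obs}}_{ki}]$ and $\E[(W_{ki}\tilde{Y}^{\text{obs}}_{ki})^2]$: the indicator forces $A_{ki}=a$, one power of $f_k(\bm{A}_k)$ in the denominator survives against the assignment probability to produce the ratio $\pi_k^2/f_k$, and $\pi_k(\cdot\mid A_{ki}=a)=\pi_k(A_{ki}=a,\cdot)/\pi_k(A_{ki}=a)$ supplies the normalization $\pi_k^{-2}(A_{ki}=a)$, giving exactly $c_{i,a}\,\tilde{Y}^2_{ki}(a,p_k)$ after subtracting the squared mean. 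For $i\neq i'$ I would compute $\E[W_{ki}W_{ki'}\tilde{Y}^{\text{obs}}_{ki}\tilde{Y}^{\text{obs}}_{ki'}]$: the two indicators jointly enforce $A_{ki}=A_{ki'}=a$, the weight product becomes $\pi_k^2(\bm{A}_k)/\{f_k^2(\bm{A}_k)\pi_k(A_{ki}=a)\pi_k(A_{ki'}=a)\}$, and subtracting the product of means yields $d_{ii',a}\,\tilde{Y}_{ki}(a,p_k)\tilde{Y}_{ki'}(a,p_k)$. Collecting the two pieces gives the quadratic form in~\eqref{eq_quadform}.

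An essentially equivalent route starts from Theorem~\ref{thm_generalvar}, substitutes $Y_{kj}(\bm{a})=Y_{kj}(a_{i^*},p_k)$ into $\Lambda_{1,k,j}$ and $\Lambda_{2,k,j,j'}$, and re-indexes. Here $\Lambda_{1,k,j}$ collapses to $c_{i^*(j),a}Y^2_{kj}(a,p_k)$, while $\Lambda_{2,k,j,j'}$ splits by cases: when $i^*(j)=i^*(j')$ the expression degenerates, since the two conditioning events coincide and the double index removal becomes a single one, so its coefficient becomes $c_{i^*(j),a}$ rather than $d$; when $i^*(j)\neq i^*(j')$ it gives $d_{i^*(j),i^*(j'),a}$. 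Regrouping $\sum_{j\in\mathcal{S}_k}$ and $\sum\sum_{j\neq j'}$ by key-intervention unit then completes the square, with the diagonal and same-key cross terms assembling $c_{i,a}\tilde{Y}^2_{ki}(a,p_k)$ and the distinct-key cross terms assembling $d_{ii',a}\tilde{Y}_{ki}(a,p_k)\tilde{Y}_{ki'}(a,p_k)$.

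I expect the main obstacle to be the combinatorial bookkeeping of the pooling rather than any single calculation: one must check that summing the potential outcomes of all $j$ sharing key-intervention unit $i$ reconstitutes $\tilde{Y}^2_{ki}$ from the diagonal plus same-key cross contributions, and — on the Theorem~\ref{thm_generalvar} route — correctly handle the degenerate collapse of $\Lambda_{2,k,j,j'}$ when two target units share a key-intervention unit, which is exactly what converts its coefficient from $d$ into $c$. The pooled reformulation of $\hat{\mu}^\pi_{a,\text{HT}}$ in the first paragraph avoids this degeneracy entirely, which is why I would take it as the primary argument.
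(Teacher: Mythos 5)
Your primary route is correct and is essentially the paper's own proof: the paper likewise uses Assumptions~\ref{assump_stratified} and~\ref{assump_fixedprop} to rewrite $\hat{\mu}^\pi_{a,\text{HT}}$ as a Horvitz--Thompson sum over intervention units weighted by the pooled outcomes $\tilde{Y}_{ki}(a,p_k)$, then applies Assumption~\ref{assump_indep} and expands the within-cluster variance into diagonal terms (yielding $c_{i,a}$) and off-diagonal covariance terms (yielding $d_{ii',a}$). The moment calculations you sketch for $\E[W_{ki}\tilde{Y}^{\text{obs}}_{ki}]$, $\E[(W_{ki}\tilde{Y}^{\text{obs}}_{ki})^2]$, and $\E[W_{ki}W_{ki'}\tilde{Y}^{\text{obs}}_{ki}\tilde{Y}^{\text{obs}}_{ki'}]$ check out against the stated coefficients.
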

An unbiased estimator of this variance can be obtained by considering the Horvitz-Thompson estimator of each term in Equation~\eqref{eq_quadform}.
\begin{align*}
    \widehat{\Var}(\hat{\mu}^\pi_{a,\text{HT}}) = \frac{1}{K^2}\sum_{k=1}^{K}\frac{1}{|\mathcal{S}_k|^2}\left\{\sum_{i=1}^{n_k}\frac{\mathbbm{1}(A_{ki} = a)}{f_k(A_{ki} = a)}c_{i,a}\tilde{Y}_{ki}^2 + \mathop{\sum\sum}_{i \neq i'} \frac{\mathbbm{1}(A_{ki} = a, A_{ki'} = a)}{f_k(A_{ki} = a,A_{ki'} = a)} d_{ii',a}\tilde{Y}_{ki}\tilde{Y}_{ki'}\right\}
\end{align*}
Using the stratified interference assumption, we can also obtain the variance of the Horvitz-Thompson estimator of the direct effect.
\begin{theorem}[Variance of the direct effect estimator] \normalfont \singlespacing
Under Assumptions~\ref{assump_partial}--\ref{assump_fixedprop},
\begin{align*}
    \Var(\widehat{\text{DE}}^\pi_{\text{HT}}) = & \Var(\hat{\mu}^\pi_{1,\text{HT}}) + \Var(\hat{\mu}^\pi_{0,\text{HT}}) \nonumber \\ 
    &- 2\frac{1}{K^2}\sum_{k=1}^{K}\frac{1}{|\mathcal{S}_k|^2}\left[\mathop{\sum\sum}_{i \neq i'}g_{ii'}\tilde{Y}_{ki}(1,p_k)\tilde{Y}_{ki'}(0,p_k) - \sum_{i = 1}^{n_k} \tilde{Y}_{ki}(1,p_k)\tilde{Y}_{ki}(0,p_k)\right],
\end{align*}
where $g_{ii'} = \frac{1}{\pi_k(A_{ki} = 1)\pi_k(A_{ki'} = 0)}\sum_{\bm{s}}\frac{\pi^2_k(A_{ki} = 1, A_{ki'} = 0, A_{k(-i,i')} = \bm{s})}{f_k(A_{ki} = 1, A_{ki'} = 0, A_{k(-i,i')} = \bm{s})} - 1.$ \label{thm_varDE}
\end{theorem}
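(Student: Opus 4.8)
The plan is to start from the algebraic identity
$\Var(\widehat{\text{DE}}^\pi_{\text{HT}}) = \Var(\hat{\mu}^\pi_{1,\text{HT}}) + \Var(\hat{\mu}^\pi_{0,\text{HT}}) - 2\,\Cov(\hat{\mu}^\pi_{1,\text{HT}}, \hat{\mu}^\pi_{0,\text{HT}})$, which follows immediately from $\widehat{\text{DE}}^\pi_{\text{HT}} = \hat{\mu}^\pi_{1,\text{HT}} - \hat{\mu}^\pi_{0,\text{HT}}$ and the bilinearity of the covariance. The two variance terms are already supplied in closed form by Theorem~\ref{thm_var1}, so the whole task reduces to evaluating the covariance and matching it to the bracketed expression in the statement.

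For the covariance, I would first invoke cross-cluster independence (Assumption~\ref{assump_indep}) to kill all between-cluster contributions, writing $\Cov(\hat{\mu}^\pi_{1,\text{HT}}, \hat{\mu}^\pi_{0,\text{HT}}) = \frac{1}{K^2}\sum_k \frac{1}{|\mathcal{S}_k|^2}\,\Cov(\hat{\mu}^\pi_{1,\text{HT},k}, \hat{\mu}^\pi_{0,\text{HT},k})$, where $\hat{\mu}^\pi_{a,\text{HT},k}$ denotes the cluster-$k$ summand. Within a cluster I would rewrite each estimator via pooled potential outcomes: under Assumptions~\ref{assump_stratified}--\ref{assump_fixedprop}, $Y^{\text{obs}}_{kj} = Y_{kj}(A_{ki^*},p_k)$, and the weight $\mathbbm{1}(A_{ki^*}=a)\,\pi_k(\bm{A}_{k(-i^*)}\mid A_{ki^*}=a)/f_k(\bm{A}_k)$ depends on unit $j$ only through its key-intervention unit $i^*(j)$. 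Grouping units by their key-intervention unit collapses the sum over $\mathcal{S}_k$ into a sum over intervention units, giving $\hat{\mu}^\pi_{a,\text{HT},k} = \frac{1}{|\mathcal{S}_k|}\sum_{i=1}^{n_k} W_{ki,a}\,\tilde{Y}_{ki}(a,p_k)$, where I write $W_{ki,a} := \mathbbm{1}(A_{ki}=a)\,\pi_k(\bm{A}_{k(-i)}\mid A_{ki}=a)/f_k(\bm{A}_k)$ for brevity.

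Since the potential outcomes are fixed, the cluster covariance reduces to $\sum_{i,i'}\tilde{Y}_{ki}(1,p_k)\tilde{Y}_{ki'}(0,p_k)\,\Cov(W_{ki,1},W_{ki',0})$, so everything hinges on the weight covariances. I would first check $\E[W_{ki,a}]=1$ by summing $W_{ki,a}f_k(\bm{a})$ over $\bm{a}$, which telescopes to $\sum_{\bm{s}}\pi_k(\bm{A}_{k(-i)}=\bm{s}\mid A_{ki}=a)=1$. The two cases are then: (i) for $i=i'$, the indicators $\mathbbm{1}(A_{ki}=1)$ and $\mathbbm{1}(A_{ki}=0)$ are mutually exclusive, so $\E[W_{ki,1}W_{ki,0}]=0$ and the covariance equals $-1$; (ii) for $i\neq i'$, substituting $\pi_k(\bm{a}_{(-i)}\mid A_{ki}=1)=\pi_k(\bm{a})/\pi_k(A_{ki}=1)$ and its analogue for $i'$ turns $\E[W_{ki,1}W_{ki',0}]$ into exactly $g_{ii'}+1$, so the covariance equals $g_{ii'}$. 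Assembling, the diagonal contributes $-\sum_i\tilde{Y}_{ki}(1,p_k)\tilde{Y}_{ki}(0,p_k)$ and the off-diagonal contributes $\sum_{i\neq i'}g_{ii'}\tilde{Y}_{ki}(1,p_k)\tilde{Y}_{ki'}(0,p_k)$, reproducing the bracketed term, and $-2$ times this yields the claim.

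The main obstacle is the clean reduction in case (ii): one must manipulate the two conditional distributions $\pi_k(\cdot\mid A_{ki}=1)$ and $\pi_k(\cdot\mid A_{ki'}=0)$ carried by $W_{ki,1}$ and $W_{ki',0}$ so that their product, after dividing by $f_k(\bm{A}_k)^2$ and re-weighting by $f_k$, collapses into the single-sum form $\sum_{\bm{s}}\pi_k^2(A_{ki}=1,A_{ki'}=0,\bm{A}_{k(-i,i')}=\bm{s})/f_k(\cdots)$ that defines $g_{ii'}$. The diagonal $-1$ correction, coming from the fact that one intervention unit cannot be simultaneously treated and untreated, is precisely why the final expression carries the separate $-\sum_i\tilde{Y}_{ki}(1,p_k)\tilde{Y}_{ki}(0,p_k)$ term rather than an $i=i'$ instance of $g_{ii'}$.
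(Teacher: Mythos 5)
Your proposal is correct and follows essentially the same route as the paper: decompose $\Var(\widehat{\text{DE}}^\pi_{\text{HT}})$ into the two marginal variances minus twice the covariance, pool the outcomes by key-intervention unit, and evaluate the covariance term by term, with the mutually exclusive indicators $\mathbbm{1}(A_{ki}=1)$ and $\mathbbm{1}(A_{ki}=0)$ producing the diagonal $-1$ and the $i\neq i'$ terms collapsing to $g_{ii'}$ exactly as in the paper's $G_{1k}$/$G_{2k}$ split. (Only a trivial bookkeeping slip: once you define $\hat{\mu}^\pi_{a,\text{HT},k}$ to include the $1/|\mathcal{S}_k|$ factor, the cross-cluster decomposition should carry $1/K^2$ alone rather than an additional $1/|\mathcal{S}_k|^2$; your final assembled expression is nonetheless the correct one.)
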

Theorem~\ref{thm_varDE} implies that even under stratified interference, the variance of the direct effect estimator $\Var(\widehat{\text{DE}}^\pi_{\text{HT}})$ is not identifiable because the cross product of the pooled potential outcome terms $\tilde{Y}_{ki}(1,p_k)\tilde{Y}_{ki}(0,p_k)$ cannot be observed for any $i \in \{1,...,n_k\}$. 
However, we can use the following upper bound of the variance to obtain a conservative estimator of  $\Var(\widehat{\text{DE}}^\pi_{\text{HT}})$,
\begin{align*}
\Var(\widehat{\text{DE}}^\pi_{\text{HT}}) \leq & \Var(\hat{\mu}^\pi_{1,\text{HT}}) + \Var(\hat{\mu}^\pi_{0,\text{HT}}) \\ 
    & - 2\frac{1}{K^2}\sum_{k=1}^{K}\frac{1}{|\mathcal{S}_k|^2}\left[\mathop{\sum\sum}_{i \neq i'}g_{ii'}\tilde{Y}_{ki}(1,p_k)\tilde{Y}_{ki'}(0,p_k) - \frac{1}{2}\sum_{i = 1}^{n_k} \{\tilde{Y}^2_{ki}(1,p_k) + \tilde{Y}^2_{ki}(0,p_k)\}\right].
\end{align*}
This upper bound can be estimated without bias as,
\begin{align*}
     \widehat{\Var}(\widehat{\text{DE}}^\pi_{\text{HT}}) = & \widehat{\Var}(\hat{\mu}^\pi_{1,\text{HT}}) + \widehat{\Var}(\hat{\mu}^\pi_{0,\text{HT}}) \\ 
    &- 2\frac{1}{K^2}\sum_{k=1}^{K}\frac{1}{|\mathcal{S}_k|^2}\Bigg[\mathop{\sum\sum}_{i \neq i'}\frac{\mathbbm{1}(A_{ki} = 1, A_{ki'} = 0)}{f(A_{ki} = 1,A_{ki'} = 0)}g_{ii'}\tilde{Y}_{ki}\tilde{Y}_{ki'}  \\
    & \hspace{1.2in} - \frac{1}{2}\sum_{i = 1}^{n_k} \left\{\frac{\mathbbm{1}(A_{ki} = 1)}{f_k(A_{ki} = 1)}\tilde{Y}^2_{ki} + \frac{\mathbbm{1}(A_{ki} = 0)}{f_k(A_{ki} = 0)}\tilde{Y}^2_{ki} \right\}\Bigg].
\end{align*}
Moreover, this variance estimator is unbiased for ${\Var}(\widehat{\text{DE}}^\pi_{\text{HT}})$ if $\tilde{Y}_{ki}(1,p_k) = \tilde{Y}_{ki}(0,p_k)$ for all $i$, i.e., the pooled potential outcomes for intervention unit $i$ under treatment and control are the same. This condition is analogous to Fisher's sharp null hypothesis of zero unit-level causal effect (see, e.g., \citealt{imbens2015causal}, Chapter 5).

In the special case of a completely randomized experiment with $\pi_k(\cdot) = f_k(\cdot)$, the exact variances of $\hat{\mu}^\pi_{a,\text{HT}}$ and $\widehat{\text{DE}}^\pi_{\text{HT}}$
and their estimators can be simplified, resembling the standard Neymanian variance estimators for the population mean and average treatment effect (Proposition \ref{prop_htvar1} in the Appendix). 
Under stratified interference, we also obtain the exact variance of the indirect effects. Proposition \ref{prop_varIE} in the Appendix provides exact closed-form expressions of this variance and shows that, like $\Var(\hat{\mu}^\pi_{a,\text{HT}})$, $\Var(\widehat{\text{IE}}^{\pi, \tilde{\pi}}_{a,\text{HT}})$ is a quadratic form in the pooled potential outcomes.
Thus, this variance can be estimated analogously using a Horvitz-Thompson estimator.

\subsection{Variance estimation under additive interference}
\label{sec_additive}

The stratified interference assumption (as stated in Assumption~\ref{assump_stratified}) inherently assumes a uniformity in spillover effects; i.e., it presumes that treated intervention units influence the outcome of unit $j$ to the same extent. This assumption can be overly restrictive in scenarios where unit $j$ is likely to be more influenced by one intervention unit (e.g., a close friend) than by another. 
In view of this, building on the work by \cite{zhang2023individualized}, we propose a more flexible assumption regarding the interference structure within each cluster.
\begin{assumption}[Additive interference] \normalfont \singlespacing
For unit $j\in \mathcal{S}_k$ and $\bm{a}_k = (a_{k1},...,a_{kn_k})^\top\in \{0,1\}^{n_k}$, the potential outcome $Y_{kj}(\bm{a}_k)$ satisfies
$Y_{kj}(\bm{a}_k) = \beta^{(0)}_{kj} + \sum_{i=1}^{n_k}\beta^{(i)}_{kj}a_{ki},$
where $\tilde{\bm{\beta}}_{kj}  = (\beta^{(0)}_{kj},\beta^{(1)}_{kj},...,\beta^{(n_k)}_{kj})^\top$ is a vector of unknown constants. 
\label{assump_additive}
\end{assumption}
Assumption~\ref{assump_additive} posits that the potential outcome of unit $j$ in cluster $k$ is additive in the treatment levels of all the intervention units in cluster $k$.
The vector of coefficients, $\tilde{\bm{\beta}}_{kj}$, is completely arbitrary and may depend on the observed covariates vector, $\bm{X}_k$. Since these coefficients can vary across $i$, Assumption~\ref{assump_additive} accommodates differential and flexible spillover effects of the intervention units on unit $j$'s outcome.
In particular, if $\beta^{(i)}_{kj} = c_{kj}$ for all $i \neq i^*$ and some arbitrary constants $c_{kj}$, this assumption becomes analogous to stratified interference. Finally, Assumption~\ref{assump_additive} can be further generalized to incorporate more complex interference patterns, such as interactions among the treatment levels of the intervention units (see \citealt{zhang2023individualized}).

We now turn to estimating the design-based variances of our proposed estimators under additive interference. For brevity, we focus on the variance of $\hat{\mu}^\pi_{a,\text{HT}}$ ($a \in \{0,1\}$). See Appendix~\ref{appsec_var_additive} in the Appendix for related discussions concerning the variances of the other estimators. 
Proposition~\ref{prop_additive_mu1} provides closed-form expression of the variance of $\hat{\mu}^\pi_{a,\text{HT}}$.
\begin{proposition}[Variance under additive interference] \normalfont \singlespacing
 Under Assumptions~\ref{assump_partial}--\ref{assump_indep} ~and~\ref{assump_additive},  
\begin{align*}
\Var(\hat{\mu}^\pi_{a,\text{HT}}) =   \frac{1}{K^2}\sum_{k = 1}^{K} \frac{1}{|\mathcal{S}_k|^2}\left(\sum_{j \in \mathcal{S}_k} \Lambda_{1,k,j} + \mathop{\sum\sum}_{j \neq j' \in \mathcal{S}_k}\Lambda_{2,k,j,j'} \right),
\end{align*}
where
\begin{align*}
\Lambda_{1,k,j} \ = \ & \sum_{\bm{s}}\frac{\pi^2_{k}(\bm{A}_{k(-i^*)} = \bm{s}\mid A_{ki^*} = a)}{f_k(A_{ki^*} = a, \bm{A}_{k(-i^*)} = \bm{s})}\left\{(1,\bm{a}^\top_{kj})\tilde{\bm{\beta}}_{kj} \right\}^2 - \left\{(1,\bm{\pi}^\top_k(\cdot\mid A_{ki^*} = a))\tilde{\bm{\beta}}_{kj} \right\}^2,
\label{eq_lambda1_add} \\
       \Lambda_{2,k,j,j'} \ = \ &\sum_{\tilde{\bm{s}}}\frac{\mathbbm{\pi}^2_k(A_{ki^{*}} = a, A_{ki^{*'}} = a, \bm{A}_{k(-i^{*},-i^{*'})} = \tilde{\bm{s}}) \left\{(1,\bm{a}^\top_{kjj'})\tilde{\bm{\beta}}_{kj} \right\}\left\{(1,\bm{a}^\top_{kjj'})\tilde{\bm{\beta}}_{kj'} \right\}}{f_k(A_{ki^*} = a, A_{ki^{*'}} = a, \bm{A}_{k(-i^{*},-i^{*'})} = \tilde{\bm{s}})\pi_k(A_{ki^*} = a)\pi_k(A_{ki^{*'}} = a)} \nonumber\\
    &\hspace{0.5cm} - \left\{(1,\bm{\pi}^\top_k(\cdot\mid A_{ki^*} = a))\tilde{\bm{\beta}}_{kj} \right\}\left\{(1,\bm{\pi}^\top_k(\cdot\mid A_{ki^{*'}} = a))\tilde{\bm{\beta}}_{kj'} \right\}. 
\end{align*}
Note that $\bm{a}_{kj}$ is the vector of treatment assignments with $A_{ki^{*}} = a$ and $\bm{A}_{k(-i^{*})} = \bm{s}$; $\bm{a}_{kjj'}$ is the vector of treatment assignments with $A_{ki^{*}} = a, A_{ki^{*'}} = a, \bm{A}_{k(-i^{*},-i^{*'})} = \tilde{\bm{s}}$; $\bm{\pi}_k(\cdot\mid A_{ki^*} = a)$ is the vector of conditional probabilities whose $i$th element is $\pi_k(A_{ki} = 1 \mid A_{ki^*} = a)$.
\label{prop_additive_mu1}
\end{proposition}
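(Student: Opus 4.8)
The plan is to derive the result directly from Theorem~\ref{thm_generalvar}. Since the additive interference assumption (Assumption~\ref{assump_additive}) is imposed on top of Assumptions~\ref{assump_partial}--\ref{assump_indep}, the general variance decomposition
\[
\Var(\hat{\mu}^\pi_{a,\text{HT}}) = \frac{1}{K^2}\sum_{k=1}^{K}\frac{1}{|\mathcal{S}_k|^2}\left(\sum_{j\in\mathcal{S}_k}\Lambda_{1,k,j} + \mathop{\sum\sum}_{j\neq j'\in\mathcal{S}_k}\Lambda_{2,k,j,j'}\right)
\]
from Theorem~\ref{thm_generalvar} continues to hold verbatim, with $\Lambda_{1,k,j}$ and $\Lambda_{2,k,j,j'}$ expressed there in terms of the potential outcomes $Y_{kj}(\bm{a})$. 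It therefore suffices to substitute the additive form $Y_{kj}(\bm{a}) = \beta^{(0)}_{kj} + \sum_{i=1}^{n_k}\beta^{(i)}_{kj}a_{ki} = (1,\bm{a}^\top)\tilde{\bm{\beta}}_{kj}$ into these two expressions and simplify.

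For the quadratic (``second moment'') parts, the substitution is immediate: writing $\bm{a}_{kj}$ for the assignment with $A_{ki^*}=a$ and $\bm{A}_{k(-i^*)}=\bm{s}$, and $\bm{a}_{kjj'}$ for the assignment with $A_{ki^*}=A_{ki^{*'}}=a$ and $\bm{A}_{k(-i^*,-i^{*'})}=\tilde{\bm{s}}$, I would replace $Y^2_{kj}(\cdot)$ by $\{(1,\bm{a}_{kj}^\top)\tilde{\bm{\beta}}_{kj}\}^2$ in $\Lambda_{1,k,j}$ and $Y_{kj}(\cdot)Y_{kj'}(\cdot)$ by $\{(1,\bm{a}_{kjj'}^\top)\tilde{\bm{\beta}}_{kj}\}\{(1,\bm{a}_{kjj'}^\top)\tilde{\bm{\beta}}_{kj'}\}$ in $\Lambda_{2,k,j,j'}$. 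These reproduce the first summands in the claimed $\Lambda_{1,k,j}$ and $\Lambda_{2,k,j,j'}$ of Proposition~\ref{prop_additive_mu1}.

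The only nontrivial step is evaluating the ``first moment'' terms, namely the weighted sums $\sum_{\bm{s}}\pi_k(\bm{A}_{k(-i^*)}=\bm{s}\mid A_{ki^*}=a)\,Y_{kj}(A_{ki^*}=a,\bm{A}_{k(-i^*)}=\bm{s})$ that appear inside the subtracted products of both $\Lambda_{1,k,j}$ and $\Lambda_{2,k,j,j'}$. Here I would invoke linearity of expectation under $\pi_k$: because $Y_{kj}$ is additive in the $a_{ki}$, this weighted sum equals $\beta^{(0)}_{kj} + \sum_{i=1}^{n_k}\beta^{(i)}_{kj}\,\pi_k(A_{ki}=1\mid A_{ki^*}=a)$, where I have used $\sum_{\bm{s}}\pi_k(\bm{A}_{k(-i^*)}=\bm{s}\mid A_{ki^*}=a)\,a_{ki} = \pi_k(A_{ki}=1\mid A_{ki^*}=a)$ for each coordinate $i$, with the $i=i^*$ entry collapsing to $a$ by the conditioning. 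By the definition of the vector $\bm{\pi}_k(\cdot\mid A_{ki^*}=a)$ whose $i$th entry is $\pi_k(A_{ki}=1\mid A_{ki^*}=a)$, this is exactly $(1,\bm{\pi}_k^\top(\cdot\mid A_{ki^*}=a))\tilde{\bm{\beta}}_{kj}$. Applying the same identity with $j'$ (conditioning on $A_{ki^{*'}}=a$) and substituting both into the subtracted products yields the second summands of $\Lambda_{1,k,j}$ and $\Lambda_{2,k,j,j'}$, completing the argument.

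I do not anticipate a genuine obstacle: once Theorem~\ref{thm_generalvar} is in hand, the proof is pure substitution together with the one linearity computation above. The only place requiring care is the bookkeeping of indices --- in particular verifying that the $i^*$ (respectively $i^{*'}$) coordinate of the conditional-probability vector collapses to $a$, and that the two factors of the cross term in $\Lambda_{2,k,j,j'}$ condition on the key-intervention units of $j$ and $j'$ separately. These are routine and involve no approximation.
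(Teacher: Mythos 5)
Your proposal matches the paper's own proof essentially line for line: the paper likewise starts from the decomposition in Theorem~\ref{thm_generalvar}, substitutes $Y_{kj}(\bm{a}) = (1,\bm{a}^\top)\tilde{\bm{\beta}}_{kj}$ into the second-moment terms, and establishes the key identity $\sum_{\bm{s}}\pi_{k}(\bm{A}_{k(-i^*)} = \bm{s}\mid A_{ki^*} = a)\,Y_{kj}(A_{ki^*} = a, \bm{A}_{k(-i^*)} = \bm{s}) = (1,\bm{\pi}^\top_k(\cdot\mid A_{ki^*} = a))\tilde{\bm{\beta}}_{kj}$ by exactly the coordinatewise linearity argument you describe. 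The approach and all substantive steps are correct and identical to the paper's.
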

Proposition~\ref{prop_additive_mu1} shows that the variance of $\hat{\mu}^\pi_{a,\text{HT}}$ can be written as a quadratic function of the coefficients $\{\tilde{\bm{\beta}}_{kj}, j \in \mathcal{S}_k, k \in \{1,2,...,K\}\}$. 
The variances of the estimated direct and indirect effects can also be expressed as similar quadratic forms; see Propositions~\ref{prop_varDE_additive}~and~\ref{prop_additive_IE} in the Appendix. Thus, an estimator of these variances can be obtained by plugging in appropriate estimators of these coefficients. To this end, we consider the following estimator of $\tilde{\bm{\beta}}_{kj}$:
\begin{equation}
    \hat{\tilde{\bm{\beta}}}_{kj} = \E(\tilde{\bm{A}}_k\tilde{\bm{A}}^\top_k)^{-1}\tilde{\bm{A}}_k Y^{\text{obs}}_{kj},
    \label{eq_beta_hat}
\end{equation}
where $\tilde{\bm{A}}_k = (1,A_{k1},...,A_{kn_k})^\top$.  If $\E(\tilde{\bm{A}}_k\tilde{\bm{A}}^\top_k)$ is not invertible, we use the Moore-Penrose pseudoinverse instead. Under Assumption~\ref{assump_additive}, $Y^{\text{obs}}_{kj} = Y_{kj}(\bm{A}_k) = \tilde{\bm{A}}^\top_k\hat{\tilde{\bm{\beta}}}_{kj}$, and in this sense, $\hat{\tilde{\bm{\beta}}}_{kj}$ can be interpreted as an estimator of the population regression coefficient $\E(\tilde{\bm{A}}_k\tilde{\bm{A}}^\top_k)^{-1}\E(\tilde{\bm{A}}_k Y^{\text{obs}}_{kj})$. Moreover, it is straightforward to see that, $\hat{\tilde{\bm{\beta}}}_{kj}$ is design-unbiased for $\tilde{\bm{\beta}}_{kj}$, i.e., $\E(\hat{\tilde{\bm{\beta}}}_{kj}) = \tilde{\bm{\beta}}_{kj}$, where the expectation is taken with respect to the distribution of $\tilde{\bm{A}}_k$. 

Theorem~\ref{thm_conservative_mu1} shows that the variance estimator, derived by plugging in the estimated coefficients is always conservative. This conservative nature extends to variance estimators for other causal quantities. Specifically, Theorems~\ref{thm_conservative_DE}~and~\ref{thm_conservative_IE} in the Appendix show that the variance estimators of the estimated direct and indirect effects (based on $\hat{\tilde{\bm{\beta}}}_{kj}$) are conservative. 
\begin{theorem}[Conservative variance estimator under additive interference] \normalfont \singlespacing
  Let $\widehat{\Var}(\hat{\mu}^\pi_{a,\text{HT}})$ be the estimator of $\Var(\hat{\mu}^\pi_{a,\text{HT}})$ based on $\hat{\tilde{\bm{\beta}}}_{kj}$. Then, under Assumptions~\ref{assump_partial}--\ref{assump_indep} ~and~\ref{assump_additive},
$$\E\{\widehat{\Var}(\hat{\mu}^\pi_{a,\text{HT}})\} \geq \Var(\hat{\mu}^\pi_{a,\text{HT}}).$$
\label{thm_conservative_mu1}
\end{theorem}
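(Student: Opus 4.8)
The plan is to exploit two facts established earlier: by Proposition~\ref{prop_additive_mu1}, the true variance $\Var(\hat{\mu}^\pi_{a,\text{HT}})$ is a \emph{quadratic form} in the additive-interference coefficients $\{\tilde{\bm{\beta}}_{kj}\}$ whose coefficient matrix depends only on the known assignment law $f_k$ and the target stochastic intervention $\pi_k$; and the plug-in estimator $\widehat{\Var}(\hat{\mu}^\pi_{a,\text{HT}})$ is the \emph{same} quadratic form evaluated at the design-unbiased estimates $\{\hat{\tilde{\bm{\beta}}}_{kj}\}$. The bias $\E\{\widehat{\Var}\}-\Var$ will then reduce to the trace of a product of two positive semidefinite matrices, and hence be nonnegative.

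First I would make the quadratic-form structure explicit cluster by cluster, using Assumption~\ref{assump_indep} to discard all cross-cluster terms (the double sums in Proposition~\ref{prop_additive_mu1} already range only within a cluster). Stacking the per-unit coefficient vectors into $\bm{\gamma}_k = (\tilde{\bm{\beta}}_{kj})_{j \in \mathcal{S}_k}$, I would read off $\Lambda_{1,k,j}$ and $\Lambda_{2,k,j,j'}$ to write the cluster-$k$ contribution as $\bm{\gamma}_k^\top \Sigma_k \bm{\gamma}_k$ for a symmetric matrix $\Sigma_k$ determined entirely by $f_k$ and $\pi_k$. The key structural observation is that $\Sigma_k \succeq 0$. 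Under Assumption~\ref{assump_additive} the cluster-$k$ contribution to $\hat{\mu}^\pi_{a,\text{HT}}$ is \emph{linear} in the coefficients: it can be written as $\bm{\psi}_k(\bm{A}_k)^\top \bm{\gamma}_k$, where $\bm{\psi}_k$ collects the Horvitz--Thompson design weights times the augmented assignment vector $\tilde{\bm{A}}_k$. Its design variance is therefore $\bm{\gamma}_k^\top \Cov_{\bm{A}_k}(\bm{\psi}_k)\, \bm{\gamma}_k$, so that $\Sigma_k = \Cov_{\bm{A}_k}(\bm{\psi}_k)$ is manifestly a covariance matrix. Equivalently, the formula of Proposition~\ref{prop_additive_mu1} returns a genuine (nonnegative) variance for every admissible choice of coefficients, which forces the quadratic form to be positive semidefinite.

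Because $\Sigma_k$ involves only the known $f_k$ and $\pi_k$ and is not itself estimated, the plug-in estimator's cluster-$k$ term is exactly $\hat{\bm{\gamma}}_k^\top \Sigma_k \hat{\bm{\gamma}}_k$ with $\hat{\bm{\gamma}}_k = (\hat{\tilde{\bm{\beta}}}_{kj})_{j}$. Invoking the design-unbiasedness $\E(\hat{\tilde{\bm{\beta}}}_{kj}) = \tilde{\bm{\beta}}_{kj}$ established just below Equation~\eqref{eq_beta_hat}, the standard identity for the expectation of a quadratic form gives $\E(\hat{\bm{\gamma}}_k^\top \Sigma_k \hat{\bm{\gamma}}_k) = \bm{\gamma}_k^\top \Sigma_k \bm{\gamma}_k + \operatorname{tr}\{\Sigma_k \Cov(\hat{\bm{\gamma}}_k)\}$. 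Summing over clusters with the weights $1/(K^2 |\mathcal{S}_k|^2)$ then yields $\E\{\widehat{\Var}(\hat{\mu}^\pi_{a,\text{HT}})\} - \Var(\hat{\mu}^\pi_{a,\text{HT}}) = \frac{1}{K^2}\sum_{k}\frac{1}{|\mathcal{S}_k|^2}\operatorname{tr}\{\Sigma_k \Cov(\hat{\bm{\gamma}}_k)\}$.

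The final step is to sign each summand. Both $\Sigma_k$ (shown PSD above) and $\Cov(\hat{\bm{\gamma}}_k)$ (a covariance matrix) are positive semidefinite, and the trace of the product of two PSD matrices is nonnegative; hence every term is $\ge 0$ and the inequality follows. I expect the main obstacle to be the bookkeeping needed to verify that the plug-in reproduces the \emph{same} matrix $\Sigma_k$ and that $\Sigma_k$ is genuinely PSD. The cleanest route is to lean on the representation $\Sigma_k = \Cov_{\bm{A}_k}(\bm{\psi}_k)$, which makes positive semidefiniteness transparent, rather than manipulating the explicit $\Lambda_{1,k,j}$ and $\Lambda_{2,k,j,j'}$ expressions directly, which would otherwise require combining the cross terms and completing a large square.
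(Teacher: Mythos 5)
Your proposal is correct and follows essentially the same route as the paper: both express $\Var(\hat{\mu}^\pi_{a,\text{HT}})$ as a positive semidefinite quadratic form in the stacked coefficients (PSD because it is the covariance of the design-weight vector $\bm{\Psi}_k$), note that the plug-in estimator evaluates the same form at the design-unbiased $\hat{\tilde{\bm{\beta}}}_{kj}$, and conclude the upward bias. The only cosmetic difference is the last step — the paper invokes Jensen's inequality for the convex quadratic form, while you compute the bias explicitly as $\operatorname{tr}\{\Sigma_k \Cov(\hat{\bm{\gamma}}_k)\} \geq 0$ — which are equivalent arguments.
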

Recall that, under stratified interference, we can obtain an unbiased estimator of $\Var(\hat{\mu}^\pi_{a,\text{HT}})$ (see Theorem~\ref{thm_var1}). In contrast, under additive interference, we lose unbiasedness and end up with an upwardly biased variance estimator. Thus, the additional flexibility in the interference pattern provided by additive interference comes at the cost of a more conservative variance estimation.

The additive interference assumption and the resulting estimation strategy also flexibly incorporate design-based inference for more general causal quantities. To this end, we consider the general target population average potential outcome $\tau^\pi$ in Equation~\eqref{eq:general_est} and its corresponding estimator $\hat{\tau}^{\pi}_{\text{HT}}$. 
Theorem \ref{thm_general_additive} provides a closed-form expression of the variance of $\hat{\tau}^{\pi}_{\text{HT}}$ and shows that the resulting estimated variance based on $\hat{\tilde{{\bm{\beta}}}}_{kj}$ is conservative.
\begin{theorem}[Variance of a general estimator]\normalfont \singlespacing
Consider the Horvitz-Thompson estimator $\hat{\tau}^{\pi}_{\text{HT}}$ of $\tau^\pi$, defined in Equation~\eqref{eq:general_est}. Under Assumptions~\ref{assump_partial}--\ref{assump_indep} ~and~\ref{assump_additive},  
\begin{align*}
\Var(\hat{\tau}^\pi_{\text{HT}}) =   \frac{1}{K^2}\sum_{k = 1}^{K} \frac{1}{|\mathcal{S}_k|^2}\left[\sum_{\bm{a} \in \text{Supp}(f_k)}f_k(\bm{a}) \{1 - f_k(\bm{a})\}\zeta^2_k(\bm{a}) - \mathop{\sum\sum}_{\bm{a} \neq \bm{a}' \in \text{Supp}(f_k)}f_k(\bm{a})f_k(\bm{a}')\zeta_k(\bm{a})\zeta_k(\bm{a}') \right],
\end{align*}
where $\zeta_k(\bm{a}) = \sum_{j \in \mathcal{S}_k}\frac{\pi_{kj}(\bm{a})}{f_k(\bm{a})}(1,\bm{a}^\top)\tilde{\bm{\beta}}_{kj}$. Moreover, let $\widehat{\Var}(\hat{\tau}^\pi_{\text{HT}})$ be the estimator of $\Var(\hat{\tau}^\pi_{\text{HT}})$ based on $\hat{\tilde{\bm{\beta}}}_{kj}$. Then,
$$\E\{\widehat{\Var}(\hat{\tau}^\pi_{\text{HT}})\} \geq \Var(\hat{\tau}^\pi_{\text{HT}}).$$
\label{thm_general_additive}
\end{theorem}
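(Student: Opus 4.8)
The plan is to exploit the linearity imposed by additive interference (Assumption~\ref{assump_additive}) to collapse each cluster's contribution to $\hat{\tau}^\pi_{\text{HT}}$ into a single scalar random variable, compute its variance by elementary means, and then recognize the resulting expression as a quadratic form in the regression coefficients that is nonnegative by construction, which yields conservativeness almost immediately.

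First I would write the cluster-$k$ summand as $T_k = |\mathcal{S}_k|^{-1}\sum_{j\in\mathcal{S}_k}\frac{\pi_{kj}(\bm{A}_k)}{f_k(\bm{A}_k)}Y^{\text{obs}}_{kj}$. Under Assumption~\ref{assump_additive}, $Y^{\text{obs}}_{kj}=(1,\bm{A}^\top_k)\tilde{\bm{\beta}}_{kj}$, so on the event $\{\bm{A}_k=\bm{a}\}$ one has $|\mathcal{S}_k|\,T_k=\zeta_k(\bm{a})$ with $\zeta_k$ exactly as in the statement; hence $|\mathcal{S}_k|\,T_k=\zeta_k(\bm{A}_k)$ is a deterministic function of $\bm{A}_k$. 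By Assumption~\ref{assump_indep} the $\bm{A}_k$ are independent across clusters, giving $\Var(\hat{\tau}^\pi_{\text{HT}})=K^{-2}\sum_k|\mathcal{S}_k|^{-2}\Var\{\zeta_k(\bm{A}_k)\}$, so it remains to evaluate each term. For this I would use the representation $\zeta_k(\bm{A}_k)=\sum_{\bm{a}\in\text{Supp}(f_k)}\mathbbm{1}(\bm{A}_k=\bm{a})\zeta_k(\bm{a})$ together with the fact that the indicators are mutually exclusive with $\E\{\mathbbm{1}(\bm{A}_k=\bm{a})\}=f_k(\bm{a})$. This yields $\E\{\zeta_k(\bm{A}_k)^2\}=\sum_{\bm{a}}f_k(\bm{a})\zeta^2_k(\bm{a})$ and $\E\{\zeta_k(\bm{A}_k)\}=\sum_{\bm{a}}f_k(\bm{a})\zeta_k(\bm{a})$; subtracting the square of the latter and separating the diagonal from the off-diagonal terms reproduces the stated formula. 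This step is entirely mechanical.

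The substantive part is conservativeness. Stacking the coefficients into $\bm{\beta}=(\tilde{\bm{\beta}}_{kj})_{k,j}\in\mathbb{R}^{d}$ with $d=\sum_k|\mathcal{S}_k|(n_k+1)$, note that each $\zeta_k(\bm{a})$ is linear in $\bm{\beta}$ and the per-cluster variance is quadratic in the $\zeta_k(\bm{a})$, so the whole expression is a homogeneous quadratic form $\Var(\hat{\tau}^\pi_{\text{HT}})=\bm{\beta}^\top M\bm{\beta}$, where $M$ depends only on the design quantities $f_k$ and $\pi_{kj}$ and not on $\bm{\beta}$. The plug-in estimator is the same form $\widehat{\Var}(\hat{\tau}^\pi_{\text{HT}})=\hat{\bm{\beta}}^\top M\hat{\bm{\beta}}$ with $\hat{\bm{\beta}}=(\hat{\tilde{\bm{\beta}}}_{kj})_{k,j}$. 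The key observation is that $\bm{x}\mapsto\bm{x}^\top M\bm{x}$ is nonnegative on all of $\mathbb{R}^{d}$: under additive interference every $\bm{\beta}\in\mathbb{R}^{d}$ encodes an admissible configuration of potential outcomes via $Y_{kj}(\bm{a}_k)=(1,\bm{a}^\top_k)\tilde{\bm{\beta}}_{kj}$, and for any such configuration $\bm{\beta}^\top M\bm{\beta}$ equals a genuine variance, hence is $\geq0$. Combining this with the unbiasedness $\E(\hat{\tilde{\bm{\beta}}}_{kj})=\tilde{\bm{\beta}}_{kj}$ recorded after Equation~\eqref{eq_beta_hat} and with homogeneity of the form, I would expand $\hat{\bm{\beta}}=\bm{\beta}+(\hat{\bm{\beta}}-\bm{\beta})$ so that the cross terms vanish in expectation, giving $\E\{\widehat{\Var}(\hat{\tau}^\pi_{\text{HT}})\}=\bm{\beta}^\top M\bm{\beta}+\E\{(\hat{\bm{\beta}}-\bm{\beta})^\top M(\hat{\bm{\beta}}-\bm{\beta})\}$; the second term is nonnegative for every realization because the form is nonnegative everywhere, which is the claimed inequality.

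The variance computation is routine, so the crux is the nonnegativity argument. The clean idea that drives conservativeness is that $M$ must represent a nonnegative form purely because the formula is a variance for every admissible $\bm{\beta}$, which removes any need to analyze the estimator's functional form separately. The two points requiring care are that the matrix $M$ governing the target variance and the estimator's quadratic form are literally identical, and that the unbiasedness $\E(\hat{\tilde{\bm{\beta}}}_{kj})=\tilde{\bm{\beta}}_{kj}$ continues to hold when $\E(\tilde{\bm{A}}_k\tilde{\bm{A}}^\top_k)$ is singular and the Moore--Penrose pseudoinverse is used in Equation~\eqref{eq_beta_hat}; both are used above and should be verified explicitly.
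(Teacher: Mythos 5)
Your proposal is correct and follows essentially the same route as the paper: the variance formula comes from writing the cluster-$k$ contribution as $\sum_{\bm{a}}\mathbbm{1}(\bm{A}_k=\bm{a})\zeta_k(\bm{a})$ and using the variance/covariance of the mutually exclusive indicators, and your bias--variance expansion of the positive semi-definite quadratic form $\bm{\beta}^\top M\bm{\beta}$ is exactly the Jensen's-inequality argument the paper invokes (via the proof of Theorem~\ref{thm_conservative_mu1}), with the only cosmetic difference that you justify $M\succeq 0$ by noting the form is a genuine variance for every admissible $\bm{\beta}$ while the paper identifies $M$ directly as a covariance matrix of the weight vectors $\bm{\psi}_{kj}$. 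The caveat you flag about unbiasedness of $\hat{\tilde{\bm{\beta}}}_{kj}$ under a singular $\E(\tilde{\bm{A}}_k\tilde{\bm{A}}_k^\top)$ is a real subtlety that the paper also leaves unaddressed, so it does not distinguish your argument from theirs.
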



\section{Simulation study}
\label{sec_simulation}

We now evaluate the finite-sample performance of the proposed Horvitz-Thompson and H\'{a}jek estimators in a simulation study. 

\subsection{Setup}

In this study, the target population $\mathcal{S}$ is the population of non-intervention units, corresponding to a bipartite randomized experiment.
We consider two different numbers of clusters, $K = 10, 50$. 
For each $K$, we allocate an equal number of intervention units to each cluster, setting it at $n_k = 32$. 
Similarly, we posit that the number of non-intervention units in each cluster is uniform, but it can take one of four possible values, namely $m_k \in \{50, 100, 250, 500 \}$.

For each intervention unit, we generate two continuous covariates independently from the standard Normal distribution, $W_1, W_2 \sim \mathcal{N}(0,1)$.
We also incorporate a binary covariate, $W_3$, which equals one for exactly half of the units within each cluster.
These covariates serve as the basis for building the potential outcomes using the following two models,
\begin{itemize}
\item[{\bf M1:}] $Y_{kj}(A_{ki^*} = a, A_{k(-i*)} = \bm{s}) = 5 - 2.5a - 1.5p_k + W_{1,ki^*} - 0.5W_{2,ki^*} + 3W_{3,ki^*} +  ap_k$, 
\item [\bf{M2:}] $Y_{kj}(A_{ki^*} = a, A_{k(-i*)} = \bm{s}) = 5 - 2.5a - 1.5p_k + W_{1,ki^*} - 0.5W_{2,ki^*} + 3W_{3,ki^*} + ap_k + 2(W_{1,ki^*} + W_{2,ki^*})a$, 
\end{itemize}
where $p_k (=0.5)$ is the proportion of treated units in cluster $k$.
Finally, in each cluster, we set the actual intervention $f_k(\cdot)$ to correspond to complete randomization with equal allocation and consider two stochastic interventions: $\pi^{(1)}_{k}(\cdot)$, which corresponds to complete randomization with equal allocation (i.e., $\pi^{(1)}_{k}(\cdot) = f_{k}(\cdot)$), and $\pi^{(2)}_{k}(\cdot)$ which corresponds to stratified randomization with equal allocation within strata defined by $W_3$.
\subsection{Results}

Figures~\ref{fig:simu_mu1}~and~\ref{fig:simu_DE} display the bias, standard error (SE), and coverage of the 95\% confidence intervals for the Horvitz-Thompson and H\'{a}jek estimators of $\mu^\pi_1$ and $\text{DE}^\pi$, under stochastic intervention, $\pi^{(1)}(\cdot)$ and $\pi^{(2)}(\cdot)$, respectively. 
The corresponding measures under $\pi^{(2)}(\cdot)$ are shown in Figures~\ref{fig:simu_mu1_strat}~and~\ref{fig:simu_DE_strat} in the Appendix. The coverages are computed under stratified intervention. 

Regarding bias, the Horvitz-Thompson estimator is design-unbiased across all scenarios, which is reflected in the simulation results. In general, the H\'{a}jek estimator is not design-unbiased in finite samples. More importantly, the H\'{a}jek estimator is undefined when the observed treatment assignment in each cluster falls outside the support of the stochastic intervention. To alleviate the latter, we rerandomize (i.e., reject the draw and simulate again) until the assignment in at least one cluster falls within the support. Our simulation results indicate that, under this rerandomization scheme, the bias of the H\'{a}jek estimator is close to zero across all scenarios.

When considering the SE, the H\'{a}jek estimators for both $\mu^\pi_1$ and $\text{DE}^\pi$ consistently outperform the corresponding Horvitz-Thompson estimators across all scenarios. 
The difference in SE between the Horvitz-Thompson and H\'{a}jek estimators for each estimand and stochastic intervention is especially noticeable for smaller values of $K$ and $m_k$. 
As expected, the SE for each estimator tends to decrease as $K$ or $m_k$ increases.  
Furthermore, this difference in SE is more pronounced under $\pi^{(2)}_k$ compared to $\pi^{(1)}_k$ for each estimand. 
This finding indicates that the H\'{a}jek estimator is more precise when the stochastic intervention deviates from the actual intervention.

Regarding coverage, the Horvitz-Thompson estimator of $\mu^\pi_1$ exhibits coverage that is nearly at the nominal level of 95\% across the two outcome models.
This result is expected because under stratified interference, our variance estimator is unbiased.
When the stochastic intervention is $\pi^{(1)}_k$, the coverage for $\text{DE}^\pi$ is closer to the nominal level under model M1 than under M2. 
This difference arises because M1 assumes homogeneous treatment effects (i.e., $\tilde{Y}_{ki}(1,p_k) - \tilde{Y}_{ki}(0,p_k)$ is constant), implying that the variance estimator is unbiased (see Proposition~\ref{prop_htvar1}). 
Under M2, however, treatment effects are heterogeneous, resulting in a conservative variance estimator.

When the stochastic intervention is $\pi^{(2)}_k$, the coverage for $\text{DE}^\pi$ is near the nominal level under both M1 and M2.
For the H\'{a}jek estimator of $\mu^\pi_1$, the coverage is approximately at the nominal level, with a few exceptions where the number of clusters is small (see Figure~\ref{fig:simu_mu1_strat}). 
This is reasonable because the variance estimator for the H\'{a}jek estimator is based on asymptotic approximations with a  large number of clusters.
Nonetheless, regardless of the number or size of the clusters, for $\text{DE}^\pi$, the H\'{a}jek estimator tends to be conservative, with coverages nearing 100\%.

\begin{figure}[!t]
    \centering 
    \includegraphics[scale = 0.53]{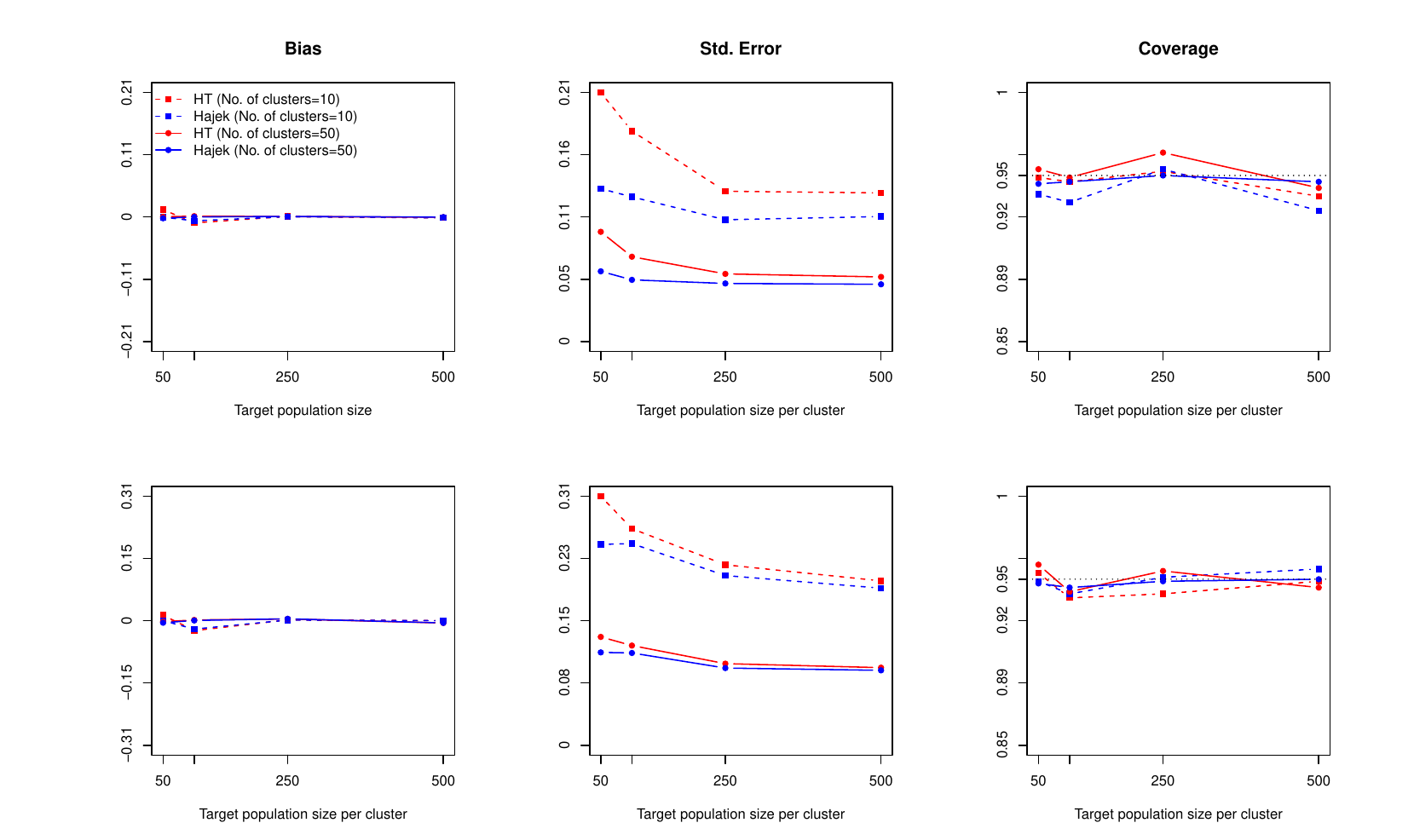}
    \caption{Bias, standard error, and coverage of 95\% confidence intervals for the Horvitz-Thompson and H\'{a}jek estimators of $\mu^\pi_1$ under outcome models M1 and M2 and stochastic intervention $\pi^{(1)}_k(\cdot)$. The first and second row correspond to models M1 and M2, respectively.}
    \label{fig:simu_mu1}
  \end{figure}
  
\begin{figure}[!t]
    \centering 
    \includegraphics[scale = 0.53]{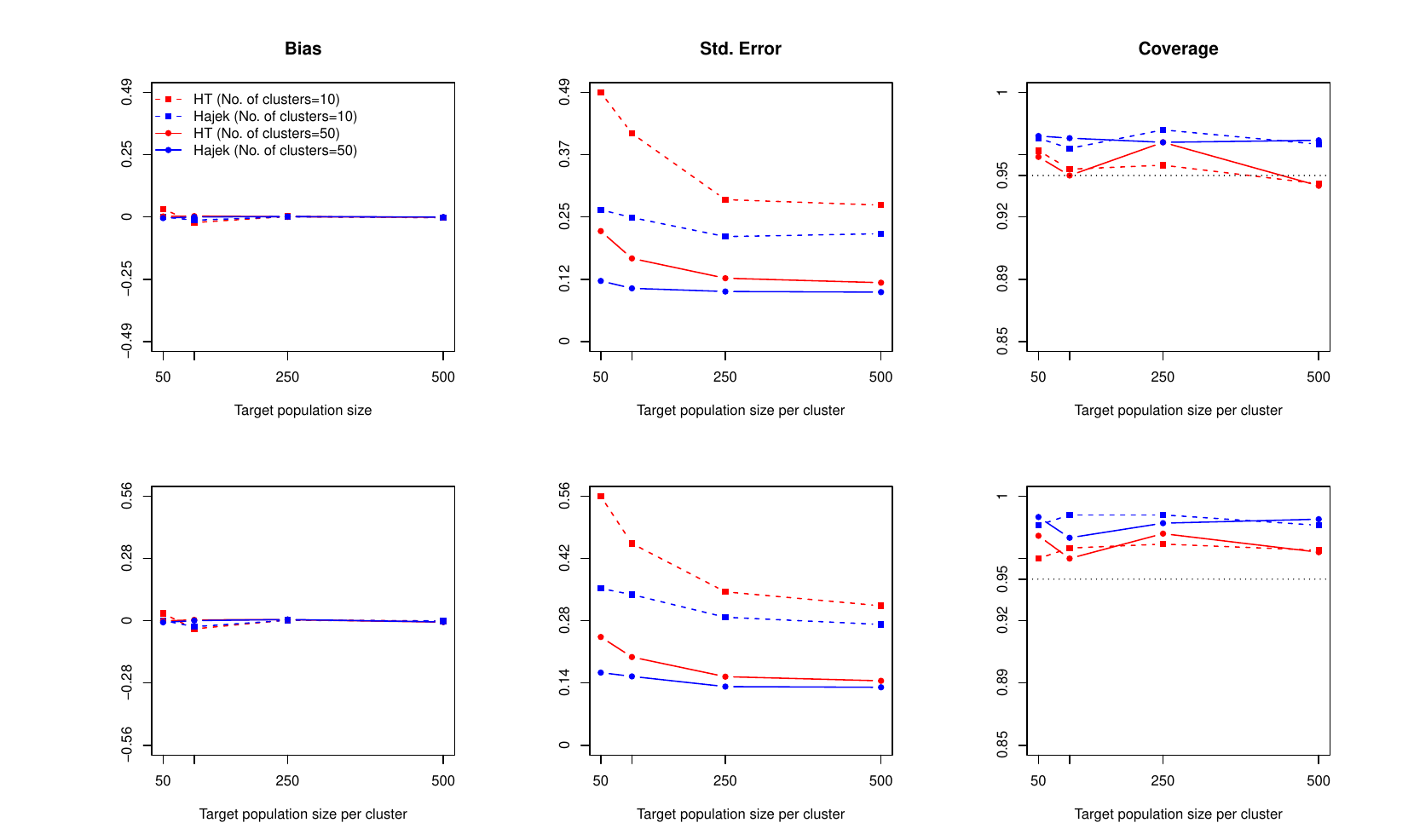}
    \caption{Bias, standard error, and coverage of 95\% confidence intervals for the Horvitz-Thompson and H\'{a}jek estimators of $\text{DE}^\pi$ under outcome models M1 and M2 and stochastic intervention $\pi^{(1)}_k(\cdot)$. The first and second row correspond to models M1 and M2, respectively.}
    \label{fig:simu_DE}
\end{figure}
\section{Empirical application}
\label{sec_empirical}

In this section, we implement our proposed inferential methods using the dataset from the school conflict experiment introduced in Section~\ref{sec_theeffect}.
Our analysis focuses on two indicators of \textit{awareness} about conflict and one outcome about the \textit{instances} of conflict: talking about conflict (yes or no), wearing anti-conflict wristbands (yes or no), and the number of conflict incidents. 

The main questions of interest are: (i) On average, what is the effect of each seed-eligible student's \textit{own} treatment status on their subsequent conflict behavior? (ii) What is the average effect of the treatment status of the \textit{best seed-eligible friend} of each seed-ineligible student on their conflict behavior? (iii) How do the above effects vary with different \textit{proportions} of referent students receiving treatment? (iv) What is the average effect of treating a fixed proportion (e.g., 0.5) of the \textit{close} seed-eligible friends of each student on their subsequent conflict behavior?

To address question~(i), we define the target population as all seed-eligible students in the network, and for every seed-eligible student $j$, we designate their key-intervention unit as themselves, i.e., $i^*(j) = j$. 
To address (ii), we define the target population as all seed-ineligible students in the network.  For every seed-ineligible student $j$, we designate their key-intervention unit as their self-reported closest seed-eligible friend $i^*(j)$. 
In both cases, we set the stochastic intervention $\pi_{k}(\cdot)$ to the actual intervention $f_k(\cdot)$.
Table~\ref{tab:q1} reports the point estimates, SEs, and 95\% confidence intervals for $\mu^\pi_1$ and $\text{DE}^\pi$ under stratified interference across both scenarios. The corresponding values under additive interference are provided in Table~\ref{tab:q1_additive} of the Appendix.

\begin{singlespacing}
\begin{table}[!t]
\centering
\scalebox{0.75}{
\begin{tabular}{cccccccc}
\toprule
  &   &  \multicolumn{3}{c}{Seed-ineligible population} &  \multicolumn{3}{c}{Seed-eligible population}\\
  \cline{3-8}
Outcome &  & Estimate & Std. Error & 95\% CI & Estimate & Std. Error & 95\% CI \\
\hline
\multirow{4}{*}{Talking about conflict} & $\hat{\mu}^{\pi}_{1,\text{HT}}$   & 0.41     & 0.01  & (0.38, 0.43) & 0.44 & 0.01 & (0.42, 0.47)    \\
& $\hat{\mu}^{\pi}_{1,\text{H\'{a}jek}}$   & 0.40  & 0.01 & (0.39, 0.41)  & 0.44 & 0.01 & (0.42, 0.47)  \\
& $\widehat{\text{DE}}^{\pi}_{\text{HT}}$  & 0.04   & 0.02    & (-0.01, 0.08) &  0.07 & 0.03 & (0.02, 0.12)   \\
& $\widehat{\text{DE}}^{\pi}_{\text{H\'{a}jek}}$   & 0.02   & 0.03  & (-0.04, 0.07) & 0.07 & 0.03 & (0.02, 0.12)  \\
\hline
\multirow{4}{*}{Wearing anti-conflict wristbands} & $\hat{\mu}^{\pi}_{1,\text{HT}}$     & 0.19               & 0.01       & (0.18, 0.21)  & 0.28 & 0.01 & (0.26, 0.30)    \\
& $\hat{\mu}^{\pi}_{1,\text{H\'{a}jek}}$  & 0.19   & 0.01  & (0.17, 0.20) & 0.28 & 0.01 & (0.26, 0.30)    \\
& $\widehat{\text{DE}}^{\pi}_{\text{HT}}$    & 0.03  & 0.01   & (-0.002, 0.05) & 0.14 & 0.02 & (0.10, 0.18)  \\
 & $\widehat{\text{DE}}^{\pi}_{\text{H\'{a}jek}}$  & 0.02   & 0.02   & (-0.02, 0.05)  & 0.14 & 0.02 & (0.10, 0.18)   \\
\hline
\multirow{4}{*}{Cases of conflict  }  & $\hat{\mu}^{\pi}_{1,\text{HT}}$   & 0.16               & 0.01       & (0.14, 0.18) & 0.15 & 0.02  & (0.11, 0.18)    \\
& $\hat{\mu}^{\pi}_{1,\text{H\'{a}jek}}$   & 0.16               & 0.01       & (0.14,0.18)  & 0.15 & 0.02 & (0.11, 0.18)   \\
& $\widehat{\text{DE}}^{\pi}_{\text{HT}}$  & 0.00               & 0.02       & (-0.04, 0.05)  & 0.00 & 0.03 & (-0.07, 0.07)  \\
& $\widehat{\text{DE}}^{\pi}_{\text{H\'{a}jek}}$  & -0.01  & 0.02   & (-0.05, 0.04)  & 0.00 & 0.03 & (-0.07, 0.07)  \\
\bottomrule
\end{tabular}
}
\caption{Estimates, standard errors (SE) and $95\%$ confidence intervals (CI) of the average potential outcomes and direct effects under stratified interference for the two target populations, where the stochastic intervention equals the actual intervention.}
\label{tab:q1}
\end{table}
\end{singlespacing}

Table~\ref{tab:q1} shows that the point estimates and SEs for the Horvitz-Thompson and H\'{a}jek estimators are similar across all scenarios. 
This finding aligns with those from the simulation study in Section \ref{sec_simulation}, where both the Horvitz Thompson and H\'{a}jek estimators performed similarly for large $m_k$.
When comparing the point estimates of $\mu^\pi_1$ across the two target populations, we find that under the intervention, the overall level of anti-conflict activities (such as talking about conflict and wearing anti-conflict wristbands) is higher in the seed-eligible population than in the ineligible population. 
A similar pattern is noted for estimates of $\text{DE}^\pi$. 

These patterns intuitively make sense because the intervention is expected to have a more pronounced effect on students directly involved (i.e., the eligible students) than on their friends who are not eligible. 
However, this pattern is not as apparent when considering the instances of conflict. 
Finally, the confidence intervals for $\text{DE}^\pi$ indicate that while the intervention on the key-intervention unit increases awareness about conflict (at least among the seed-eligible students), it does not significantly decrease the actual instances of conflict. 

Next, we address question (iii) by incorporating information on the referent students in our causal estimands. 
For school $k$, we consider a stochastic intervention $\pi_k(\cdot)$ that treats a fixed proportion $\alpha$ of referent students (see the Appedix for details). 
With varying values of $\alpha$, namely $0.1$, $0.3$, $0.5$, $0.7$, and $0.9$, we plot the corresponding point estimates and 95\% confidence intervals for $\mu^\pi_1$ and $\text{DE}^\pi$ under stratified interference in Figures \ref{fig:mu1_estimates} and \ref{fig:DE_estimates}, respectively. The corresponding plots under additive interference are provided in Figures \ref{fig:mu1_estimates_additive} and \ref{fig:DE_estimates_additive} in the Appendix.

\begin{figure}[!t]
    \centering 
    \includegraphics[scale = 0.7]{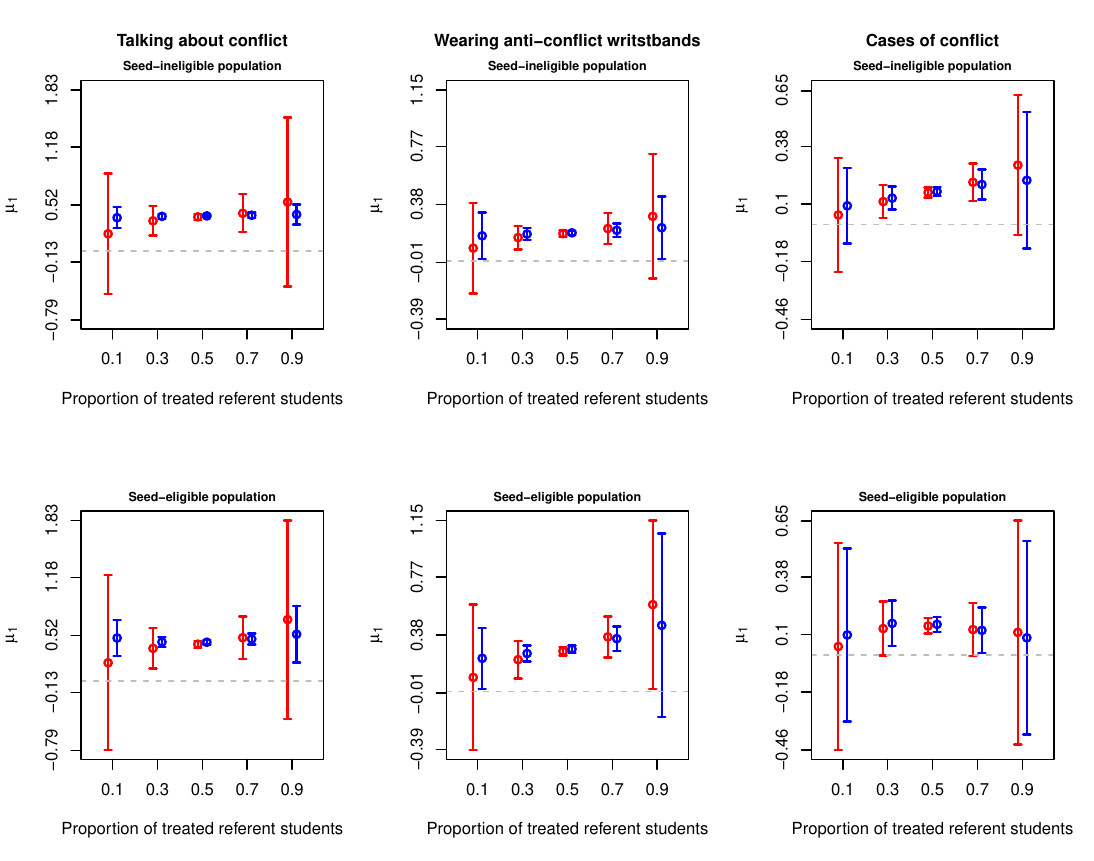}
    \caption{Point estimates and $95\%$ confidence intervals under stratified interference for the Horvitz-Thompson (red) and H\'{a}jek (blue) estimators of $\mu^\pi_1$ for two target populations.}
    \label{fig:mu1_estimates}
  \end{figure}
  
\begin{figure}[!t]
    \centering 
    \includegraphics[scale = 0.7]{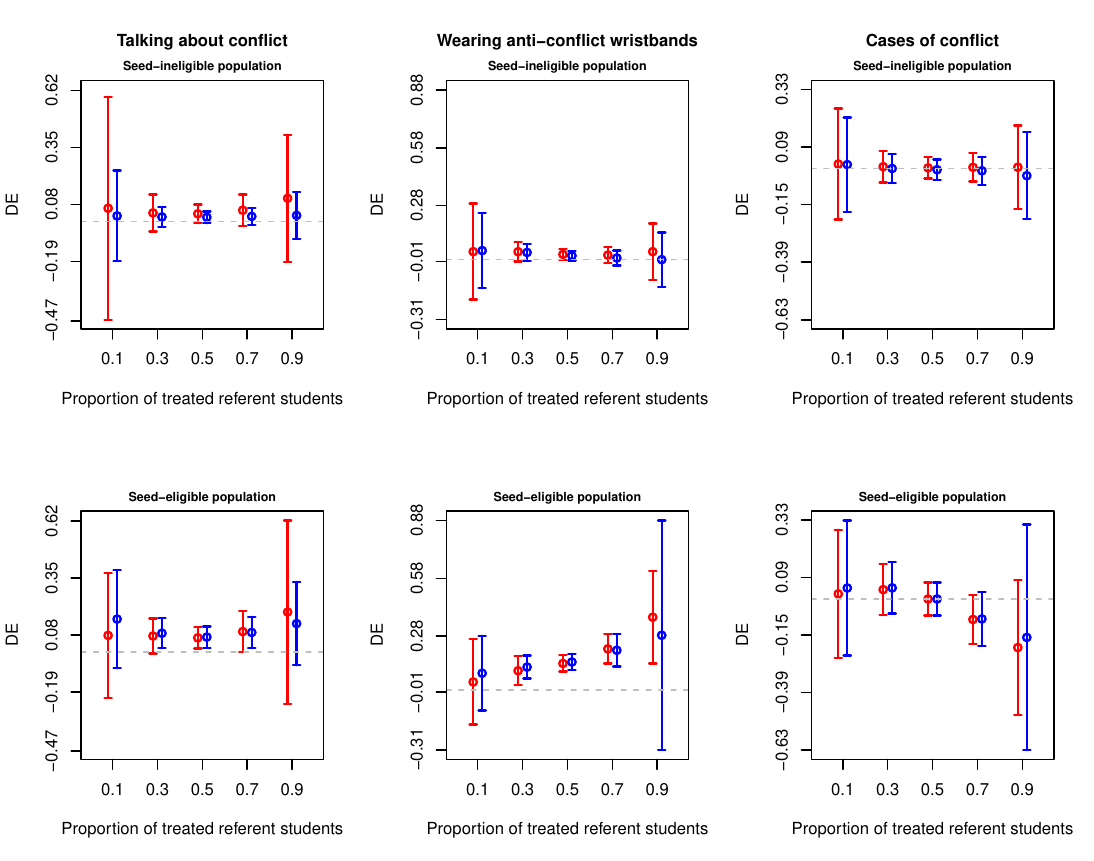}
    \caption{Point estimates and $95\%$ confidence intervals under stratified interference for the Horvitz-Thompson (red) and H\'{a}jek (blue) estimators of $\text{DE}^\pi$ for two target populations. }
    \label{fig:DE_estimates}
  \end{figure}
  
Figures~\ref{fig:mu1_estimates}~and~\ref{fig:DE_estimates} show that the point estimates and SEs of the Horvitz-Thompson and H\'{a}jek estimators exhibit more pronounced differences compared to the previous $\pi_k(\cdot) = f_k(\cdot)$ scenario.  
Generally, the H\'{a}jek estimator yields smaller SEs than the Horvitz-Thompson estimator, leading to narrower confidence intervals. 
The contrast in the overall performance of the estimators for the seed-eligible and ineligible target populations is similar to the previous scenario.

Furthermore, as the proportion of treated referent student $\alpha$ increases, the estimators of $\mu^\pi_1$ corresponding to conflict awareness tend to increase. 
However, the estimators linked to instances of conflict do not decrease as $\alpha$ increases; in fact, they tend to slightly increase among the seed-ineligible population. 
Our findings align with those in \cite{paluck2016changing}, which showed that the peer-to-peer social influence effects of the referent seeds on the instances of conflict are not significant.
Also, the confidence intervals for $\text{DE}^{\pi}$ suggest that while in some cases there are significant (positive) direct effects of the intervention on conflict awareness (e.g., on wearing wristbands with $\alpha = 0.7$), the effects on instances of conflict are not significant.

Across all scenarios, the standard errors under additive interference are uniformly larger than those under stratified interference, leading to wider confidence intervals; see Table~\ref{tab:q1_additive} and Figures~\ref{fig:mu1_estimates_additive}~and~\ref{fig:DE_estimates_additive} in the Appendix. These observations align with the results in Section~\ref{sec_additive}, which indicate that while additive interference allows for more flexible and complex patterns of spillover in the network compared to stratified interference, the resulting variance estimators are more conservative. Nevertheless, similar to stratified interference, the estimated standard errors under additive interference across different scenarios are relatively small when the stochastic intervention $\pi_k(\cdot)$ resembles the actual intervention $f_k(\cdot)$, i.e., in scenarios with $\alpha$ close to $0.5$.

Finally, to address question (iv), we define the target population as the set of \textit{all} students, and for each student $j$, we consider their self-reported close friends (up to 10) as the set of key-intervention units. 
If $j$ is a seed-eligible student, we include them in the set of key-intervention units. 
We consider the stochastic intervention as in Section \ref{sec_estimand} (see the Appendix for details). 
With varying values of $p^*$, namely $0.1$, $0.3$, $0.5$, $0.7$, and $0.9$, we depict the corresponding point estimates and 95\% confidence intervals for $\tau^\pi$ (as defined in Section \ref{sec_estimand}) in Figure \ref{fig:tau_estimates}.

\begin{figure}[!t]
    \centering
    \includegraphics[scale = 0.7]{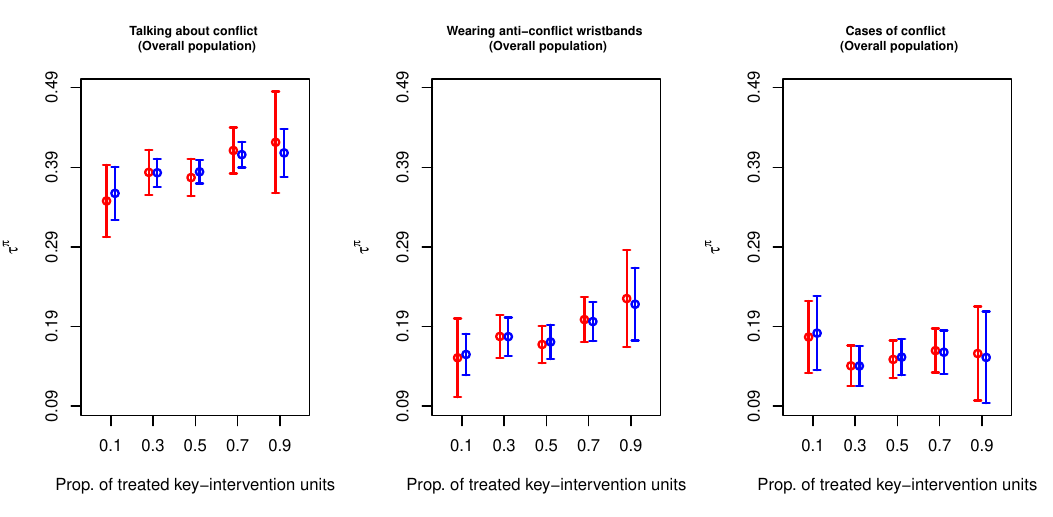}
    \caption{Point estimates and $95\%$ confidence intervals for the Horvitz-Thompson (red) and H\'{a}jek (blue) estimators of $\tau^\pi$ with multiple key-intervention units. }
    \label{fig:tau_estimates}
  \end{figure}
  
Figure~\ref{fig:tau_estimates} shows that, similar to the previous cases, the H\'{a}jek estimator typically yields narrower confidence intervals for $\tau^\pi$ compared to the Horvitz-Thompson estimator. 
Furthermore, as the proportion of treated close friends increases, the average levels of conflict awareness also tend to increase. 
The average instances of conflict initially decrease but then plateau as $p^*$ is increased from 0.5 to 0.9.

In summary, the results suggest that treating a higher proportion of close friends of each student could be beneficial in enhancing awareness about conflict behaviors; however, it may not lead to a reduction in the actual cases of conflict beyond a certain threshold.

\section{Concluding remarks}
\label{sec_summary}

In this paper, we established a design-based framework for the analysis of generalized network experiments, accommodating arbitrary interference and arbitrary target populations in the network.
We introduced a class of causal estimands using stochastic interventions and proposed Horvitz-Thompson and H\'{a}jek estimators under general interference. 
We addressed the challenge of identifying the design-based variances of these estimators by developing their conservative estimators under certain assumptions on interference.

We implemented the proposed estimation methods in a simulated experiment and a real-world experiment focused on anti-conflict interventions in schools. 
Both studies suggested that the H\'{a}jek estimators tend to produce more precise estimates of causal effects than the Horvitz-Thompson estimators. 
Our analysis of the school-conflict experiment revealed that intervening on a higher proportion of close friends or referent (i.e., influential) students increases awareness regarding conflict on average, though it does not significantly reduce the average number of conflict cases in schools.

The proposed framework for generalized network experiments can be extended to incorporate more complex estimands and assignment mechanisms. 
For instance, one could consider treatment assignment mechanisms (both counterfactual and actual) that are dependent across clusters, such as two-stage randomized experiments \citep{hudgens2008toward}. 
Potential extensions of this framework include the derivation of large-sample properties of the proposed estimators under weaker assumptions on interference \citep{savje2021average}.
 
\bibliographystyle{asa}
\bibliography{main}.

\begin{thebibliography}{43}
\newcommand{\enquote}[1]{``#1''}
\expandafter\ifx\csname natexlab\endcsname\relax\def\natexlab#1{#1}\fi

\bibitem[{Aronow and Samii(2017)}]{aronow2017estimating}
Aronow, P.~M. and Samii, C. (2017), \enquote{Estimating average causal effects under general interference, with application to a social network experiment,} \textit{Annals of Applied Statistics}, 11, 1912--1947.

\bibitem[{Athey et~al.(2018)Athey, Eckles, and Imbens}]{athe:eckl:imbe:18}
Athey, S., Eckles, D., and Imbens, G.~W. (2018), \enquote{Exact P-values for Network Interference,} \textit{Journal of the American Statistical Association}, 113, 230--240.

\bibitem[{Baird et~al.(2018)Baird, Bohren, McIntosh, and {\"O}zler}]{baird2018optimal}
Baird, S., Bohren, J.~A., McIntosh, C., and {\"O}zler, B. (2018), \enquote{Optimal design of experiments in the presence of interference,} \textit{Review of Economics and Statistics}, 100, 844--860.

\bibitem[{Bajari et~al.(2023)Bajari, Burdick, Imbens, Masoero, McQueen, Richardson, and Rosen}]{bajari2023experimental}
Bajari, P., Burdick, B., Imbens, G.~W., Masoero, L., McQueen, J., Richardson, T.~S., and Rosen, I.~M. (2023), \enquote{Experimental design in marketplaces,} \textit{Statistical Science}, 1, 1--19.

\bibitem[{Basse and Feller(2018)}]{basse2018analyzing}
Basse, G. and Feller, A. (2018), \enquote{Analyzing two-stage experiments in the presence of interference,} \textit{Journal of the American Statistical Association}, 113, 41--55.

\bibitem[{Cr{\'e}pon et~al.(2013)Cr{\'e}pon, Duflo, Gurgand, Rathelot, and Zamora}]{crepon2013labor}
Cr{\'e}pon, B., Duflo, E., Gurgand, M., Rathelot, R., and Zamora, P. (2013), \enquote{Do labor market policies have displacement effects? Evidence from a clustered randomized experiment,} \textit{Quarterly Journal of Economics}, 128, 531--580.

\bibitem[{D{\'\i}az and Hejazi(2019)}]{diaz2019causal}
D{\'\i}az, I. and Hejazi, N. (2019), \enquote{Causal mediation analysis for stochastic interventions,} \textit{arXiv preprint arXiv:1901.02776}.

\bibitem[{Doudchenko et~al.(2020)Doudchenko, Zhang, Drynkin, Airoldi, Mirrokni, and Pouget-Abadie}]{doudchenko2020causal}
Doudchenko, N., Zhang, M., Drynkin, E., Airoldi, E., Mirrokni, V., and Pouget-Abadie, J. (2020), \enquote{Causal inference with bipartite designs,} \textit{arXiv preprint arXiv:2010.02108}.

\bibitem[{Fisher(1935)}]{fisher1935design}
Fisher, R.~A. (1935), \textit{The design of experiments}, London: Oliver \& Boyd.

\bibitem[{Fuller(2009)}]{fuller2009sampling}
Fuller, W.~A. (2009), \textit{Sampling statistics}, vol. 560, John Wiley \& Sons.

\bibitem[{Gupta et~al.(2019)Gupta, Kohavi, Tang, Xu, Andersen, Bakshy, Cardin, Chandran, Chen, Coey, et~al.}]{gupta2019top}
Gupta, S., Kohavi, R., Tang, D., Xu, Y., Andersen, R., Bakshy, E., Cardin, N., Chandran, S., Chen, N., Coey, D., et~al. (2019), \enquote{Top challenges from the first practical online controlled experiments summit,} \textit{ACM SIGKDD Explorations Newsletter}, 21, 20--35.

\bibitem[{Halloran and Hudgens(2016)}]{halloran2016dependent}
Halloran, M.~E. and Hudgens, M.~G. (2016), \enquote{Dependent happenings: a recent methodological review,} \textit{Current Epidemiology Reports}, 3, 297--305.

\bibitem[{Halloran and Struchiner(1995)}]{halloran1995causal}
Halloran, M.~E. and Struchiner, C.~J. (1995), \enquote{Causal inference in infectious diseases,} \textit{Epidemiology}, 142--151.

\bibitem[{Harshaw et~al.(2021)Harshaw, S{\"a}vje, Eisenstat, Mirrokni, and Pouget-Abadie}]{harshaw2021design}
Harshaw, C., S{\"a}vje, F., Eisenstat, D., Mirrokni, V., and Pouget-Abadie, J. (2021), \enquote{Design and analysis of bipartite experiments under a linear exposure-response model,} \textit{arXiv preprint arXiv:2103.06392}.

\bibitem[{Hudgens and Halloran(2008)}]{hudgens2008toward}
Hudgens, M.~G. and Halloran, M.~E. (2008), \enquote{Toward causal inference with interference,} \textit{Journal of the American Statistical Association}, 103, 832--842.

\bibitem[{Imai et~al.(2021)Imai, Jiang, and Malani}]{imai2021causal}
Imai, K., Jiang, Z., and Malani, A. (2021), \enquote{Causal inference with interference and noncompliance in two-stage randomized experiments,} \textit{Journal of the American Statistical Association}, 116, 632--644.

\bibitem[{Imbens and Rubin(2015)}]{imbens2015causal}
Imbens, G.~W. and Rubin, D.~B. (2015), \textit{Causal inference in statistics, social, and biomedical sciences}, Cambridge University Press.

\bibitem[{Kennedy(2019)}]{kennedy2019nonparametric}
Kennedy, E.~H. (2019), \enquote{Nonparametric causal effects based on incremental propensity score interventions,} \textit{Journal of the American Statistical Association}, 114, 645--656.

\bibitem[{Leung(2020)}]{leung2020treatment}
Leung, M.~P. (2020), \enquote{Treatment and spillover effects under network interference,} \textit{Review of Economics and Statistics}, 102, 368--380.

\bibitem[{Leung(2022)}]{leung2022causal}
--- (2022), \enquote{Causal inference under approximate neighborhood interference,} \textit{Econometrica}, 90, 267--293.

\bibitem[{Liu and Hudgens(2014)}]{liu2014large}
Liu, L. and Hudgens, M.~G. (2014), \enquote{Large sample randomization inference of causal effects in the presence of interference,} \textit{Journal of the American Statistical Association}, 109, 288--301.

\bibitem[{Lohr(2021)}]{lohr2021sampling}
Lohr, S.~L. (2021), \textit{Sampling: design and analysis}, CRC press.

\bibitem[{Mu{\~n}oz and Van Der~Laan(2012)}]{munoz2012population}
Mu{\~n}oz, I.~D. and Van Der~Laan, M. (2012), \enquote{Population intervention causal effects based on stochastic interventions,} \textit{Biometrics}, 68, 541--549.

\bibitem[{Neyman(1923, 1990)}]{neyman1923application}
Neyman, J. (1923, 1990), \enquote{On the application of probability theory to agricultural experiments,} \textit{Statistical Science}, 5, 463--480.

\bibitem[{Nickerson(2008)}]{nickerson2008voting}
Nickerson, D.~W. (2008), \enquote{Is voting contagious? Evidence from two field experiments,} \textit{American political Science review}, 102, 49--57.

\bibitem[{Paluck et~al.(2016)Paluck, Shepherd, and Aronow}]{paluck2016changing}
Paluck, E.~L., Shepherd, H., and Aronow, P.~M. (2016), \enquote{Changing climates of conflict: A social network experiment in 56 schools,} \textit{Proceedings of the National Academy of Sciences}, 113, 566--571.

\bibitem[{Papadogeorgou et~al.(2022)Papadogeorgou, Imai, Lyall, and Li}]{papadogeorgou2020causal}
Papadogeorgou, G., Imai, K., Lyall, J., and Li, F. (2022), \enquote{Causal inference with spatio-temporal data: estimating the effects of airstrikes on insurgent violence in Iraq,} \textit{Journal of the Royal Statistical Society, {Series B}}, 84, 1969--1999.

\bibitem[{Papadogeorgou et~al.(2019)Papadogeorgou, Mealli, and Zigler}]{papadogeorgou2019causal}
Papadogeorgou, G., Mealli, F., and Zigler, C.~M. (2019), \enquote{Causal inference with interfering units for cluster and population level treatment allocation programs,} \textit{Biometrics}, 75, 778--787.

\bibitem[{Park and Kang(2022)}]{park2022spillover}
Park, C. and Kang, H. (2022), \enquote{Efficient semiparametric estimation of network treatment effects under partial interference,} \textit{Biometrika}, 109, 1015--1031.

\bibitem[{Pouget-Abadie et~al.(2019)Pouget-Abadie, Aydin, Schudy, Brodersen, and Mirrokni}]{pouget2019variance}
Pouget-Abadie, J., Aydin, K., Schudy, W., Brodersen, K., and Mirrokni, V. (2019), \enquote{Variance reduction in bipartite experiments through correlation clustering,} \textit{Advances in Neural Information Processing Systems}, 32.

\bibitem[{Rosenbaum(2007)}]{rosenbaum2007interference}
Rosenbaum, P.~R. (2007), \enquote{Interference between units in randomized experiments,} \textit{Journal of the American Statistical Association}, 102, 191--200.

\bibitem[{S{\"a}vje et~al.(2021)S{\"a}vje, Aronow, and Hudgens}]{savje2021average}
S{\"a}vje, F., Aronow, P., and Hudgens, M. (2021), \enquote{Average treatment effects in the presence of unknown interference,} \textit{Annals of Statistics}, 49, 673.

\bibitem[{Sinclair et~al.(2012)Sinclair, McConnell, and Green}]{sinclair2012detecting}
Sinclair, B., McConnell, M., and Green, D.~P. (2012), \enquote{Detecting spillover effects: Design and analysis of multilevel experiments,} \textit{American Journal of Political Science}, 56, 1055--1069.

\bibitem[{Sobel(2006)}]{sobel2006randomized}
Sobel, M.~E. (2006), \enquote{What do randomized studies of housing mobility demonstrate? Causal inference in the face of interference,} \textit{Journal of the American Statistical Association}, 101, 1398--1407.

\bibitem[{Tchetgen and VanderWeele(2012)}]{tchetgen2012causal}
Tchetgen, E. J.~T. and VanderWeele, T.~J. (2012), \enquote{On causal inference in the presence of interference,} \textit{Statistical Methods in Medical Research}, 21, 55--75.

\bibitem[{Toulis and Kao(2013)}]{toulis2013estimation}
Toulis, P. and Kao, E. (2013), \enquote{Estimation of causal peer influence effects,} in \textit{International Conference on Machine Learning}, PMLR, pp. 1489--1497.

\bibitem[{Wang(2021)}]{wang2021causal}
Wang, Y. (2021), \enquote{Causal Inference under Temporal and Spatial Interference,} \textit{arXiv preprint arXiv:2106.15074}.

\bibitem[{Wang et~al.(2020)Wang, Samii, Chang, and Aronow}]{wang2020design}
Wang, Y., Samii, C., Chang, H., and Aronow, P. (2020), \enquote{Design-based inference for spatial experiments with interference,} \textit{arXiv preprint arXiv:2010.13599}.

\bibitem[{Young et~al.(2014)Young, Hern{\'a}n, and Robins}]{young2014identification}
Young, J.~G., Hern{\'a}n, M.~A., and Robins, J.~M. (2014), \enquote{Identification, estimation and approximation of risk under interventions that depend on the natural value of treatment using observational data,} \textit{Epidemiologic Methods}, 3, 1--19.

\bibitem[{Yu et~al.(2022)Yu, Airoldi, Borgs, and Chayes}]{yu2022estimating}
Yu, C.~L., Airoldi, E.~M., Borgs, C., and Chayes, J.~T. (2022), \enquote{Estimating the total treatment effect in randomized experiments with unknown network structure,} \textit{Proceedings of the National Academy of Sciences}, 119, e2208975119.

\bibitem[{Zhang and Imai(2023)}]{zhang2023individualized}
Zhang, Y. and Imai, K. (2023), \enquote{Individualized Policy Evaluation and Learning under Clustered Network Interference,} \textit{arXiv preprint arXiv:2311.02467}.

\bibitem[{Zigler et~al.(2020)Zigler, Forastiere, and Mealli}]{zigler2020bipartite}
Zigler, C., Forastiere, L., and Mealli, F. (2020), \enquote{Bipartite interference and air pollution transport: Estimating health effects of power plant interventions,} \textit{arXiv preprint arXiv:2012.04831}.

\bibitem[{Zigler and Papadogeorgou(2021)}]{zigler2021bipartite}
Zigler, C.~M. and Papadogeorgou, G. (2021), \enquote{Bipartite causal inference with interference,} \textit{Statistical Science}, 36, 109.

\end{thebibliography}

\clearpage
\appendix
\section*{Supplementary Materials} 

\addcontentsline{toc}{section}{Supplementary Materials}
\renewcommand{\thesubsection}{\Alph{subsection}}
\setcounter{table}{0}
\renewcommand{\thetable}{A\arabic{table}}
\setcounter{figure}{0}
\renewcommand\thefigure{A\arabic{figure}}
\setcounter{theorem}{0}
\renewcommand\thetheorem{A\arabic{theorem}}
\setcounter{lemma}{0}
\renewcommand\thelemma{A\arabic{lemma}}
\setcounter{equation}{0}
\renewcommand\theequation{A\arabic{equation}}

\subsection{Proofs of propositions and theorems}

\subsubsection{Proof of Theorem \ref{thm_HT_unbiased}}

\begin{align*}
\E(\hat{\tau}^\pi_{\text{HT}}) &= \E\left[\frac{1}{K}\sum_{k=1}^{K}\left\{\frac{1}{|\mathcal{S}_k|}\sum_{j \in \mathcal{S}_k}\frac{\pi_{kj}(\bm{A}_k)}{f_k(\bm{A}_k)}Y^{\text{obs}}_{kj}\right\}\right], \nonumber \\
&=  \frac{1}{K}\sum_{k=1}^{K}\frac{1}{|\mathcal{S}_k|}\sum_{j \in \mathcal{S}_k} \E\left\{\sum_{\bm{a}}\mathbbm{1}(\bm{A}_k = \bm{a})  \frac{\pi_{kj}(\bm{a})}{f_k(\bm{a})}Y_{kj}(\bm{a})\right\}, \nonumber \\
& = \frac{1}{K}\sum_{k=1}^{K}\frac{1}{|\mathcal{S}_k|}\sum_{j \in \mathcal{S}_k} \sum_{\bm{a}}\E\{\mathbbm{1}(\bm{A}_k = \bm{a})\}  \frac{\pi_{kj}(\bm{a})}{f_k(\bm{a})}Y_{kj}(\bm{a}) \nonumber,\\
& = \frac{1}{K}\sum_{k=1}^{K}\frac{1}{|\mathcal{S}_k|}\sum_{j \in \mathcal{S}_k} \sum_{\bm{a}}f_k(\bm{a})  \frac{\pi_{kj}(\bm{a})}{f_k(\bm{a})}Y_{kj}(\bm{a}) \\
& = \tau^\pi. 
\end{align*}
\qed

\subsubsection{Proof of Theorem \ref{thm_generalvar}}

First, we can write $\hat{\mu}^\pi_{a,\text{HT}}$ as
\begin{align}
    \hat{\mu}^\pi_{a,\text{HT}} & = \frac{1}{K}\sum_{k=1}^{K}\frac{1}{|\mathcal{S}_k|}\sum_{j \in \mathcal{S}_k}\sum_{\bm{s}}\mathbbm{1}\{A_{ki^*} = a, \bm{A}_{k(-i^*)} = \bm{s}\}\frac{\pi_{k}(\bm{A}_{k(-i^*)} = \bm{s} \mid A_{ki^*} = a)}{f_k(A_{ki^*} = a, \bm{A}_{k(-i^*)} =\bm{s})} \nonumber\\
    & \hspace{4cm} \times Y_{kj}(A_{ki^*} = a, \bm{A}_{k(-i^*)} =\bm{s}).
    \label{eq_A_muahat}
\end{align}
Thus, the variance of $\hat{\mu}^\pi_{a,\text{HT}}$ can be written as,
\begin{align*}
&\Var(\hat{\mu}^\pi_{a,\text{HT}}) \nonumber \\
= & \frac{1}{K^2}\sum_{k=1}^{K}\frac{1}{|\mathcal{S}_k|^2}\Var\left(\sum_{j \in \mathcal{S}_k}\sum_{\bm{s}}\mathbbm{1}\{A_{ki^*} = a, \bm{A}_{k(-i^*)} = \bm{s}\}\frac{\pi_{k}(\bm{A}_{k(-i^*)} = \bm{s} \mid A_{ki^*} = a)}{f_k(A_{ki^*} = a, \bm{A}_{k(-i^*)} =\bm{s})} \right. \nonumber\\
& \hspace{4cm}  \times Y_{kj}(A_{ki^*} = a, \bm{A}_{k(-i^*)} =\bm{s}) \Bigg).
\end{align*}
The variance term inside the first summation can be decomposed as $\sum_{j \in \mathcal{S}_k} \Lambda_{1,k,j} + \mathop{\sum\sum}_{j \neq j' \in \mathcal{S}_k}\Lambda_{2,k,j,j'}$, where
\begin{align*}
    &\Lambda_{1,k,j} \nonumber\\
     = & \Var\left\{ \sum_{\bm{s}}\mathbbm{1}(A_{ki^*} = a, \bm{A}_{k(-i^*)} = \bm{s})\frac{\pi_{k}(\bm{A}_{k(-i^*)} = \bm{s} \mid A_{ki^*} = a)}{f_k(A_{ki^*} = a, \bm{A}_{k(-i^*)} =\bm{s})}Y_{kj}(A_{ki^*} = a, \bm{A}_{k(-i^*)} =\bm{s})  \right\} \nonumber \\
     = & \sum_{\bm{s}}\Var\left\{ \mathbbm{1}(A_{ki^*} = a, \bm{A}_{k(-i^*)} = \bm{s})\right\}\frac{\pi^2_{k}(\bm{A}_{k(-i^*)} = \bm{s} \mid A_{ki^*} = a)}{f^2_k(A_{ki^*} = a, \bm{A}_{k(-i^*)} =\bm{s})}Y^2_{kj}(A_{ki^*} = a, \bm{A}_{k(-i^*)} =\bm{s}) \nonumber \\
    & + \mathop{\sum\sum}_{\bm{s} \neq \bm{s}'} \Cov\left\{\mathbbm{1}(A_{ki^*} = a, \bm{A}_{k(-i^*)} = \bm{s}), \mathbbm{1}(A_{ki^*} = a, \bm{A}_{k(-i^*)} = \bm{s}')\right\}\frac{\pi_{k}(\bm{A}_{k(-i^*)} = \bm{s} \mid A_{ki^*} = a)}{f_k(A_{ki^*} = a, \bm{A}_{k(-i^*)} =\bm{s})} \nonumber\\
    & \hspace{.75in} \times \frac{\pi_{k}(\bm{A}_{k(-i^*)} = \bm{s}' \mid A_{ki^*} = a)}{f_k(A_{ki^*} = a, \bm{A}_{k(-i^*)} =\bm{s}')}Y_{kj}(A_{ki^*} = a, \bm{A}_{k(-i^*)} =\bm{s})Y_{kj}(A_{ki^*} = a, \bm{A}_{k(-i^*)} =\bm{s}'). \nonumber\\
     = & \sum_{\bm{s}}\frac{\pi^2_{k}(\bm{A}_{k(-i^*)} = \bm{s}|A_{ki^*} = a) \{1 - f_k(A_{ki^*} = a, \bm{A}_{k(-i^*)} = \bm{s})\}}{f_k(A_{ki^*} = a, \bm{A}_{k(-i^*)} = \bm{s})}Y^2_{kj}(A_{ki^*} = a, \bm{A}_{k(-i^*)} = \bm{s}) \nonumber\\
 & - \mathop{\sum\sum}_{\bm{s} \neq \bm{s}'} \pi_{k}(\bm{A}_{k(-i^*)} = \bm{s} \mid A_{ki^*} = a) \pi_{k}(\bm{A}_{k(-i^*)} = \bm{s}' \mid A_{ki^*} = a) \nonumber \\
 & \hspace{1in} \times Y_{kj}(A_{ki^*} = a, \bm{A}_{k(-i^*)} =\bm{s}) Y_{kj}(A_{ki^*} = a, \bm{A}_{k(-i^*)} =\bm{s}') \nonumber\\
 = & \sum_{\bm{s}}\frac{\pi^2_{k}(\bm{A}_{k(-i^*)} = \bm{s}|A_{ki^*} = a)}{f_k(A_{ki^*} = a, \bm{A}_{k(-i^*)} = \bm{s})}Y^2_{kj}(A_{ki^*} = a, \bm{A}_{k(-i^*)} = \bm{s}) \nonumber \\
 & \quad - \left\{\sum_{\bm{s}}\pi_{k}(\bm{A}_{k(-i^*)} = \bm{s}|A_{ki^*} = a)Y_{kj}(A_{ki^*} = a, \bm{A}_{k(-i^*)} = \bm{s}) \right\}^2
 \end{align*} 
and
\begin{align*}
&\Lambda_{2,k,j,j'}  \nonumber \\
 = & \Cov\left(\sum_{\bm{s}}\mathbbm{1}(A_{ki^*} = a, \bm{A}_{k(-i^*)} = \bm{s})\frac{\pi_{k}(\bm{A}_{k(-i^*)} = \bm{s} \mid A_{ki^*} = a)}{f_k(A_{ki^*} = a, \bm{A}_{k(-i^*)} =\bm{s})}Y_{kj}(A_{ki^*} = a, \bm{A}_{k(-i^*)} =\bm{s}),\right. \nonumber \\  
& \hspace{.5in} \left.\sum_{\bm{s}}\mathbbm{1}(A_{ki^{*'}} = a, \bm{A}_{k(-i^{*'})} = \bm{s})\frac{\pi_{k}(\bm{A}_{k(-i^{*'})} = \bm{s} \mid A_{ki^{*'}} = a)}{f_k(A_{ki^{*'}} = a, \bm{A}_{k(-i^{*'})} =\bm{s})}Y_{kj'}(A_{ki^{*'}} = a, \bm{A}_{k(-i^{*'})} =\bm{s})\right),  \nonumber \\
= & \sum_{\bm{s}}\pi_{k}(\bm{A}_{k(-i^*)} = \bm{s}\mid A_{ki^*} = a) \pi_{k}(\bm{A}_{k(-i^{*'})} = \bm{s}\mid A_{ki^{*'}} = a) Y_{kj}(A_{ki^*} = a, \bm{A}_{k(-i^*)} = \bm{s})\nonumber \\
& \quad \times  Y_{kj'}(A_{ki^{*'}} = a, \bm{A}_{k(-i^{*'})} = \bm{s})\left( \frac{\mathbbm{1}\{A_{ki^*} = a,A_{ki^{*'}} = a, \bm{A}_{k(-i^*)} = \bm{s}, \bm{A}_{k(-i^{*'})} = \bm{s}\}}{f_k(A_{ki^*} = a, \bm{A}_{k(-i^*)} = \bm{s})} - 1 \right) \nonumber \\
& - \mathop{\sum\sum}_{\bm{s} \neq \bm{s}'}\pi_{k}(\bm{A}_{k(-i^*)} = \bm{s}\mid A_{ki^*} = a) \pi_{k}(\bm{A}_{k(-i^{*'})} = \bm{s}\mid A_{ki^{*'}} = a)\\ 
&\hspace{.75in} \times Y_{kj}(A_{ki^*} = a, \bm{A}_{k(-i^*)} = \bm{s})Y_{kj'}(A_{ki^*} = a, \bm{A}_{k(-i^{*'})} = \bm{s}') \nonumber \\
= & \sum_{\bm{s}}\pi_{k}(\bm{A}_{k(-i^*)} = \bm{s}\mid A_{ki^*} = a) \pi_{k}(\bm{A}_{k(-i^{*'})} = \bm{s}\mid A_{ki^{*'}} = a) Y_{kj}(A_{ki^*} = a, \bm{A}_{k(-i^*)} = \bm{s})\nonumber \\
& \quad \times Y_{kj'}(A_{ki^{*'}} = a, \bm{A}_{k(-i^{*'})} = \bm{s})\frac{\mathbbm{1}\{A_{ki^*} = a,A_{ki^{*'}} = a, \bm{A}_{k(-i^*)} = \bm{s}, \bm{A}_{k(-i^{*'})} = \bm{s}\}}{f_k(A_{ki^*} = a, \bm{A}_{k(-i^*)} = \bm{s})} \nonumber \\
& - \left\{\sum_{\bm{s}}\pi_{k}(\bm{A}_{k(-i^*)} = \bm{s}\mid A_{ki^*} = a)Y_{kj}(A_{ki^*} = a, \bm{A}_{k(-i^*)} = \bm{s}) \right\} \nonumber \\ 
& \times \left\{\sum_{\bm{s}}\pi_{k}(\bm{A}_{k(-i^{*'})} = \bm{s}\mid A_{ki^{*'}} = a)Y_{kj'}(A_{ki^{*'}} = a, \bm{A}_{k(-i^{*'})} = \bm{s}) \right\}. \nonumber\\
 = & \sum_{\tilde{\bm{s}}}\frac{\pi^2_k(A_{ki^{*}} = a, A_{ki^{*'}} = a, \bm{A}_{k(-i^*, -i^{*'})} = \tilde{\bm{s}})}{f_k(A_{ki^{*}} = a, A_{ki^{*'}} = a, \bm{A}_{k(-i^*, -i^{*'})} = \tilde{\bm{s}})\pi_k(A_{ki^*} = a)\pi_k(A_{ki^{*'}} = a)} \nonumber\\
& \quad  \times Y_{kj}(A_{ki^{*}} = a, A_{ki^{*'}} = a, \bm{A}_{k(-i^*,-i^{*'}) } = \tilde{\bm{s}})Y_{kj'}(A_{ki^{*}} = a,A_{ki^{*'}} = a, \bm{A}_{k(-i^*,-i^{*'})} = \tilde{\bm{s}}) \nonumber\\
& - \left\{\sum_{\bm{s}}\pi_{k}(\bm{A}_{k(-i^*)} = \bm{s}\mid A_{ki^*} = a)Y_{kj}(A_{ki^*} = a, \bm{A}_{k(-i^*)} = \bm{s}) \right\} \nonumber \\ 
& \times \left\{\sum_{\bm{s}}\pi_{k}(\bm{A}_{k(-i^{*'})} = \bm{s}\mid A_{ki^{*'}} = a)Y_{kj'}(A_{ki^{*'}} = a, \bm{A}_{k(-i^{*'})} = \bm{s}) \right\}.
\end{align*}
\qed


\subsubsection{Proof of Theorem \ref{thm_var1}}

Without loss of generality, we set $a = 1$. The proof for $a = 0$ is analogous.
Under Assumptions~\ref{assump_stratified}~and~\ref{assump_fixedprop}, $\hat{\mu}^{\pi}_{1,\text{HT}}$ can be written as,
\begin{align*}
    \hat{\mu}^{\pi}_{1,\text{HT}} &= \frac{1}{K}\sum_{k=1}^{K}\frac{1}{|\mathcal{S}_k|}\sum_{j \in \mathcal{S}_k}\sum_{\bm{s}: \frac{\bm{s}\top \bm{1}}{n_k} = p_k} \frac{\mathbbm{1}(A_{ki^*} = 1,\bm{A}_{k(-i^*)} = \bm{s})\pi_k(\bm{A}_{k(-i^*)} = \bm{s}\mid A_{ki^*} = 1)}{f_k(A_{ki*} = 1, \bm{A}_{k(-i^*)} = \bm{s})}Y_{kj}(1,p_k) \nonumber \\
    & = \frac{1}{K}\sum_{k=1}^{K}\frac{1}{|\mathcal{S}_k|}\sum_{i = 1}^{n_k}\left(\sum_{j \in \mathcal{S}_k}\mathbbm{1}(j \leftarrow i) Y_{kj}(1,p_k) \right)\sum_{\bm{s}} \frac{\mathbbm{1}(A_{ki} = 1,\bm{A}_{k(-i)} = \bm{s})\pi_k(\bm{A}_{k(-i)} = \bm{s}\mid A_{ki} = 1)}{f_k(A_{ki} = 1, \bm{A}_{k(-i)} = \bm{s})} \nonumber \\
    & = \frac{1}{K}\sum_{k=1}^{K}\frac{1}{|\mathcal{S}_k|}\sum_{i = 1}^{n_k}\tilde{Y}_{ki}(1,p_k)\sum_{\bm{s}} \frac{\mathbbm{1}(A_{ki} = 1,\bm{A}_{k(-i)} = \bm{s})\pi_k(\bm{A}_{k(-i)} = \bm{s}\mid A_{ki} = 1)}{f_k(A_{ki} = 1, \bm{A}_{k(-i)} = \bm{s})}
\end{align*}
Thus, by Assumption~\ref{assump_indep},
\begin{align*}
&\Var(\hat{\mu}^\pi_{1,\text{HT}})  \nonumber \\
& = \frac{1}{K^2}\sum_{k =1}^{K}\frac{1}{|\mathcal{S}_k|^2}\left[\sum_{i=1}^{n_k}\Var\left\{ \tilde{Y}_{ki}(1,p_k)\sum_{\bm{s}} \frac{\mathbbm{1}(A_{ki} = 1,\bm{A}_{k(-i)} = \bm{s})\pi_k(\bm{A}_{k(-i)} = \bm{s}\mid A_{ki} = 1)}{f_k(A_{ki} = 1, \bm{A}_{k(-i)} = \bm{s})}\right\} \right.\nonumber \\
& \quad + \mathop{\sum\sum}_{i \neq i'}\Cov\left(\tilde{Y}_{ki}(1,p_k)\sum_{\bm{s}} \frac{\mathbbm{1}(A_{ki} = 1,\bm{A}_{k(-i)} = \bm{s})\pi_k(\bm{A}_{k(-i)} = \bm{s}\mid A_{ki} = 1)}{f_k(A_{ki} = 1, \bm{A}_{k(-i)} = \bm{s})}, \right.\nonumber\\
& \hspace{1.25in}\left.\left. \tilde{Y}_{ki'}(1,p_k)\sum_{\bm{s}} \frac{\mathbbm{1}(A_{ki'} = 1,\bm{A}_{k(-i')} = \bm{s})\pi_k(\bm{A}_{k(-i')} = \bm{s}\mid A_{ki'} = 1)}{f_k(A_{ki'} = 1, \bm{A}_{k(-i')} = \bm{s})} \right) \right] \nonumber \\
& = \frac{1}{K^2}\sum_{k =1}^{K}(G_{1k} + G_{2k}),
\end{align*}
where
\begin{align*}
 G_{1k}
 &= \sum_{i=1}^{n_k}\Var\left( \tilde{Y}_{ki}(1,p_k)\sum_{\bm{s}} \frac{\mathbbm{1}(A_{ki} = 1,\bm{A}_{k(-i)} = \bm{s})\pi_k(\bm{A}_{k(-i)} = \bm{s}\mid A_{ki} = 1)}{f_k(A_{ki} = 1, \bm{A}_{k(-i)} = \bm{s})}\right) \nonumber\\
 & = \sum_{i=1}^{n_k} \tilde{Y}^2_{ki}(1,p_k) \left[\sum_{\bm{s}}\frac{\Var\{\mathbbm{1}(A_{ki} = 1,\bm{A}_{k(-i)} = \bm{s})\}\pi^2_k(\bm{A}_{k(-i)} = \bm{s}\mid A_{ki} = 1)}{f^2_k(A_{ki} = 1, \bm{A}_{k(-i)} = \bm{s})} \right.\nonumber \\
 & \hspace{1.5in} + \mathop{\sum\sum}_{\bm{s}\neq \bm{s}'}\Cov\{\mathbbm{1}(A_{ki} = 1,\bm{A}_{k(-i)} = \bm{s}), \mathbbm{1}(A_{ki} = 1,\bm{A}_{k(-i)} = \bm{s}')\} \nonumber\\
 & \hspace{2in} \left. \times \frac{\pi_k(\bm{A}_{k(-i)} = \bm{s}\mid A_{ki} = 1)\pi_k(\bm{A}_{k(-i)} = \bm{s}'\mid A_{ki} = 1)}{f_k(A_{ki} = 1, \bm{A}_{k(-i)} = \bm{s})f_k(A_{ki} = 1, \bm{A}_{k(-i)} = \bm{s}')}\right] \nonumber \\
 & = \sum_{i=1}^{n_k} \tilde{Y}^2_{ki}(1,p_k) \left[ \sum_{\bm{s}} \pi^2_k(\bm{A}_{k(-i)} = \bm{s}\mid A_{ki} = 1) \frac{1 - f_k(A_{ki} = 1, \bm{A}_{k(-i)} = \bm{s})}{f_k(A_{ki} = 1, \bm{A}_{k(-i)} = \bm{s})} \right.\nonumber\\
 & \hspace{1.5in} \left. - \mathop{\sum\sum}_{\bm{s}\neq \bm{s}'}\pi_k(\bm{A}_{k(-i)} = \bm{s}\mid A_{ki} = 1)\pi_k(\bm{A}_{k(-i)} = \bm{s}'\mid A_{ki} = 1)\right] \nonumber \\
 &= \sum_{i=1}^{n_k} c_{i,1}\tilde{Y}^2_{ki}(1,p_k), \\
 G_{2k} &= \mathop{\sum\sum}_{i \neq i'}\Cov\left(\tilde{Y}_{ki}(1,p_k)\sum_{\bm{s}} \frac{\mathbbm{1}(A_{ki} = 1,\bm{A}_{k(-i)} = \bm{s})\pi_k(\bm{A}_{k(-i)} = \bm{s}\mid A_{ki} = 1)}{f_k(A_{ki} = 1, \bm{A}_{k(-i)} = \bm{s})}, \right. \nonumber\\
& \hspace{1.1in}\left. \tilde{Y}_{ki'}(1,p_k)\sum_{\bm{s}} \frac{\mathbbm{1}(A_{ki'} = 1,\bm{A}_{k(-i')} = \bm{s})\pi_k(\bm{A}_{k(-i')} = \bm{s}\mid A_{ki'} = 1)}{f_k(A_{ki'} = 1, \bm{A}_{k(-i')} = \bm{s})} \right) \nonumber \\
&= \mathop{\sum\sum}_{i \neq i'} \tilde{Y}_{ki}(1,p_k) \tilde{Y}_{ki'}(1,p_k) \nonumber \\
& \hspace{1in} \times \left[ \sum_{\bm{s}}\frac{\pi_k(\bm{A}_{k(-i)} = \bm{s}\mid A_{ki} = 1)\pi_k(\bm{A}_{k(-i')} = \bm{s}\mid A_{ki'} = 1)}{f_k(A_{ki} = 1, \bm{A}_{k(-i)} = \bm{s})f_k(A_{ki'} = 1, \bm{A}_{k(-i')} = \bm{s})} \right.\nonumber \\
& \hspace{1.5in} \times \Cov\{\mathbbm{1}(A_{ki} = 1,\bm{A}_{k(-i)} = \bm{s}),\mathbbm{1}(A_{ki'} = 1,\bm{A}_{k(-i')} = \bm{s})\} \nonumber\\
& \hspace{1.25in} + \mathop{\sum\sum}_{\bm{s} \neq \bm{s}'}\frac{\pi_k(\bm{A}_{k(-i)} = \bm{s}\mid A_{ki} = 1)\pi_k(\bm{A}_{k(-i')} = \bm{s}'\mid A_{ki'} = 1)}{f_k(A_{ki} = 1, \bm{A}_{k(-i)} = \bm{s})f_k(A_{ki'} = 1, \bm{A}_{k(-i')} = \bm{s}')} \nonumber\\
& \hspace{1.5in} \times \Cov\{\mathbbm{1}(A_{ki} = 1,\bm{A}_{k(-i)} = \bm{s}),\mathbbm{1}(A_{ki'} = 1,\bm{A}_{k(-i')} = \bm{s}')\}\Big] \nonumber \\
& = \mathop{\sum\sum}_{i \neq i'} \tilde{Y}_{ki}(1,p_k) \tilde{Y}_{ki'}(1,p_k) \times \frac{1}{\pi_k(A_{ki} = 1)\pi_k(A_{ki'} = 1)} \nonumber \\
& \hspace{1in} \times \left[ \sum_{\bm{s}}\frac{\pi_k(A_{ki} = 1,\bm{A}_{k(-i)} = \bm{s})\pi_k(A_{ki'} = 1, \bm{A}_{k(-i')} = \bm{s})}{f_k(A_{ki} = 1, \bm{A}_{k(-i)} = \bm{s})f_k(A_{ki'} = 1, \bm{A}_{k(-i')} = \bm{s})} \right.\nonumber \\
& \hspace{1.5in} \times \{f_k(A_{ki} = 1,\bm{A}_{k(-i)} = \bm{s},A_{ki'} = 1,\bm{A}_{k(-i')} = \bm{s}) \nonumber \\
& \hspace{1.5in} - f_k(A_{ki} = 1, \bm{A}_{k(-i)} = \bm{s})f_k(A_{ki'} = 1, \bm{A}_{k(-i')} = \bm{s})\} \nonumber \\
& \hspace{1.5in} + \mathop{\sum\sum}_{\bm{s} \neq \bm{s}'}\frac{\pi_k(A_{ki} = 1,\bm{A}_{k(-i)} = \bm{s})\pi_k(A_{ki'} = 1,\bm{A}_{k(-i')} = \bm{s}')}{f_k(A_{ki} = 1, \bm{A}_{k(-i)} = \bm{s})f_k(A_{ki'} = 1, \bm{A}_{k(-i')} = \bm{s}')} \nonumber \\
& \hspace{2in} \times \{- f_k(A_{ki} = 1, \bm{A}_{k(-i)} = \bm{s})f_k(A_{ki'} = 1, \bm{A}_{k(-i')} = \bm{s}')\} \Big]\nonumber \\
& = \mathop{\sum\sum}_{i \neq i'} \tilde{Y}_{ki}(1,p_k) \tilde{Y}_{ki'}(1,p_k) \times \frac{1}{\pi_k(A_{ki} = 1)\pi_k(A_{ki'} = 1)} \nonumber \\
& \quad \times \Big[\sum_{\bm{s}}\frac{\pi_k(A_{ki} = 1,\bm{A}_{k(-i)} = \bm{s})\pi_k(A_{ki'} = 1,\bm{A}_{k(-i)} = \bm{s})}{f_k(A_{ki} = 1,\bm{A}_{k(-i)} = \bm{s})f_k(A_{ki'} = 1,\bm{A}_{k(-i')} = \bm{s})}f_k(A_{ki} = 1,\bm{A}_{k(-i)} = \bm{s},A_{ki'} = 1,\bm{A}_{k(-i')} = \bm{s})
\nonumber \\
& \quad - \{\sum_{\bm{s}}\pi_k(A_{ki'} = 1,\bm{A}_{k(-i)} = \bm{s})\}\{\sum_{\bm{s}'}\pi_k(A_{ki'} = 1,\bm{A}_{k(-i)} = \bm{s}')\}\Big] \nonumber \\
& = \mathop{\sum\sum}_{i \neq i'} d_{ii',1}\tilde{Y}_{ki}(1,p_k) \tilde{Y}_{ki'}(1,p_k).
\end{align*}
\qed

\subsubsection{Proof of Theorem \ref{thm_varDE}}
\begin{align*}
    \Var(\widehat{\text{DE}}^\pi_{\text{HT}}) &= \Var(\hat{\mu}^\pi_{1,\text{HT}}) + \Var(\hat{\mu}^\pi_{0,\text{HT}}) - 2 \Cov(\hat{\mu}^\pi_{1,\text{HT}},\hat{\mu}^\pi_{0,\text{HT}}).
\end{align*}
Now, following similar steps as in the proof of Theorem \ref{thm_var1},
\begin{align*}
&\Cov(\hat{\mu}^\pi_{1,\text{HT}},\hat{\mu}^\pi_{0,\text{HT}}) \nonumber\\
    &= \Cov\left(\frac{1}{K}\sum_{k=1}^{K}\frac{1}{|\mathcal{S}_k|}\sum_{i=1}^{n_k} \frac{\mathbbm{1}(A_{ki} = 1)\pi_k(\bm{A}_{k(-i)}\mid A_{i} = 1)}{f_k(\bm{A}_k)}\tilde{Y}_{ki}(1,p_k),\right.\nonumber\\
    & \quad \quad \left.\frac{1}{K}\sum_{k=1}^{K}\frac{1}{|\mathcal{S}_k|}\sum_{i=1}^{n_k} \frac{\mathbbm{1}(A_{ki} = 0)\pi_k(\bm{A}_{k(-i)}\mid A_{i} = 0)}{f_k(\bm{A}_k)}\tilde{Y}_{ki}(0,p_k) \right) \nonumber\\
    & = \frac{1}{K^2}\sum_{k=1}^{K}\frac{1}{|\mathcal{S}_k|^2}(G_{k1} + G_{k2}),
\end{align*}
where
\begin{align*}
  G_{1k} 
 & = \sum_{i=1}^{n_k}\tilde{Y}_{ki}(1,p_k)\tilde{Y}_{ki}(0,p_k)\sum_{\bm{s}}\sum_{\bm{s}'}\frac{\pi_k(\bm{A}_{k(-i)} = \bm{s}\mid A_{ki} = 1)\pi_k(\bm{A}_{k(-i)} = \bm{s}'\mid A_{ki} = 0)}{f_k(A_{ki} = 1,\bm{A}_{k(-i)} = \bm{s})f_k(A_{ki} = 0,\bm{A}_{k(-i)} = \bm{s}')} \nonumber\\
 & \hspace{2in} \times \Cov\{\mathbbm{1}(A_{ki} = 1,\bm{A}_{k(-i)} = \bm{s}), \mathbbm{1}(A_{ki} = 0,\bm{A}_{k(-i)} = \bm{s}')\} \nonumber\\
 &= -\sum_{i=1}^{n_k}\tilde{Y}_{ki}(1,p_k)\tilde{Y}(0,p_k), \\
 G_{2k}
& = \mathop{\sum\sum}_{i\neq i'}\tilde{Y}_{ki}(1,p_k)\tilde{Y}_{ki'}(0,p_k)\sum_{\bm{s}}\sum_{\bm{s}'}\frac{\pi_k(\bm{A}_{k(-i)} = \bm{s}\mid A_{ki} = 1)\pi_k(\bm{A}_{k(-i')} = \bm{s}'\mid A_{ki'} = 0)}{f_k(A_{ki} = 1,\bm{A}_{k(-i)} = \bm{s})f_k(A_{ki'} = 0,\bm{A}_{k(-i')} = \bm{s}')}\nonumber\\
& \hspace{1in} \times\Cov\{\mathbbm{1}(A_{ki} = 1,\bm{A}_{k(-i)} = \bm{s}), \mathbbm{1}(A_{ki'} = 0,\bm{A}_{k(-i')} = \bm{s}')\} \nonumber \\
&= \mathop{\sum\sum}_{i\neq i'}\frac{\tilde{Y}_{ki}(1,p_k)\tilde{Y}_{ki'}(0,p_k)}{\pi_k(A_{ki}=1)\pi_k(A_{ki'}=0)}\sum_{\bm{s}}\sum_{\bm{s}'}\pi_k(A_{ki} = 1, \bm{A}_{k(-i)} = \bm{s})\pi_k(A_{ki'} = 0,\bm{A}_{k(-i')} = \bm{s}') \nonumber\\
& \hspace{1in} \times \left\{\frac{f_k(A_{ki} = 1,\bm{A}_{k(-i)} = \bm{s}, A_{ki'} = 0,\bm{A}_{k(-i')} = \bm{s}')}{f_k(A_{ki} = 1,\bm{A}_{k(-i)} = \bm{s})f_k(A_{ki'} = 0,\bm{A}_{k(-i')} = \bm{s}')} - 1\right\}. \nonumber \\
& = \mathop{\sum\sum}_{i\neq i'}g_{ii'}{\tilde{Y}_{ki}(1,p_k)\tilde{Y}_{ki'}(0,p_k)}.
\end{align*}
\qed


\subsubsection{Proof of Proposition \ref{prop_additive_mu1}}

Without loss of generality, we set $a = 1$. Now, from Theorem \ref{thm_generalvar}, we have,
\begin{align}
\Lambda_{1,k,j} \ = \ & \sum_{\bm{s}}\frac{\pi^2_{k}(\bm{A}_{k(-i^*)} = \bm{s}\mid A_{ki^*} = 1)}{f_k(A_{ki^*} = 1, \bm{A}_{k(-i^*)} = \bm{s})}Y^2_{kj}(A_{ki^*} = 1, \bm{A}_{k(-i^*)} = \bm{s}) \nonumber\\
& \quad - \left\{\sum_{\bm{s}}\pi_{k}(\bm{A}_{k(-i^*)} = \bm{s}\mid A_{ki^*} = 1)Y_{kj}(A_{ki^*} = 1, \bm{A}_{k(-i^*)} = \bm{s}) \right\}^2
\end{align}
Now, write $\bm{a}_{kj} = (a_{kj1},....,a_{kjn_k})^\top$ as the vector of assignments in cluster $k$ corresponding to $A_{ki^*} = 1$ and $\bm{A}_{k(-i^*)} = \bm{s}$. 
\begin{align}
    &\sum_{\bm{s}}\pi_{k}(\bm{A}_{k(-i^*)} = \bm{s}\mid A_{ki^*} = 1)Y_{kj}(A_{ki^*} = 1, \bm{A}_{k(-i^*)} = \bm{s}) \nonumber\\
    & = \sum_{\bm{s}}\pi_{k}(\bm{A}_{k(-i^*)} = \bm{s}\mid A_{ki^*} = 1)\left(\beta^{(0)}_{kj} + \sum_{i=1}^{n_k}\beta^{(i)}_{kj}a_{kji}\right). \nonumber\\
    & = \beta^{(0)}_{kj} + \sum_{i=1}^{n_k}\beta^{(i)}_{kj}\sum_{\bm{s}:a_{kji = 1}}\pi_{k}(\bm{A}_{k(-i^*)} = \bm{s}\mid A_{ki^*} = 1) \nonumber\\
    & = \beta^{(0)}_{kj} + \sum_{i=1}^{n_k}\beta^{(i)}_{kj}\pi_k(A_{ki} = 1|A_{ki^*} = 1) = (1,\bm{\pi}^\top_k(\cdot|A_{ki^*} = 1))\tilde{\bm{\beta}}_{kj}. \label{eq_A4.6_1}
\end{align}
Therefore,
\begin{align}
\Lambda_{1,k,j} \ = \ & \sum_{\bm{s}}\frac{\pi^2_{k}(\bm{A}_{k(-i^*)} = \bm{s}\mid A_{ki^*} = 1)}{f_k(A_{ki^*} = 1, \bm{A}_{k(-i^*)} = \bm{s})}\left\{(1,\bm{a}^\top_j)\tilde{\bm{\beta}}_{kj} \right\}^2 - \left\{(1,\bm{\pi}^\top_k(\cdot|A_{ki^*} = 1))\tilde{\bm{\beta}}_{kj} \right\}^2.
\end{align}
Moreover, 
\begin{align}
    \Lambda_{2,k,j,j'} \ = \  = & \sum_{\tilde{\bm{s}}}\frac{\pi^2_k(A_{ki^{*}} = 1, A_{ki^{*'}} = 1, \bm{A}_{k(-i^*, -i^{*'})} = \tilde{\bm{s}})}{f_k(A_{ki^{*}} = 1, A_{ki^{*'}} = 1, \bm{A}_{k(-i^*, -i^{*'})} = \tilde{\bm{s}})\pi_k(A_{ki^*} = 1)\pi_k(A_{ki^{*'}} = 1)} \nonumber\\
& \quad  \times Y_{kj}(A_{ki^{*}} = 1, A_{ki^{*'}} = 1, \bm{A}_{k(-i^*,-i^{*'})} = \tilde{\bm{s}})Y_{kj'}(A_{ki^{*}} = 1,A_{ki^{*'}} = 1, \bm{A}_{k(-i^*,-i^{*'})} = \tilde{\bm{s}}) \nonumber\\
& - \left\{\sum_{\bm{s}}\pi_{k}(\bm{A}_{k(-i^*)} = \bm{s}\mid A_{ki^*} = 1)Y_{kj}(A_{ki^*} = 1, \bm{A}_{k(-i^*)} = \bm{s}) \right\} \nonumber \\ 
& \times \left\{\sum_{\bm{s}}\pi_{k}(\bm{A}_{k(-i^{*'})} = \bm{s}\mid A_{ki^{*'}} = 1)Y_{kj'}(A_{ki^{*'}} = 1, \bm{A}_{k(-i^{*'})} = \bm{s}) \right\} \nonumber \\
\ = \ &\sum_{\tilde{\bm{s}}}\frac{\mathbbm{\pi}^2_k(A_{ki^{*}} = 1, A_{ki^{*'}} = 1, \bm{A}_{k(-i^{*},-i^{*'})} = \tilde{\bm{s}}) \left\{(1,\bm{a}^\top_{kjj'})\tilde{\bm{\beta}}_{kj} \right\}\left\{(1,\bm{a}^\top_{kjj'})\tilde{\bm{\beta}}_{kj'} \right\}}{f_k(A_{ki^*} = 1, A_{ki^{*'}} = 1, \bm{A}_{k(-i^{*},-i^{*'})} = \tilde{\bm{s}})\pi_k(A_{ki^*} = 1)\pi_k(A_{ki^{*'}} = 1)} \nonumber\\
    &\hspace{0.5cm} - \left\{(1,\bm{\pi}^\top_k(\cdot|A_{ki^*} = 1))\tilde{\bm{\beta}}_{kj} \right\}\left\{(1,\bm{\pi}^\top_k(\cdot|A_{ki^{*'}} = 1))\tilde{\bm{\beta}}_{kj'} \right\},
\end{align}
where the last equality holds due to Equation \ref{eq_A4.6_1}.

\qed
\subsubsection{Proof of Theorem \ref{thm_conservative_mu1}}

\begin{align}
\hat{\mu}^\pi_{a,\text{HT}} & = \frac{1}{K}\sum_{k=1}^{K}\left\{\frac{1}{|\mathcal{S}_k|}\sum_{j \in \mathcal{S}_k}\mathbbm{1}(A_{ki^*} = a)\frac{\pi_{k}(\bm{A}_{k(-i^*)} \mid A_{ki^*} = a)}{f_k(\bm{A}_k)}Y^{\text{obs}}_{kj}\right\} \nonumber \\
& = \sum_{k=1}^{K}\left\{\sum_{j \in \mathcal{S}_k}\frac{\mathbbm{1}(A_{ki^*} = a)}{K|\mathcal{S}_k|}\frac{\pi_{k}(\bm{A}_{k(-i^*)} \mid A_{ki^*} = a)}{f_k(\bm{A}_k)}(1,\bm{A}^\top_{k})\tilde{\bm{\beta}}_{kj}\right\} 
\end{align}
Thus, $\hat{\mu}^\pi_{a,\text{HT}}$ is of the form $\hat{\mu}^\pi_{a,\text{HT}} = \sum_{k}\sum_{j}\tilde{\bm{\beta}}^\top_{kj}\bm{\psi}_{kj}$, for some $|\mathcal{S}_k|\times 1$ vector $\bm{\psi}_{kj}$. Denoting $\tilde{\bm{\beta}}_k = (\tilde{\bm{\beta}}^\top_{k1},...,\tilde{\bm{\beta}}^\top_{k|\mathcal{S}_k|})^\top$ and $\bm{\Psi}_k = (\bm{\psi}^\top_{k1},...,\bm{\psi}^\top_{k|\mathcal{S}_k|})^\top$, we have
\begin{align}
    \Var(\hat{\mu}^\pi_{a,\text{HT}}) & = \sum_{k=1}^{K}\tilde{\bm{\beta}}_k \Var(\bm{\Psi}_k)\tilde{\bm{\beta}}_k.
    \label{eq_A10_4.8_1}
\end{align}
We write $\Var(\hat{\mu}^\pi_{a,\text{HT}})$ as a function of the $\tilde{\bm{\beta}}_{k}$s, i.e., $\Var(\bm{\Psi}_k) = h(\tilde{\bm{\beta}}_{1},...,\tilde{\bm{\beta}}_{K})$. Equation \ref{eq_A10_4.8_1} shows that $h(\cdot)$ is convex. Now, let $\hat{\tilde{\bm{\beta}}}_k = (\hat{\tilde{\bm{\beta}}}^\top_{k1},...,\hat{\tilde{\bm{\beta}}}^\top_{k|\mathcal{S}_k|})^\top$ be the estimated $\tilde{\bm{\beta}}_k$. By construction, $\hat{\tilde{\bm{\beta}}}_k$ is unbiased for ${\tilde{\bm{\beta}}}_k$. 
\begin{align}
    \E\{\widehat{\Var}(\hat{\mu}^\pi_{a,\text{HT}})\} = \E\{h(\tilde{\hat{\bm{\beta}}}_{1},...,\hat{\tilde{\bm{\beta}}}_{K}) \} \geq h(\tilde{\bm{\beta}}_{1},...,\tilde{\bm{\beta}}_{K}) = \Var(\hat{\mu}^\pi_{a,\text{HT}}),
\end{align}
where the inequality holds due to Jensen's inequality.

\qed
\subsubsection{Proof of Theorem \ref{thm_general_additive}}

We can write
\begin{align}
\hat{\tau}^\pi_{\text{HT}} \ = \ & \frac{1}{K}\sum_{k=1}^{K}\left\{\frac{1}{|\mathcal{S}_k|}\sum_{j \in \mathcal{S}_k}\frac{\pi_{kj}(\bm{A}_k)}{f_k(\bm{A}_k)}(1,\bm{A}^\top_k)\tilde{\bm{\beta}}_{kj}\right\} \label{eq_A10_4.8_1_2}\\
\ = \  & \frac{1}{K}\sum_{k=1}^{K}\left\{\frac{1}{|\mathcal{S}_k|}\sum_{j \in \mathcal{S}_k}\sum_{\bm{a} \in \text{Supp}(f_k)}\mathbbm{1}(\bm{A}_k = \bm{a})\frac{\pi_{kj}(\bm{a})}{f_k(\bm{a})}(1,\bm{a}^\top)\tilde{\bm{\beta}}_{kj}\right\} \nonumber\\
\ = \ & \frac{1}{K}\sum_{k=1}^{K}\frac{1}{|\mathcal{S}_k|}\sum_{\bm{a} \in \text{Supp}(f_k)}\mathbbm{1}(\bm{A}_k = \bm{a})\left\{\sum_{j \in \mathcal{S}_k}\frac{\pi_{kj}(\bm{a})}{f_k(\bm{a})}(1,\bm{a}^\top)\tilde{\bm{\beta}}_{kj}\right\} \nonumber\\
\ = \ & \frac{1}{K}\sum_{k=1}^{K}\frac{1}{|\mathcal{S}_k|}\sum_{\bm{a} \in \text{Supp}(f_k)}\mathbbm{1}(\bm{A}_k = \bm{a})\zeta_k(\bm{a}),
\end{align}
where $\zeta_k(\bm{a}) = \sum_{j \in \mathcal{S}_k}\frac{\pi_{kj}(\bm{a})}{f_k(\bm{a})}(1,\bm{a}^\top)\tilde{\bm{\beta}}_{kj}$.
Therefore, 
\begin{align}
\Var(\hat{\tau}^\pi_{\text{HT}}) =   \frac{1}{K^2}\sum_{k = 1}^{K} \frac{1}{|\mathcal{S}_k|^2}\left[\sum_{\bm{a}}f_k(\bm{a}) \{1 - f_k(\bm{a})\}\zeta^2_k(\bm{a}) - \mathop{\sum\sum}_{\bm{a} \neq \bm{a}'}f_k(\bm{a})f_k(\bm{a}')\zeta_k(\bm{a})\zeta_k(\bm{a}') \right].
\end{align}
Next, Equation \ref{eq_A10_4.8_1_2} implies that $\hat{\tau}^\pi_{\text{HT}}$ has the form $\hat{\tau}^\pi_{\text{HT}} = \sum_{k}\sum_{j}\tilde{\bm{\beta}}^\top_{kj}\bm{\psi}_{kj}$, for some random vectors $\bm{\psi}_{kj}$. Thus, following the proof of Theorem \ref{thm_conservative_mu1}, we conclude that the variance estimator based on $\hat{\tilde{\bm{\beta}}}_{kj}$ is conservative.

\qed

\subsection{Additional theoretical results}
\label{sec_add_theory}

\subsubsection{Partial identification of the variance of the Horvitz-Thompson estimator}
\label{sec_partial_ht}

In this section, we focus on the partial identification of the variance of $\hat{\mu}^\pi_{a,\text{HT}}$. To this end, one possible approach is to assume that for a given treatment condition of the key-intervention unit $i^*$, if the two assignment vectors of the remaining intervention units are sufficiently similar, then the corresponding potential outcomes of unit $j$ should also be similar.
Specifically, within each cluster, we can partially identify $\Var(\hat{\mu}^\pi_{a,\text{HT}})$ by assuming the following form of Lipschitz continuity on the potential outcomes.  
\begin{assumption}[Lipschitz potential outcomes] \normalfont
For all $j \in \mathcal{S}_k$ and $\bm{s},\bm{s}' \in \{0,1\}^{n_k  -1}$,
\begin{equation*}
|Y_{kj}(A_{ki^*} = a, \bm{A}_{k(-i^*)} = \bm{s}) - Y_{kj}(A_{ki^*} = a, \bm{A}_{k(-i^*)} = \bm{s}')| \leq C(n_k) \times d(\bm{s},\bm{s}'),
\end{equation*}
where $C(n_k)$ is a function of $n_k$ that decreases to zero as $n_k \to \infty$, and $d(\cdot, \cdot)$ is a distance measure on $\mathbb{R}^{n_k -1}$.   
\label{assump_lipschitz}
\end{assumption}
For example, if $C(n_k) = c/\sqrt{n_k}$ for some constant $c>0$ and $d(\cdot, \cdot)$ is the $L_1$ distance, then Assumption \ref{assump_lipschitz} implies that $|Y_{kj}(A_{ki^*} = a, \bm{A}_{k(-i^*)} = \bm{s}) - Y_{kj}(A_{ki^*} = a, \bm{A}_{k(-i^*)} = \bm{s}')|$ is bounded by $c/\sqrt{n_k}$ times the number of intervention units for which the treatment assignment vectors $\bm{s}$ and $\bm{s}'$ differ.
Note that Assumption~\ref{assump_lipschitz} is implied by Assumption~\ref{assump_stratified}, and hence the former is a weaker assumption. Assumption~\ref{assump_lipschitz} is also related to the approximate neighborhood interference assumption of \cite{leung2022causal}, which assumes that in expectation, the difference in potential outcome for unit $j$ under any perturbation of the assignment of units sufficiently far apart (in terms of the path distance in the network) is negligible. Instead of focusing on the units being perturbed, Assumption~\ref{assump_lipschitz} focuses on the amount of perturbation and posits that small perturbations in assignments imply small differences in the potential outcomes.

Proposition \ref{prop_partial} shows that if the potential outcomes are bounded, then under Assumptions~\ref{assump_partial}--\ref{assump_lipschitz}, we can partially identify $\Var(\hat{\mu}^\pi_{a,\text{HT}})$ in completely randomized experiments. 
\begin{proposition}[Partial identification of the variance] \normalfont
Consider a completely randomized experiment in each cluster $k \in \{1,2,...,K\}$, where $n_{ka}$ and $p_k$ are the number and proportion of intervention units assigned to treatment $a \in \{0,1\}$, respectively. Under Assumptions \ref{assump_partial}--\ref{assump_lipschitz}, $\pi_k(\cdot) = f_k(\cdot)$, and bounded potential outcomes,
\begin{align}
\Var(\hat{\mu}^\pi_{a,\text{HT}}) \ \leq \ & \frac{1}{K^2}\sum_{k = 1}^{K} \frac{1}{|\mathcal{S}_k|^2} \left[ \left\{\frac{C(n_k)^2 |\mathcal{S}_k|}{2\binom{n_k-1}{n_{ka}-1}}^2 + \frac{C(n_k)^2 |\mathcal{S}_k| (|\mathcal{S}_k|- 1)}{2{\binom{n_k-2}{n_{ka}-2}}^2}\right\} \mathop{\sum\sum}_{\bm{s}\neq \bm{s}'}d^2(\bm{s},\bm{s}') \nonumber \right.\\
&+ \frac{1-p_k}{p_k {\binom{n_k - 1}{n_{ka} - 1}}} \sum_{j \in \mathcal{S}_k} \sum_{\bm{s}}Y^2_{kj}(A_{ki^*} = a, A_{k(-i^*)} = \bm{s})  \nonumber \\
  &+ \mathop{\sum\sum}_{j \neq j' \in \mathcal{S}_k} \left(\frac{1}{{\binom{n_k - 1}{n_{ka} - 1}}p_k} - \frac{1}{{\binom{n_k - 2}{n_{ka} - 2}}}\right)\sum_{\bm{s}} Y_{kj}(A_{ki^*} = a, A_{ki^{*'}} = a, A_{k(-i^*,i^{*'})} = \bm{s}) \nonumber \\ 
  & \hspace{7cm} Y_{kj'}(A_{ki^*} = a, A_{ki^{*'}} = a, A_{k(-i^*,i^{*'})} = \bm{s}) \Bigg]  \label{eq_lipupper}
\end{align}
\label{prop_partial}
\end{proposition}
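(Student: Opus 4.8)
The plan is to begin from the exact variance in Theorem~\ref{thm_generalvar}, which already splits $\Var(\hat\mu^\pi_{a,\text{HT}})$ into the per-unit pieces $\Lambda_{1,k,j}$ and the cross-unit pieces $\Lambda_{2,k,j,j'}$, and then specialize to the completely randomized design with $\pi_k(\cdot)=f_k(\cdot)$. Under this design every assignment probability is uniform, so the weights collapse to explicit binomial factors: $f_k(\bm{a})=1/\binom{n_k}{n_{ka}}$, the conditional law $\pi_k(\bm{A}_{k(-i^*)}=\bm{s}\mid A_{ki^*}=a)$ is uniform over the $\binom{n_k-1}{n_{ka}-1}$ admissible $\bm{s}$, and $\pi_k(A_{ki}=a)=p_k$. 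Substituting these turns the leading ("first") terms of both $\Lambda_{1,k,j}$ and $\Lambda_{2,k,j,j'}$ into the common coefficient $1/\{p_k\binom{n_k-1}{n_{ka}-1}\}$ multiplying, respectively, sums of squared potential outcomes and sums of same-assignment products $Y_{kj}(a,a,\bm{s})Y_{kj'}(a,a,\bm{s})$, all of which are identifiable. The only non-identifiable pieces are the subtracted squared means, which upon expansion produce products of potential outcomes evaluated at two \emph{different} assignments.

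First I would dispatch the single-unit cross terms in $\Lambda_{1,k,j}$ via the elementary identity $-Y_{kj}(a,\bm{s})Y_{kj}(a,\bm{s}')=-\tfrac12\{Y^2_{kj}(a,\bm{s})+Y^2_{kj}(a,\bm{s}')\}+\tfrac12\{Y_{kj}(a,\bm{s})-Y_{kj}(a,\bm{s}')\}^2$, then invoking Assumption~\ref{assump_lipschitz} to bound the squared difference by $C(n_k)^2 d^2(\bm{s},\bm{s}')$. Weighting the $-\tfrac12(\cdot+\cdot)$ pieces uniformly and merging them with the leading square term, a short binomial computation collapses the combined coefficient to $(1-p_k)/\{p_k\binom{n_k-1}{n_{ka}-1}\}$, matching the bound, while the Lipschitz remainder summed over $j\in\mathcal{S}_k$ yields the term $C(n_k)^2|\mathcal{S}_k|/\{2\binom{n_k-1}{n_{ka}-1}^2\}\sum\sum_{\bm{s}\neq\bm{s}'}d^2(\bm{s},\bm{s}')$.

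The analogous step for $\Lambda_{2,k,j,j'}$ is the crux and the main obstacle, because the subtracted term is a product of two \emph{different} units' averages, $-\bar Y_{kj,a}\bar Y_{kj',a}$, for which the single-unit identity fails. The device is to symmetrize over the ordered pairs $(j,j')$ and over the two orderings of the assignment pair, converting each cross product into a same-assignment product minus $\{Y_{kj}(\bm{s})-Y_{kj}(\bm{s}')\}\{Y_{kj'}(\bm{s})-Y_{kj'}(\bm{s}')\}$; each factor is then controlled by Assumption~\ref{assump_lipschitz}, so the product is bounded by $C(n_k)^2 d^2(\bm{s},\bm{s}')$ and, summed over the $|\mathcal{S}_k|(|\mathcal{S}_k|-1)$ ordered pairs, produces the second Lipschitz term. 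The delicate point is that $\bar Y_{kj,a}$ conditions on $A_{ki^*}=a$ whereas $\bar Y_{kj',a}$ conditions on $A_{ki^{*'}}=a$, so the two averages run over \emph{different} conditioning sets; to legitimately apply a unit-level Lipschitz bound (which requires the relevant key-intervention unit's treatment held fixed) and to extract a clean same-assignment product, one must restrict to the common support in which both $i^*$ and $i^{*'}$ are treated. This restriction is precisely what introduces the $\binom{n_k-2}{n_{ka}-2}$ normalization and the $-1/\binom{n_k-2}{n_{ka}-2}$ coefficient on $\sum_{\bm{s}}Y_{kj}(a,a,\bm{s})Y_{kj'}(a,a,\bm{s})$ in the stated cross term; carefully tracking these combinatorial factors, and verifying that the off-common-support contributions are absorbed into the error, is where most of the labor lies. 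Boundedness of the potential outcomes enters only as a regularity condition guaranteeing that the retained square and product sums and the Lipschitz remainders are finite, so the inequality is non-vacuous. Reassembling the three groups of terms—the retained squares, the retained same-assignment cross products, and the two Lipschitz remainders—over clusters with the $1/(K^2|\mathcal{S}_k|^2)$ prefactor then reproduces~\eqref{eq_lipupper}.
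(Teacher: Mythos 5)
Your overall architecture coincides with the paper's: start from Theorem~\ref{thm_generalvar}, specialize the weights to the completely randomized design, absorb the identifiable pieces into the coefficients $(1-p_k)/\{p_k\binom{n_k-1}{n_{ka}-1}\}$ and $1/\{\binom{n_k-1}{n_{ka}-1}p_k\}-1/\binom{n_k-2}{n_{ka}-2}$, and control the non-identifiable cross-assignment products with Assumption~\ref{assump_lipschitz}. Your treatment of $\Lambda_{1,k,j}$ via the polarization identity is algebraically the same as the paper's use of $\frac{1}{n}\sum_{i}(x_i-\bar x)^2=\frac{1}{2n^2}\sum\sum_{i\neq j}(x_i-x_j)^2$, and for $\Lambda_{2,k,j,j'}$ your symmetrization over pairs of assignments is an equivalent substitute for the paper's Cauchy--Schwarz step applied to the centered cross-products on the common support.

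There is, however, a genuine gap at exactly the point you flag as ``where most of the labor lies,'' and your plan as stated would not close it. Write $\bar Y_{kj}(a,\pi_k)$ for the mean of $Y_{kj}$ over assignments with $A_{ki^*}=a$ (normalized by $\binom{n_k-1}{n_{ka}-1}$) and $\bar Y_{kj}(a,a,\pi_k)$ for the mean over the common support $\{A_{ki^*}=a,\,A_{ki^{*'}}=a\}$ (normalized by $\binom{n_k-2}{n_{ka}-2}$). Your symmetrization, carried out on the common support, produces $-\bar Y_{kj}(a,a,\pi_k)\bar Y_{kj'}(a,a,\pi_k)$, whereas $\Lambda_{2,k,j,j'}$ actually subtracts $-\bar Y_{kj}(a,\pi_k)\bar Y_{kj'}(a,\pi_k)$. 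The discrepancy between these two products is not a Lipschitz remainder of the form $C(n_k)^2 d^2(\bm{s},\bm{s}')$ and cannot be ``absorbed into the error'' by combinatorial bookkeeping alone. The paper bounds it by the add-and-subtract inequality $|\bar Y_{kj}(a,a,\pi_k)\bar Y_{kj'}(a,a,\pi_k)-\bar Y_{kj}(a,\pi_k)\bar Y_{kj'}(a,\pi_k)|\le M\{|\bar Y_{kj}(a,a,\pi_k)-\bar Y_{kj}(a,\pi_k)|+|\bar Y_{kj'}(a,a,\pi_k)-\bar Y_{kj'}(a,\pi_k)|\}$, where $M$ is the uniform bound on the potential outcomes, and then shows each mean difference is of order $C(n_k)$ times a ratio of binomial coefficients, hence negligible for large $n_k$. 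So boundedness is not, as you assert, merely a regularity condition guaranteeing finiteness: it is the substantive ingredient that controls this remainder, and by disclaiming it you leave the crux step without a mechanism. (It is worth noting that even the paper's own handling of this term is asymptotic in $n_k$, so a fully finite-sample completion of your plan would additionally need to make that negligibility quantitative.)
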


\begin{proof}

Without loss of generality, we set $a = 1$. The proof for the case of $a =0$ is analogous.  Let $\bar{Y}_{kj}(1,\pi_k) = \sum_{\bm{s}}\pi_{k}(\bm{A}_{k(-i^*)} = \bm{s}\mid A_{ki^*} = 1)Y_{kj}(A_{ki^*} = 1, \bm{A}_{k(-i^*)} = \bm{s})$. For a completely randomized experiment, $\bar{Y}_{kj}(1,\pi_k) = \sum_{\bm{s}}Y_{kj}(A_{ki^*} = 1, \bm{A}_{k(-i^*)} = \bm{s})/ \binom{n_k-1}{n_{k1}-1}$. Now, using the notation of Theorem~\ref{thm_generalvar}, we get
\begin{align*}
    & \Lambda_{1,k,j} \\
    = & \frac{1}{\binom{n_k}{n_{k1}}p^2}\sum_{\bm{s}}Y^2_{kj}(A_{ki^*} = 1, \bm{A}_{k(-i^*)} = \bm{s}) - \bar{Y}^2_{kj}(1,\pi_k) \nonumber \\
   =  & \frac{1}{\binom{n_k -1}{n_{k1} -1}}\sum_{\bm{s}}\{Y_{kj}(A_{ki^*} = 1, \bm{A}_{k(-i^*)} = \bm{s}) - \bar{Y}_{kj}(1,\pi_k)\}^2 + \frac{1-p_k}{p_k\binom{n_k -1}{n_{k1} -1}}\sum_{\bm{s}}Y^2_{kj}(A_{ki^*} = 1, \bm{A}_{k(-i^*)} = \bm{s}) \nonumber \\
= & \frac{1}{2\binom{n_k -1}{n_{k1} -1}^2}\mathop{\sum\sum}_{\bm{s}\neq \bm{s}'}\{Y_{kj}(A_{ki^*} = 1, \bm{A}_{k(-i^*)} = \bm{s}) - Y_{kj}(A_{ki^*} = 1, \bm{A}_{k(-i^*)} = \bm{s}')\}^2 \\
& \quad + \frac{1-p_k}{p_k\binom{n_k -1}{n_{k1} -1}}\sum_{\bm{s}}Y^2_{kj}(A_{ki^*} = 1, \bm{A}_{k(-i^*)} = \bm{s}) \nonumber \\
\leq & \frac{C(n_k)^2}{2\binom{n_k -1}{n_{k1} -1}^2}\mathop{\sum\sum}_{\bm{s}\neq \bm{s}'}d^2(\bm{s},\bm{s}') + \frac{1-p_k}{p_k\binom{n_k -1}{n_{k1} -1}}\sum_{\bm{s}}Y^2_{kj}(A_{ki^*} = 1, \bm{A}_{k(-i^*)} = \bm{s}).
\end{align*}
The last equality holds since, for $n$ data points $x_1,....,x_n$ with mean $\bar{x}$, $\frac{1}{n}\sum_{i=1}^{n}(x_i - \bar{x})^2 = \frac{1}{2n^2}\mathop{\sum\sum}_{i \neq j}(x_i - x_j)^2$. The final inequality holds due to the Lipschitz condition. Therefore, we have
\begin{align}
    \sum_{j \in \mathcal{S}_k}\Lambda_{1,k,j} \leq \frac{C(n_k)^2|\mathcal{S}_k|}{2\binom{n_k -1}{n_{k1} -1}^2}\mathop{\sum\sum}_{\bm{s}\neq \bm{s}'}d^2(\bm{s},\bm{s}') + \frac{1-p_k}{p_k\binom{n_k -1}{n_{k1} -1}}\sum_{j \in \mathcal{S}_k}\sum_{\bm{s}}Y^2_{kj}(A_{ki^*} = 1, \bm{A}_{k(-i^*)} = \bm{s}). \label{eq_A3_0}
\end{align}
Next, for two units $j$ and $j'$, with key-intervention units $i^*$ and $i^{*'}$, denote 
$$\bar{Y}_{kj}(1,1,\pi_k) = \frac{1}{\binom{n_k-2}{n_{k1} -2}}\sum_{\bm{s}}Y_{kj}(A_{ki^*} = 1,A_{ki^{*'}} = 1, \bm{A}_{k(-i^*,-i^{*'})} = \bm{s}).$$ 
Now,
\begin{align}
    &\Lambda_{2,k,j,j'} \nonumber\\
    = & \frac{1}{\binom{n_k}{n_{k1}}p^2_k}\sum_{\bm{s}}Y_{kj}(A_{ki^*} = 1,A_{ki^{*'}} = 1, \bm{A}_{k(-i^*,-i^{*'})} = \bm{s})Y_{kj'}(A_{ki^*} = 1,A_{ki^{*'}} = 1, \bm{A}_{k(-i^*,-i^{*'})} = \bm{s}) \nonumber\\
    & \hspace{.7in} - \bar{Y}_{kj}(1,\pi_k)\bar{Y}_{kj'}(1,\pi_k). \nonumber \\
    = & \frac{1}{\binom{n_k-2}{n_{k1}-2}}\sum_{\bm{s}}\{Y_{kj}(A_{ki^*} = 1,A_{ki^{*'}} = 1, \bm{A}_{k(-i^*,-i^{*'})} = \bm{s}) - \bar{Y}_{kj}(1,1,\pi_k) \} \nonumber \\
    & \hspace{1in} \times \{Y_{kj'}(A_{ki^*} = 1,A_{ki^{*'}} = 1, \bm{A}_{k(-i^*,-i^{*'})} = \bm{s})- \bar{Y}_{kj'}(1,1,\pi_k) \} \nonumber \\
    & \quad + \left(\frac{1}{\binom{n_k-1}{n_{k1}-1}p_k} - \frac{1}{\binom{n_k-2}{n_{k1}-2}}\right) \sum_{\bm{s}}Y_{kj}(A_{ki^*} = 1,A_{ki^{*'}} = 1, \bm{A}_{k(-i^*,-i^{*'})} = \bm{s}) \nonumber\\
    & \hspace{1in} \times Y_{kj'}(A_{ki^*} = 1,A_{ki^{*'}} = 1, \bm{A}_{k(-i^*,-i^{*'})} = \bm{s}) \nonumber\\
    & \quad + \{\bar{Y}_{kj}(1,1,\pi_k)\bar{Y}_{kj'}(1,1,\pi_k)- \bar{Y}_{kj}(1,\pi_k)\bar{Y}_{kj'}(1,\pi_k)\}
    \label{eq_A3_1}
\end{align}
Using Cauchy-Schwarz inequality, the first term in Equation~\eqref{eq_A3_1} can be upper-bounded as follows,
\begin{align*}
 &   \sum_{\bm{s}}\{Y_{kj}(A_{ki^*} = 1,A_{ki^{*'}} = 1, \bm{A}_{k(-i^*,-i^{*'})} = \bm{s}) - \bar{Y}_{kj}(1,1,\pi_k) \} \nonumber\\
 & \hspace{1in} \times \{Y_{kj'}(A_{ki^*} = 1,A_{ki^{*'}} = 1, \bm{A}_{k(-i^*,-i^{*'})} = \bm{s}) - \bar{Y}_{kj'}(1,1,\pi_k) \}  \nonumber\\
 & \leq  \sqrt{\sum_{\bm{s}}\{Y_{kj}(A_{ki^*} = 1,A_{ki^{*'}} = 1, \bm{A}_{k(-i^*,-i^{*'})} = \bm{s}) - \bar{Y}_{kj}(1,1,\pi_k)\}^2} \nonumber\\
 & \hspace{1in} \times \sqrt{\sum_{\bm{s}}\{Y_{kj'}(A_{ki^*} = 1,A_{ki^{*'}} = 1 \mid \bm{A}_{k(-i^*,-i^{*'})} = \bm{s}) - \bar{Y}_{kj'}(1,1,\pi_k)\}^2} \nonumber\\
 & \leq \frac{C(n_k)^2}{2\binom{n_k-2}{n_{k1}-2}}\mathop{\sum\sum}_{\bm{s} \neq \bm{s}'}d(\bm{s},\bm{s}'),
\end{align*}
where the last inequality follows from similar steps as in the derivation for $\Lambda_{1,k,j}$. Now, suppose that the potential outcomes are bounded by a constant $M$. The third term can be written as,
\begin{align*}
   & |\bar{Y}_{kj}(1,1,\pi_k)\bar{Y}_{kj'}(1,1,\pi_k)- \bar{Y}_{kj}(1,\pi_k)\bar{Y}_{kj'}(1,\pi_k)| \nonumber \\
   = \ & |\bar{Y}_{kj}(1,1,\pi_k)\bar{Y}_{kj'}(1,1,\pi_k)- \bar{Y}_{kj}(1,\pi_k)\bar{Y}_{kj'}(1,1,\pi_k) + \bar{Y}_{kj}(1,\pi_k)\bar{Y}_{kj'}(1,1,\pi_k)- \bar{Y}_{kj}(1,\pi_k)\bar{Y}_{kj'}(1,\pi_k)| \nonumber \\
   \leq \ &  M (|\bar{Y}_{kj}(1,1,\pi_k) - \bar{Y}_{kj}(1,\pi_k)| + |\bar{Y}_{kj'}(1,1,\pi_k) - \bar{Y}_{kj'}(1,\pi_k)|).
\end{align*}

Now, $|\bar{Y}_{kj}(1,1,\pi_k) - \bar{Y}_{kj}(1,\pi_k)| = \frac{\binom{n_k - 2}{n_{k1} -1}}{\binom{n_k - 1}{n_{k1} -1}}|\bar{Y}_{kj}(1,1,\pi_k) - \bar{Y}_{kj}(1,0,\pi_k)| \leq \frac{\binom{n_k - 2}{n_{k1} -1}}{\binom{n_k - 1}{n_{k1} -1}}C(n_k)$, which is negligible for sufficiently large $n_k$. Therefore, for large $n_k$, we can write,
\begin{align}
\mathop{\sum\sum}_{j \neq j'}\Lambda_{2,k,j,j'} &\leq \frac{|\mathcal{S}_k|(|\mathcal{S}_k| - 1)C(n_k)^2}{2\binom{n_k-2}{n_{k1}-2}^2}\mathop{\sum\sum}_{\bm{s} \neq \bm{s}'}d(\bm{s},\bm{s}') \nonumber\\
    & \quad +\mathop{\sum\sum}_{j \neq j'} \left(\frac{1}{\binom{n_k-1}{n_{k1}-1}p_k} - \frac{1}{\binom{n_k-2}{n_{k1}-2}}\right) \sum_{\bm{s}}Y_{kj}(A_{ki^*} = 1,A_{ki^{*'}} = 1,  \bm{A}_{k(-i^*,-i^{*'})} = \bm{s}) \nonumber \\
    & \hspace{1in} \times Y_{kj'}(A_{ki^*} = 1,A_{ki^{*'}} = 1, \bm{A}_{k(-i^*,-i^{*'})} = \bm{s}).
    \label{eq_A3_2}
\end{align}
The proof of the proposition follows from Equations~\eqref{eq_A3_0}~and~\eqref{eq_A3_2}. 
\end{proof}

The upper bound in Equation \ref{eq_lipupper} is estimable. To see this, note that in the second term $\sum_{j \in \mathcal{S}_k} \sum_{\bm{s}}Y^2_{kj}(A_{ki^*} = a, A_{k(-i^*)} = \bm{s})$ can be estimated by $\sum_{j \in \mathcal{S}_k} \frac{\mathbbm{1}(A_{ki^*} = a)}{f_k(\bm{A}_k)}Y_{kj}^2$ without bias. Moreover, in the last term 
$\mathop{\sum\sum}_{j \neq j' \in \mathcal{S}_k} \sum_{\bm{s}} Y_j(A_{ki^*} = a, A_{ki^{*'}} = a, A_{k(-i^*,i^{*'})} = \bm{s}) Y_{j'}(A_{ki^*} = a, A_{ki^{*'}} = a, A_{k(-i^*,i^{*'})} = \bm{s})$ can be estimated without bias by $\mathop{\sum\sum}_{j \neq j' \in \mathcal{S}_k} \frac{\mathbbm{1}(A_{ki^*} = a, A_{ki^{*'}} = a)}{f_k(\bm{A}_k)}Y_{kj}Y_{kj'}$. Therefore, a conservative estimator of $\Var(\hat{\mu}^\pi_{a,\text{HT}})$ is given by
\begin{align}
\widehat{\Var}(\hat{\mu}^\pi_{a,\text{HT}}) 
= & \frac{1}{K^2}\sum_{k = 1}^{K}\frac{1}{|\mathcal{S}_k|^2} \left[\left\{\frac{C(n_k)^2 |\mathcal{S}_k|}{{\binom{n_k-1}{n_{ka}-1}}} + \frac{C(n_k)^2 |\mathcal{S}_k| (|\mathcal{S}_k|- 1)}{2{\binom{n_k-2}{n_{ka}-2}}^2}\right\} \mathop{\sum\sum}_{\bm{s}\neq \bm{s}'}d^2(\bm{s},\bm{s}') \nonumber \right.\\
&+ \frac{1-p_k}{p_k {\binom{n_k - 1}{n_{ka} - 1}}} \sum_{j \in \mathcal{S}_k} \frac{\mathbbm{1}(A_{ki^*} = a)}{f_k(\bm{A}_k)}Y_{kj}^2 \nonumber \\
& \left. + \mathop{\sum\sum}_{j \neq j' \in \mathcal{S}_k} \left(\frac{1}{{\binom{n_k - 1}{n_{ka} - 1}}p_k} - \frac{1}{{\binom{n_k - 2}{n_{ka} - 2}}}\right)\frac{\mathbbm{1}(A_{ki^*} = a, A_{ki^{*'}} = a)}{f_k(\bm{A}_k)}Y_{kj}Y_{kj'} \right]. 
  \label{eq_lipupper2}
\end{align}
\subsubsection{Variance estimation under stratified interference for a completely randomized design}

In this section, we consider the scenario where we use the same complete randomization for both the hypothetical intervention $\pi_k(\cdot)$ and the actual intervention $f_k(\cdot)$. In this special case, the exact variances of $\hat{\mu}^\pi_{a,\text{HT}}$ and $\widehat{\text{DE}}^\pi_{\text{HT}}$
and their estimators can be greatly simplified as shown in the next proposition. 
\begin{proposition}[Variance and its estimation under the same complete randomization] \normalfont
Let $f_k(\cdot) = \pi_k(\cdot)$ and both correspond to a completely randomized experiment with $n_{ka}$ intervention units assigned to treatment $a$. Then, under Assumptions~\ref{assump_partial}--\ref{assump_fixedprop}, and for $a \in \{0,1\}$,
\begin{enumerate}[label=(\alph*),leftmargin=*]
\item $\Var(\hat{\mu}_{a,\text{HT}}) = \frac{1}{K^2}\sum_{k=1}^{K} \left(\frac{n_k}{|\mathcal{S}_k|}\right)^2 \left(1- \frac{n_{ka}}{n_k}\right)\frac{\tilde{V}^2_{ka}}{n_{ka}}$, where $\tilde{V}^2_{ka} = \frac{1}{n_k-1}\sum_{i \in \mathcal{I}_k}\left\{\tilde{Y}_{ki}(a,p_k) - \bar{\tilde{Y}}(a,p_k)\right\}^2$, $\bar{\tilde{Y}}(a,p_k) = \frac{1}{n_k}\sum_{i \in \mathcal{I}_k}\tilde{Y}_{ki}(a,p_k)$. An unbiased estimator of $\Var(\hat{\mu}_{a,\text{HT}})$ is $$\widehat{\Var}(\hat{\mu}_{a,\text{HT}}) = \frac{1}{K^2}\sum_{k=1}^{K} \left(\frac{n_k}{|\mathcal{S}_k|}\right)^2 \left(1- \frac{n_{ka}}{n_k}\right)\frac{\hat{\tilde{V}}^2_{ka}}{n_{ka}},$$ where $\hat{\tilde{V}}^2_{ka} = \frac{1}{n_k-1}\sum_{i \in \mathcal{I}_k: A_{ki} = a}(\tilde{Y}^{\text{obs}}_{ki} - \bar{\tilde{Y}}_k)^2$, $\bar{\tilde{Y}}_k = \frac{1}{n_k}\sum_{i \in \mathcal{I}_k: A_{ki} = a}\tilde{Y}^{\text{obs}}_{ki}$. 
\item $\Var(\widehat{\text{DE}}^\pi_{\text{HT}}) = \frac{1}{K^2}\sum_{k=1}^{K} \left(\frac{n_k}{|\mathcal{S}_k|}\right)^2 \left(\frac{\tilde{V}^2_{k1}}{n_{k1}} + \frac{\tilde{V}^2_{k0}}{n_{k0}} - \frac{\tilde{V}^2_{k01}}{n_{k}}\right)$, where $\tilde{V}^2_{k1}$ and $\tilde{V}^2_{k,0}$ are as in part (a), and $\tilde{V}^2_{k01} = \frac{1}{n_k-1}\sum_{i \in \mathcal{I}_k}\{(\tilde{Y}_{ki}(1,p_k)-\tilde{Y}_{ki}(0,p_k)) - (\bar{\tilde{Y}}(1,p_k) - \bar{\tilde{Y}}(0,p_k)\}^2$. A conservative estimator of $\Var(\widehat{\text{DE}}^\pi)$ is $$\widehat{\Var}(\widehat{\text{DE}}^\pi)\ = \ \frac{1}{K^2}\sum_{k=1}^{K} \left(\frac{n_k}{m_k}\right)^2 \left(\frac{\hat{\tilde{V}}^2_{k1}}{n_{k1}} + \frac{\hat{\tilde{V}}^2_{k0}}{n_{k0}}\right).$$ 
\end{enumerate}
\label{prop_htvar1}
\end{proposition}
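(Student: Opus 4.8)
The plan is to specialize the machinery already established in Theorem~\ref{thm_var1} to the case $\pi_k(\cdot) = f_k(\cdot)$ under complete randomization, where everything collapses to the classical Neyman finite-population setup with the pooled potential outcomes $\tilde{Y}_{ki}(a,p_k)$ playing the role of ordinary potential outcomes. First I would use the pooled-outcome representation of $\hat{\mu}^\pi_{a,\text{HT}}$ derived in the proof of Theorem~\ref{thm_var1}. When $\pi_k = f_k$, the ratio $\pi_k(\bm{A}_{k(-i)} = \bm{s} \mid A_{ki} = a)/f_k(A_{ki}=a,\bm{A}_{k(-i)}=\bm{s})$ reduces to $1/f_k(A_{ki}=a)$, and under complete randomization $f_k(A_{ki}=a) = n_{ka}/n_k$. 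Hence $\hat{\mu}^\pi_{a,\text{HT}} = \frac{1}{K}\sum_{k=1}^{K}\frac{n_k}{|\mathcal{S}_k|}\,\bar{T}_{ka}$, where $\bar{T}_{ka} = n_{ka}^{-1}\sum_{i:A_{ki}=a}\tilde{Y}_{ki}(a,p_k)$ is the sample mean of the observed pooled outcomes among the group-$a$ intervention units. Under Assumptions~\ref{assump_stratified} and~\ref{assump_fixedprop} the pooled potential outcomes $\tilde{Y}_{ki}(a,p_k)$ are fixed constants, so the only randomness in $\bar{T}_{ka}$ is which units fall in group $a$.

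Next I would recognize that complete randomization draws the group-$a$ set as a simple random sample (without replacement) of size $n_{ka}$ from the $n_k$ fixed values $\{\tilde{Y}_{ki}(a,p_k): i \in \mathcal{I}_k\}$. Invoking Assumption~\ref{assump_indep} to split the variance across clusters, part~(a) follows from the classical identity $\Var(\bar{T}_{ka}) = (1 - n_{ka}/n_k)\tilde{V}^2_{ka}/n_{ka}$, and the stated unbiased estimator follows from the unbiasedness of the sample variance under simple random sampling \citep{imbens2015causal}. Equivalently --- and this is the route I would actually write out to keep the argument self-contained --- I would substitute $\pi_k = f_k$ into the coefficients of Theorem~\ref{thm_var1}, obtaining $c_{i,a} = 1/f_k(A_{ki}=a) - 1 = (n_k - n_{ka})/n_{ka}$ and $d_{ii',a} = f_k(A_{ki}=a,A_{ki'}=a)/\{f_k(A_{ki}=a)f_k(A_{ki'}=a)\} - 1 = -(n_k-n_{ka})/\{n_{ka}(n_k-1)\}$, and then collapse the resulting quadratic form using $\sum_{i\neq i'}T_iT_{i'} = (\sum_i T_i)^2 - \sum_i T_i^2$ to recover $\frac{n_k-n_{ka}}{n_{ka}}\cdot n_k \tilde{V}^2_{ka}$, which is exactly the claimed expression.

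For part~(b) I would write $\widehat{\text{DE}}^\pi_{\text{HT}} = \frac{1}{K}\sum_k \frac{n_k}{|\mathcal{S}_k|}(\bar{T}_{k1} - \bar{T}_{k0})$, a cluster-wise difference-in-means in the pooled outcomes. Its variance is the classical Neyman three-term formula $\tilde{V}^2_{k1}/n_{k1} + \tilde{V}^2_{k0}/n_{k0} - \tilde{V}^2_{k01}/n_k$, which I would obtain either directly or by specializing Theorem~\ref{thm_varDE} (where, under $\pi_k = f_k$ and complete randomization, $g_{ii'} = 1/(n_k-1)$). The term $\tilde{V}^2_{k01}$ is the finite-population variance of the individual pooled treatment effects $\tilde{Y}_{ki}(1,p_k) - \tilde{Y}_{ki}(0,p_k)$; since these two outcomes are never jointly observed for any $i$, this term is not estimable. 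Because $\tilde{V}^2_{k01} \geq 0$, dropping it yields the upward-biased (hence conservative) estimator stated in part~(b), whose two surviving sample-variance terms are each unbiased.

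The main obstacle is not any single computation but careful bookkeeping at two points. The first is justifying the collapse to simple random sampling: one must confirm that pooling over target units sharing a key-intervention unit --- where the cross-unit dependence of Section~\ref{sec_stratified} resided --- has fully absorbed that dependence, so that the residual randomness in $\bar{T}_{ka}$ is exactly that of sampling $n_{ka}$ of the $n_k$ intervention units. The second is the identifiability argument in part~(b): one must correctly isolate $\tilde{V}^2_{k01}/n_k$ as the non-identifiable piece and verify it is non-negative, so that its omission is provably conservative rather than merely convenient.
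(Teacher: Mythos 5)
Your proposal is correct and follows essentially the same route as the paper, which likewise obtains parts (a) and (b) by substituting $\pi_k(\cdot)=f_k(\cdot)$ and the complete-randomization probabilities into Theorems~\ref{thm_var1} and~\ref{thm_varDE} and then appeals to standard properties of completely randomized designs for the unbiasedness and conservativeness of the estimators. Your explicit evaluations $c_{i,a}=(n_k-n_{ka})/n_{ka}$, $d_{ii',a}=-(n_k-n_{ka})/\{n_{ka}(n_k-1)\}$, $g_{ii'}=1/(n_k-1)$, and the collapse of the quadratic form to $n_k(n_k-n_{ka})\tilde{V}^2_{ka}/n_{ka}$ are all accurate and simply fill in details the paper leaves implicit.
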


\begin{proof}
    The variance expressions in (a) and (b) follows from Theorems \ref{thm_var1} and \ref{thm_varDE} after setting $\pi_k(\cdot) = f_k(\cdot)$ and $f_k(\bm{a}) = \mathbbm{1}(\bm{a}^\top\bm{1} = n_{k1})/\binom{n_k}{n_{k1}}$. Moreover, the unbiasedness and conservativeness of the variance estimators in (a) and (b) follow from the properties of complete randomized design (see, e.g., \citealt{imbens2015causal}, Chapter 6). 
\end{proof}
The structure of the variance of $\hat{\mu}_{a,\text{HT}}$ resembles that of the estimated population mean in stratified random sampling without replacement, where the clusters act as the strata (see, e.g., \citealt{fuller2009sampling}, Chapter 1). Similarly, the variance of $\widehat{\text{DE}}^\pi_{\text{HT}}$ resembles that of the variance of the difference-in-means statistic in a stratified randomized experiment (see, e.g., \citealt{imbens2015causal}, Chapter 9). Moreover, the estimator of $\Var(\widehat{\text{DE}}^\pi_{\text{HT}})$ is unbiased if  $\tilde{Y}_{ki}(1,p_k)-\tilde{Y}_{ki}(0,p_k)$ in constant for all $i$, i.e., when the unit level causal effects based on the pooled potential outcomes are constant. This condition is analogous to the condition of unbiasedness for the standard Neyman's estimator of variance.

\subsubsection{Variance estimation of the indirect effect under stratified interference}

\begin{proposition}[Variance of the indirect effect estimator] \normalfont
 Under Assumptions~\ref{assump_partial},~\ref{assump_identify},~\ref{assump_indep},~\ref{assump_stratified},~and~\ref{assump_fixedprop}, 
\begin{align*}
\Var(\widehat{\text{IE}}^{\pi, \tilde{\pi}}_{a,\text{HT}})  = \frac{1}{K^2}\sum_{k=1}^{K}\frac{1}{|\mathcal{S}_k|^2}\left[\sum_{i=1}^{n_k}\tilde{c}_{i,a} \tilde{Y}^2_{ki}(a,p_k) + \mathop{\sum\sum}_{i \neq i'}\tilde{d}_{ii',a}\tilde{Y}_{ki}(a,p_k)\tilde{Y}_{ki'}(a,p_k)\right]. 
\end{align*}
where
$$\begin{aligned}
  \tilde{c}_{i,a} = & \sum_{\bm{s}}\frac{\left\{\frac{\pi_k(A_{ki} = a, \bm{A}_{k(-i)} = \bm{s})}{\pi_k(A_{ki} = a)} - \frac{\tilde{\pi}_k(A_{ki} = a, \bm{A}_{k(-i)} = \bm{s})}{\tilde{\pi}_k(A_{ki} = a)}\right\}^2}{f_k(A_{ki} = a, \bm{A}_{k(-i)} = \bm{s})},\\
  \tilde{d}_{ii',a} = & \sum_{\bm{s}} \frac{\left\{\frac{\pi_k(A_{ki} = a, \bm{A}_{k(-i)} = \bm{s})}{\pi_k(A_{ki} = a)} - \frac{\tilde{\pi}_k(A_{ki} = a, \bm{A}_{k(-i)} = \bm{s})}{\tilde{\pi}_k(A_{ki} = a)}\right\} \left\{\frac{\pi_k(A_{ki'} = a, \bm{A}_{k(-i')} = \bm{s})}{\pi_k(A_{ki'} = a)} - \frac{\tilde{\pi}_k(A_{ki'} = a, \bm{A}_{k(-i')} = \bm{s})}{\tilde{\pi}_k(A_{ki'} = a)}\right\}  }{f_k(A_{ki} = a, \bm{A}_{k(-i)} = \bm{s})} \\
  & \quad \times \mathbbm{1}(A_{ki} = a,A_{ki'} = a, \bm{A}_{k(-i)} = \bm{s},\bm{A}_{k(-i')} = \bm{s}).
  \end{aligned}
$$
\label{prop_varIE}
\end{proposition}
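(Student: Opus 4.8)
The plan is to mirror the strategy used in the proof of Theorem~\ref{thm_var1}, exploiting the fact that under stratified interference (Assumption~\ref{assump_stratified}) together with the fixed-proportion design (Assumption~\ref{assump_fixedprop}) each Horvitz--Thompson estimator collapses to a linear combination of the pooled potential outcomes. Writing $\widehat{\text{IE}}^{\pi,\tilde{\pi}}_{a,\text{HT}} = \hat{\mu}^\pi_{a,\text{HT}} - \hat{\mu}^{\tilde{\pi}}_{a,\text{HT}}$ and invoking the representation derived in the proof of Theorem~\ref{thm_var1}, I would express
\[
\widehat{\text{IE}}^{\pi,\tilde{\pi}}_{a,\text{HT}} = \frac{1}{K}\sum_{k=1}^{K}\frac{1}{|\mathcal{S}_k|}\sum_{i=1}^{n_k}\tilde{Y}_{ki}(a,p_k)\,W_{ki},
\]
where $W_{ki} = \sum_{\bm{s}}\mathbbm{1}(A_{ki}=a,\bm{A}_{k(-i)}=\bm{s})\{\pi_k(\bm{A}_{k(-i)}=\bm{s}\mid A_{ki}=a)-\tilde{\pi}_k(\bm{A}_{k(-i)}=\bm{s}\mid A_{ki}=a)\}/f_k(A_{ki}=a,\bm{A}_{k(-i)}=\bm{s})$ is the difference of the two inverse-probability weights attached to pooled unit $i$. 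In the finite-population framework the $\tilde{Y}_{ki}(a,p_k)$ are fixed constants, so all randomness resides in the $W_{ki}$.

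Next, I would use independence of the assignments across clusters (Assumption~\ref{assump_indep}) to decompose the variance into per-cluster contributions, so that within each cluster the variance of $\sum_i \tilde{Y}_{ki}(a,p_k)W_{ki}$ is the quadratic form $\sum_i \tilde{Y}^2_{ki}(a,p_k)\Var(W_{ki}) + \mathop{\sum\sum}_{i\neq i'}\tilde{Y}_{ki}(a,p_k)\tilde{Y}_{ki'}(a,p_k)\Cov(W_{ki},W_{ki'})$. It then remains to identify $\tilde{c}_{i,a}=\Var(W_{ki})$ and $\tilde{d}_{ii',a}=\Cov(W_{ki},W_{ki'})$. The first observation is that $\E(W_{ki})=0$: taking expectations, each indicator contributes a factor $f_k(A_{ki}=a,\bm{A}_{k(-i)}=\bm{s})$ that cancels the denominator, leaving $\sum_{\bm{s}}\{\pi_k(\bm{s}\mid A_{ki}=a)-\tilde{\pi}_k(\bm{s}\mid A_{ki}=a)\}=1-1=0$ by overlap (Assumption~\ref{assump_identify}(a)) and the fact that both conditional distributions sum to one. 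Hence $\Var(W_{ki})=\E(W_{ki}^2)$ and $\Cov(W_{ki},W_{ki'})=\E(W_{ki}W_{ki'})$.

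For the diagonal term, since exactly one sub-vector $\bm{s}$ is realized, the cross terms in the square of $W_{ki}$ vanish and $\E(W_{ki}^2)=\sum_{\bm{s}}\{\pi_k(\bm{s}\mid A_{ki}=a)-\tilde{\pi}_k(\bm{s}\mid A_{ki}=a)\}^2/f_k(A_{ki}=a,\bm{A}_{k(-i)}=\bm{s})$, which is precisely $\tilde{c}_{i,a}$. For the off-diagonal term, both $W_{ki}$ and $W_{ki'}$ are nonzero only on the event $\{A_{ki}=a,\,A_{ki'}=a\}$; conditioning on the full realized assignment, the two inverse-probability denominators both coincide with the joint probability $f_k(A_{ki}=a,A_{ki'}=a,\bm{A}_{k(-i,-i')}=\bm{s})$, and summing over the assignment of the remaining $n_k-2$ units produces $\tilde{d}_{ii',a}$. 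Substituting $\tilde{c}_{i,a}$ and $\tilde{d}_{ii',a}$ into the per-cluster quadratic form and averaging over clusters then gives the claimed expression.

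The main obstacle will be the bookkeeping in this last step: the weight $W_{ki}$ is indexed by the $(n_k-1)$-dimensional sub-vector $\bm{A}_{k(-i)}$ while $W_{ki'}$ is indexed by $\bm{A}_{k(-i')}$, so I must carefully align these two sub-vectors on the common event where both $i$ and $i'$ receive treatment level $a$, and verify that the product of the two conditional-probability factors, once the denominators combine into the joint assignment probability, reproduces exactly the stated form of $\tilde{d}_{ii',a}$. Everything else—the reduction to pooled outcomes and the per-cluster variance decomposition—is routine once the representation above is in hand, and the pooling device $\tilde{Y}_{ki}(a,p_k)=\sum_{j\in\mathcal{S}_k}\mathbbm{1}(j\leftarrow i)Y_{kj}(a,p_k)$ automatically absorbs the dependence created when several target units share the same key-intervention unit.
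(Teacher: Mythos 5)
Your proposal is correct and follows essentially the same route as the paper's proof: both rewrite $\widehat{\text{IE}}^{\pi,\tilde{\pi}}_{a,\text{HT}}$ under Assumptions~\ref{assump_stratified}--\ref{assump_fixedprop} as a per-cluster linear combination of the pooled potential outcomes with random weights $W_{ki}=\sum_{\bm{s}}\gamma_{\bm{s}i}\mathbbm{1}(A_{ki}=a,\bm{A}_{k(-i)}=\bm{s})$ and then evaluate the resulting quadratic form. Your observation that $\E(W_{ki})=0$ is a mild streamlining (the paper instead expands the variances and covariances of the indicators and cancels the $(\sum_{\bm{s}}\delta_{\bm{s}})^2$ term at the end), but the decomposition, the identification of $\tilde{c}_{i,a}=\Var(W_{ki})$ and $\tilde{d}_{ii',a}=\Cov(W_{ki},W_{ki'})$, and the alignment of the sub-vectors on the joint event are exactly the paper's argument.
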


\begin{proof}
    
Without loss of generality, we set $a = 1$. Following similar steps as in the proof of Theorem~\ref{thm_var1}, we get
\begin{align*}
\Var(\widehat{\text{IE}}^{\pi,\Tilde{\pi}}_{1,\text{HT}})    
& = \frac{1}{K^2}\sum_{k=1}^{K}\frac{1}{|\mathcal{S}_k|^2}\Var\left\{\sum_{i=1}^{n_k}\sum_{\bm{s}}\gamma_{\bm{s}i}\mathbbm{1}(A_{ki} = 1, A_{k(-i) = \bm{s}}) \tilde{Y}_{ki}(1,p_k) \right\},
\end{align*}
where 
\begin{align*}
\gamma_{\bm{s}i} &= \frac{\pi_k(\bm{A}_{k(-i)} = \bm{s}\mid A_{ki} = 1) - \tilde{\pi}_k(\bm{A}_{k(-i)} = \bm{s}\mid A_{ki} = 1)}{f_k(A_{ki} = 1,\bm{A}_{k(-i)} = \bm{s})}
\end{align*}
Therefore,
\begin{align*}
\Var(\widehat{\text{IE}}^{\pi,\Tilde{\pi}}_{1,\text{HT}})    
& = \frac{1}{K^2}\sum_{k=1}^{K}\frac{1}{|\mathcal{S}_k|^2}(G_1+G_2),
\end{align*}
where
\begin{align*}
   G_1 
   & = \sum_{i=1}^{n_k} \tilde{Y}^2_{ki}(1,p_k)\Big[\sum_{\bm{s]}}\gamma^2_{\bm{s}i}f_k(A_{ki} = 1,\bm{A}_{k(-i)} = \bm{s})\{1 - f_k(A_{ki} = 1,\bm{A}_{k(-i)} = \bm{s})\} \nonumber\\
    & \quad - \mathop{\sum\sum}_{\bm{s} \neq \bm{s}'}\gamma_{\bm{s}i}\gamma_{\bm{s}'i}f_k(A_{ki} = 1,\bm{A}_{k(-i)} = \bm{s})f_k(A_{ki} = 1,\bm{A}_{k(-i)} = \bm{s}') \nonumber\\
    & = \sum_{i=1}^{n_k} \tilde{Y}^2_{ki}(1,p_k) \Big[\sum_{\bm{s}}\{\pi_k(\bm{A}_{k(-i)} = \bm{s}\mid A_{ki} = 1) - \tilde{\pi}_k(\bm{A}_{k(-i)} = \bm{s}\mid A_{ki} = 1)\}^2 \nonumber\\
    & \hspace{1in}  \times \frac{\{1-f_k(A_{ki} = 1,\bm{A}_{k(-i)} = \bm{s})\}}{f_k(A_{ki} = 1,\bm{A}_{k(-i)} = \bm{s})} \nonumber\\
    & \quad - \mathop{\sum\sum}_{\bm{s} \neq \bm{s}'}\{\pi_k(\bm{A}_{k(-i)} = \bm{s}\mid A_{ki} = 1) - \tilde{\pi}_k(\bm{A}_{k(-i)} = \bm{s}\mid A_{ki} = 1)\} \nonumber\\
    & \hspace{1in} \times \pi_k(\bm{A}_{k(-i)} = \bm{s}'\mid A_{ki} = 1) - \tilde{\pi}_k(\bm{A}_{k(-i)} = \bm{s}'\mid A_{ki} = 1)\}\Big] \nonumber\\
    &= \sum_{i=1}^{n_k} \tilde{c}_{i,1}\tilde{Y}^2_{ki}(1,p_k), \\
G_2 
&= \mathop{\sum\sum}_{i \neq i'}\tilde{Y}_{ki}(1,p_k)\tilde{Y}_{ki'}(1,p_k)\Big[ \sum_{\bm{s}}\gamma_{\bm{s}i}\gamma_{\bm{s}i'}\{\mathbbm{1}(A_{ki} = 1,\bm{A}_{k(-i)} = \bm{s}, A_{ki'} = 1,\bm{A}_{k(-i')} = \bm{s}) \nonumber\\
& \hspace{3in} \times f_k(A_{ki} = 1,\bm{A}_{k(-i)} = \bm{s}) \nonumber\\
    & \quad - f_k(A_{ki} = 1,\bm{A}_{k(-i)} = \bm{s})f_k(A_{ki'} = 1,\bm{A}_{k(-i')} = \bm{s})\} \nonumber\\
    & \quad + \mathop{\sum\sum}_{\bm{s} \neq \bm{s}'}\gamma_{\bm{s}i}\gamma_{\bm{s}'i}\{\mathbbm{1}(A_{ki} = 1,\bm{A}_{k(-i)} = \bm{s}, A_{ki'} = 1,\bm{A}_{k(-i')} = \bm{s}')\nonumber\\
   & \hspace{1in} \times f_k(A_{ki} = 1,\bm{A}_{k(-i)} = \bm{s}) - f_k(A_{ki} = 1,\bm{A}_{k(-i)} = \bm{s})f_k(A_{ki'} = 1,\bm{A}_{k(-i')} = \bm{s}')\}\Big] \nonumber\\
   &= \mathop{\sum\sum}_{i \neq i'}\tilde{Y}_{ki}(1,p_k)\tilde{Y}_{ki'}(1,p_k)\Big[\sum_{\bm{s}}\{\pi_k(\bm{A}_{k(-i)} = \bm{s}\mid A_{ki} = 1) - \tilde{\pi}_k(\bm{A}_{k(-i)} = \bm{s}\mid A_{ki} = 1)\} \nonumber \\
   & \hspace{1in} \times \{\pi_k(\bm{A}_{k(-i')} = \bm{s}\mid A_{ki'} = 1) - \tilde{\pi}_k(\bm{A}_{k(-i')} = \bm{s}|A_{ki'} = 1)\} \nonumber\\
   & \hspace{1in} \times \Big\{\frac{\mathbbm{1}(A_{ki} = 1,\bm{A}_{k(-i)} = \bm{s}, A_{ki'} = 1,\bm{A}_{k(-i')} = \bm{s})}{f_k(A_{ki} = 1,\bm{A}_{k(-i)} = \bm{s})}-1\Big\} \nonumber\\
   & \quad - \mathop{\sum\sum}_{\bm{s} \neq \bm{s}'}\{\pi_k(\bm{A}_{k(-i)} = \bm{s}\mid A_{ki} = 1) - \tilde{\pi}_k(\bm{A}_{k(-i)} = \bm{s}\mid A_{ki} = 1)\} \nonumber\\
   & \hspace{1in} \times \{\pi_k(\bm{A}_{k(-i')} = \bm{s}'\mid A_{ki'} = 1) - \tilde{\pi}_k(\bm{A}_{k(-i')} = \bm{s}'\mid A_{ki'} = 1)\} \nonumber\\
   &= \mathop{\sum\sum}_{i \neq i'}\tilde{d}_{ii',1}\tilde{Y}_{ki}(1,p_k)\tilde{Y}_{ki'}(1,p_k).
\end{align*}
\end{proof}

\subsubsection{Additional results on variance estimation under additive interference}
\label{appsec_var_additive}

In this section, we provide closed-form expressions of the variances of the Horvtiz-Thompson estimators of the direct and indirect effects under additive interference (Assumption \ref{assump_additive}). We provide estimators of these variances and show them that they are conservative in finite samples. 

Proposition \ref{prop_varDE_additive} provides a closed-form expression of the variance of $\widehat{\text{DE}}^\pi_{\text{HT}}$. 
\begin{proposition} \normalfont
    Under Assumptions~\ref{assump_partial},~\ref{assump_identify},~\ref{assump_indep},~and~\ref{assump_additive},
\begin{align}
    \Var(\widehat{\text{DE}}^\pi_{\text{HT}}) = & \Var(\hat{\mu}^\pi_{1,\text{HT}}) + \Var(\hat{\mu}^\pi_{0,\text{HT}}) - 2 \Cov(\hat{\mu}^\pi_{1,\text{HT}},\hat{\mu}^\pi_{0,\text{HT}}),
\end{align}
where $\Var(\hat{\mu}^\pi_{1,\text{HT}})$ and $\Var(\hat{\mu}^\pi_{0,\text{HT}})$ are as in Proposition \ref{prop_additive_mu1} and 
\begin{align}
\Cov(\hat{\mu}^\pi_{1,\text{HT}},\hat{\mu}^\pi_{0,\text{HT}}) &=   \frac{1}{K^2}\sum_{k = 1}^{K} \frac{1}{|\mathcal{S}_k|^2}\left(\sum_{j \in \mathcal{S}_k} \Lambda_{1,k,j} + \mathop{\sum\sum}_{j \neq j' \in \mathcal{S}_k}\Lambda_{2,k,j,j'} \right),   
\end{align}
where 
\begin{align}
\Lambda_{1,k,j} \ = \ & - \left\{(1,\bm{\pi}^\top_k(\cdot|A_{ki^*} = 1))\tilde{\bm{\beta}}_{kj} \right\}\left\{(1,\bm{\pi}^\top_k(\cdot|A_{ki^{*}} = 0))\tilde{\bm{\beta}}_{kj} \right\},\\
       \Lambda_{2,k,j,j'} \ = \ &\sum_{\tilde{\bm{s}}'}\frac{\mathbbm{\pi}^2_k(A_{ki^{*}} = 1, A_{ki^{*'}} = 0, \bm{A}_{k(-i^{*},-i^{*'})} = \tilde{\bm{s}}') \left\{(1,\bm{a}^\top_{kjj'})\tilde{\bm{\beta}}_{kj} \right\}\left\{(1,\bm{a}^\top_{kjj'})\tilde{\bm{\beta}}_{kj'} \right\}}{f_k(A_{ki^*} = 1, A_{ki^{*'}} = 0, \bm{A}_{k(-i^{*},-i^{*'})} = \tilde{\bm{s}}')\pi_k(A_{ki^*} = 1)\pi_k(A_{ki^{*'}} = 0)} \nonumber\\
    &\hspace{0.5cm} - \left\{(1,\bm{\pi}^\top_k(\cdot|A_{ki^*} = 1))\tilde{\bm{\beta}}_{kj} \right\}\left\{(1,\bm{\pi}^\top_k(\cdot|A_{ki^{*'}} = 0))\tilde{\bm{\beta}}_{kj'} \right\}. 
\end{align}
where $\bm{a}_{kjj'}$ is the vector of treatment assignments with $A_{ki^{*}} = 1, A_{ki^{*'}} = 0, \bm{A}_{k(-i^{*},-i^{*'})} = \tilde{\bm{s}}'$, and for $a \in \{0,1\}$, $\bm{\pi}_k(\cdot|A_{ki^*} = a)$ is the vector of conditional probabilities whose $i$th element is $\pi_k(A_{ki} = 1 \mid A_{ki^*} = a)$.
\label{prop_varDE_additive}
\end{proposition}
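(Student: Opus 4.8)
The plan is to reduce everything to the already-established variance formula of Proposition~\ref{prop_additive_mu1} together with a single new covariance computation. Since $\widehat{\text{DE}}^\pi_{\text{HT}} = \hat{\mu}^\pi_{1,\text{HT}} - \hat{\mu}^\pi_{0,\text{HT}}$, the usual identity gives
\begin{align*}
\Var(\widehat{\text{DE}}^\pi_{\text{HT}}) = \Var(\hat{\mu}^\pi_{1,\text{HT}}) + \Var(\hat{\mu}^\pi_{0,\text{HT}}) - 2\,\Cov(\hat{\mu}^\pi_{1,\text{HT}}, \hat{\mu}^\pi_{0,\text{HT}}),
\end{align*}
and the first two terms are exactly the expressions furnished by Proposition~\ref{prop_additive_mu1}. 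Thus the entire task is to evaluate the cross term $\Cov(\hat{\mu}^\pi_{1,\text{HT}}, \hat{\mu}^\pi_{0,\text{HT}})$ under additive interference, following the template of the proof of Theorem~\ref{thm_varDE} but substituting the linear form of the potential outcomes.

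First I would write each estimator in its expanded indicator form as in Equation~\eqref{eq_A_muahat}. By the across-cluster independence of Assumption~\ref{assump_indep}, the covariance factorizes as a sum over clusters, and within cluster $k$ I would expand the product of the two double sums over $j, j' \in \mathcal{S}_k$. This splits the contribution into a diagonal part ($j=j'$), which I collect into $\Lambda_{1,k,j}$, and an off-diagonal part ($j \neq j'$), collected into $\Lambda_{2,k,j,j'}$, with $i^* = i^*(j)$ and $i^{*'} = i^*(j')$ the respective key-intervention units.

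The two pieces hinge on the covariance of the relevant assignment indicators. For the diagonal term the key-intervention unit $i^*$ is required to satisfy $A_{ki^*}=1$ in $\hat{\mu}^\pi_{1,\text{HT}}$ and $A_{ki^*}=0$ in $\hat{\mu}^\pi_{0,\text{HT}}$; these are mutually exclusive, so the joint-indicator term vanishes and only the negative product of marginal means survives. Invoking the additive identity from the proof of Proposition~\ref{prop_additive_mu1} (Equation~\eqref{eq_A4.6_1}), namely $\sum_{\bm{s}}\pi_k(\bm{A}_{k(-i^*)}=\bm{s}\mid A_{ki^*}=a)Y_{kj}(A_{ki^*}=a,\bm{A}_{k(-i^*)}=\bm{s}) = (1,\bm{\pi}^\top_k(\cdot\mid A_{ki^*}=a))\tilde{\bm{\beta}}_{kj}$, reproduces the stated $\Lambda_{1,k,j}$. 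For the off-diagonal term the contrasting events $A_{ki^*}=1$ and $A_{ki^{*'}}=0$ can co-occur whenever $i^* \neq i^{*'}$, so the joint indicator does contribute; its covariance yields a surviving term carrying $f_k(A_{ki^*}=1, A_{ki^{*'}}=0, \bm{A}_{k(-i^*,-i^{*'})}=\tilde{\bm{s}}')$ in the denominator, while the subtracted product of conditional means is again rewritten through Equation~\eqref{eq_A4.6_1}. After linearly substituting $Y_{kj}(\bm{a}) = (1,\bm{a}^\top)\tilde{\bm{\beta}}_{kj}$ into the numerator, this gives the stated $\Lambda_{2,k,j,j'}$.

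The main obstacle is the careful bookkeeping of the indicator covariances rather than any conceptual difficulty: one must track which configurations of $A_{ki^*}$ and $A_{ki^{*'}}$ are simultaneously feasible. In particular I would note that when $i^* = i^{*'}$ (two target units sharing the same key-intervention unit) the events $A_{ki^*}=1$ and $A_{ki^{*'}}=0$ become mutually exclusive and the corresponding joint term drops out automatically, consistent with $\pi^2_k(A_{ki^*}=1, A_{ki^{*'}}=0, \cdot)$ being zero in that degenerate case; the formula therefore covers both situations uniformly. The remaining manipulations—expanding the products and collecting terms—are the routine linear substitutions already exercised in the proof of Proposition~\ref{prop_additive_mu1}.
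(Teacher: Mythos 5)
Your proposal is correct and follows essentially the same route as the paper's proof: decompose $\Var(\widehat{\text{DE}}^\pi_{\text{HT}})$ via the variance-of-a-difference identity, reduce to the cross covariance, split it into diagonal and off-diagonal contributions over $j,j'\in\mathcal{S}_k$, use the mutual exclusivity of $\{A_{ki^*}=1\}$ and $\{A_{ki^*}=0\}$ to kill the joint-indicator term in $\Lambda_{1,k,j}$, and substitute the additive form of the potential outcomes via Equation~\eqref{eq_A4.6_1}. Your additional remark about the degenerate case $i^*=i^{*'}$ with $j\neq j'$ is a correct and slightly more careful observation than the paper makes explicit, but it does not change the argument.
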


\begin{proof}
Now, following similar steps as in the proof of Proposition \ref{prop_additive_mu1},
\begin{align*}
&\Cov(\hat{\mu}^\pi_{1,\text{HT}},\hat{\mu}^\pi_{0,\text{HT}}) \nonumber\\
    & = \Cov\left(\frac{1}{K}\sum_{k=1}^{K}\frac{1}{|\mathcal{S}_k|}\sum_{j} \frac{\mathbbm{1}(A_{ki^*} = 1)\pi_k(\bm{A}_{k(-i^*)}\mid A_{i^*} = 1)}{f_k(\bm{A}_k)}{Y}^{\text{obs}}_{kj},\right. \nonumber\\
    & \quad \quad \left.\frac{1}{K}\sum_{k=1}^{K}\frac{1}{|\mathcal{S}_k|}\sum_{j} \frac{\mathbbm{1}(A_{ki^*} = 0)\pi_k(\bm{A}_{k(-i^*)}\mid A_{i^*} = 0)}{f_k(\bm{A}_k)}{Y}^{\text{obs}}_{kj} \right) \nonumber\\
    & = \frac{1}{K^2}\sum_{k = 1}^{K} \frac{1}{|\mathcal{S}_k|^2}\left(\sum_{j \in \mathcal{S}_k} \Lambda_{1,k,j} + \mathop{\sum\sum}_{j \neq j' \in \mathcal{S}_k}\Lambda_{2,k,j,j'} \right),
\end{align*}
where
\begin{align*}
\Lambda_{1,k,j} &= \Cov\left\{\frac{\mathbbm{1}(A_{ki^*} = 1)\pi_k(\bm{A}_{k(-i^*)}\mid A_{i^*} = 1)}{f_k(\bm{A}_k)}{Y}^{\text{obs}}_{kj}, \frac{\mathbbm{1}(A_{ki^*} = 0)\pi_k(\bm{A}_{k(-i^*)}\mid A_{i^*} = 0)}{f_k(\bm{A}_k)}{Y}^{\text{obs}}_{kj} \right\}  \nonumber\\
& = -\E\left\{\frac{\mathbbm{1}(A_{ki^*} = 1)\pi_k(\bm{A}_{k(-i^*)}\mid A_{i^*} = 1)}{f_k(\bm{A}_k)}{Y}^{\text{obs}}_{kj})\right\}\E\left\{ \frac{\mathbbm{1}(A_{ki^*} = 0)\pi_k(\bm{A}_{k(-i^*)}\mid A_{i^*} = 0)}{f_k(\bm{A}_k)}{Y}^{\text{obs}}_{kj}) \right\} \nonumber\\
& = -\left\{\sum_{\bm{s}}\pi_k(\bm{A}_{k(-i^*)} = \bm{s}\mid A_{i^*} = 1){Y}_{kj}(A_{ki^*} = 1, A_{k(-i^*)} = \bm{s})\right\} \nonumber \\
& \quad \times \left\{\sum_{\bm{s}}\pi_k(\bm{A}_{k(-i^*)} = \bm{s}\mid A_{i^*} = 0){Y}_{kj}(A_{ki^*} = 0, A_{k(-i^*)} = \bm{s})\right\} \nonumber \\
& = - \left\{(1,\bm{\pi}^\top_k(\cdot|A_{ki^*} = 1))\tilde{\bm{\beta}}_{kj} \right\}\left\{(1,\bm{\pi}^\top_k(\cdot|A_{ki^{*}} = 0))\tilde{\bm{\beta}}_{kj} \right\},
\end{align*}
   where the last equality follows from Equation \ref{eq_A4.6_1}.
Moreover, 
\begin{align}
        &\Lambda_{2,k,j,j'} \nonumber\\
       \ = \ & \nonumber \Cov\left\{\frac{\mathbbm{1}(A_{ki^*} = 1)\pi_k(\bm{A}_{k(-i^*)}\mid A_{i^*} = 1)}{f_k(\bm{A}_k)}{Y}^{\text{obs}}_{kj}, \frac{\mathbbm{1}(A_{ki^{*'}} = 0)\pi_k(\bm{A}_{k(-i^{*'})}\mid A_{i^{*'}} = 0)}{f_k(\bm{A}_k)}{Y}^{\text{obs}}_{kj'} \right\}\\
   \ = \ & \Cov \left\{ \sum_{\bm{s}} \frac{\mathbbm{1}(A_{ki^*} = 1, \bm{A}_{k(-i^*)} = \bm{s})\pi_k(\bm{A}_{k(-i^*)} = \bm{s}\mid A_{i^*} = 1)}{f_k(\bm{A}_k)}{Y}_{kj}(A_{ki^*} = 1, \bm{A}_{k(-i^*)} = \bm{s}),\right. \nonumber \\
    & \quad \quad \left. \sum_{\tilde{\bm{s}}} \frac{\mathbbm{1}(A_{ki^{*'}} = 0, \bm{A}_{k(-i^{*'})} = \tilde{\bm{s}})\pi_k(\bm{A}_{k(-i^{*'})} = \tilde{\bm{s}}\mid A_{i^{*'}} = 0)}{f_k(\bm{A}_k)}{Y}_{kj'}(A_{ki^{*'}} = 0, \bm{A}_{k(-i^{*'})} = \tilde{\bm{s}}) \right\} \nonumber\\
   \ = \ & \sum_{\bm{s}}\sum_{\tilde{\bm{s}}} \pi_k(\bm{A}_{k(-i^*)} = \bm{s}\mid A_{i^*} = 1)\pi_k(\bm{A}_{k(-i^{*'})} = \tilde{\bm{s}}\mid A_{i^*} = 0){Y}_{kj}(A_{ki^*} = 1, \bm{A}_{k(-i^*)} = \bm{s}) \nonumber\\
    & \quad \times {Y}_{kj'}(A_{ki^{*'}} = 0, \bm{A}_{k(-i^{*'})} = \bm{s})\left\{\frac{f_k(A_{ki^{*}} = 1, A_{ki^{*'}} = 0, \bm{A}_{k(-i^*)} = \bm{s}, \bm{A}_{k(-i^{*'})} = \tilde{\bm{s}})}{f_k(A_{ki^{*}} = 1,\bm{A}_{k(-i^*)} = \bm{s})f_k(A_{ki^{*'}} = 0,\bm{A}_{k(-i^{*'})} = \tilde{\bm{s}})} - 1   \right\} \nonumber\\
     \ = \ &  \sum_{\tilde{\bm{s}}'}\frac{\pi^2_k(A_{ki^{*}} = 1, A_{ki^{*'}} = 0, \bm{A}_{k(-i^*, -i^{*'})} = \tilde{\bm{s}}')}{f_k(A_{ki^{*}} = 1, A_{ki^{*'}} = 0, \bm{A}_{k(-i^*, -i^{*'})} = \tilde{\bm{s}}')\pi_k(A_{ki^*} = 1)\pi_k(A_{ki^{*'}} = 0)} \nonumber\\
& \quad  \times Y_{kj}(A_{ki^{*}} = 1, A_{ki^{*'}} = 0, \bm{A}_{k(-i^*,-i^{*'})} = \bm{s})Y_{kj'}(A_{ki^{*}} = 1,A_{ki^{*'}} = 1, \bm{A}_{k(-i^*,-i^{*'})} = \bm{s}) \nonumber\\
& - \left\{\sum_{\bm{s}}\pi_{k}(\bm{A}_{k(-i^*)} = \bm{s}\mid A_{ki^*} = 1)Y_{kj}(A_{ki^*} = 1, \bm{A}_{k(-i^*)} = \bm{s}) \right\} \nonumber \\ 
& \times \left\{\sum_{\bm{s}}\pi_{k}(\bm{A}_{k(-i^{*'})} = \bm{s}\mid A_{ki^{*'}} = 0)Y_{kj'}(A_{ki^{*'}} = 0, \bm{A}_{k(-i^{*'})} = \bm{s}) \right\} \nonumber \\
\ = \ & \sum_{\tilde{\bm{s}}'}\frac{\mathbbm{\pi}^2_k(A_{ki^{*}} = 1, A_{ki^{*'}} = 0, \bm{A}_{k(-i^{*},-i^{*'})} = \tilde{\bm{s}}') \left\{(1,\bm{a}^\top_{kjj'})\tilde{\bm{\beta}}_{kj} \right\}\left\{(1,\bm{a}^\top_{kjj'})\tilde{\bm{\beta}}_{kj'} \right\}}{f_k(A_{ki^*} = 1, A_{ki^{*'}} = 0, \bm{A}_{k(-i^{*},-i^{*'})} = \tilde{\bm{s}}')\pi_k(A_{ki^*} = 1)\pi_k(A_{ki^{*'}} = 0)} \nonumber\\
    &\hspace{0.5cm} - \left\{(1,\bm{\pi}^\top_k(\cdot|A_{ki^*} = 1))\tilde{\bm{\beta}}_{kj} \right\}\left\{(1,\bm{\pi}^\top_k(\cdot|A_{ki^{*'}} = 0))\tilde{\bm{\beta}}_{kj'} \right\}.
\end{align}
\end{proof}

Theorem \ref{thm_conservative_DE} shows that the estimated variance of the direct effect, based on the plug-in regression estimator is conservative in finite samples.
\begin{theorem} \normalfont
Let Assumptions~\ref{assump_partial},~\ref{assump_identify},~\ref{assump_indep},~and~\ref{assump_additive} hold, and let $\widehat{\Var}(\widehat{\text{DE}}^\pi_{\text{HT}})$ be the estimator of $\Var(\widehat{\text{DE}}^\pi_{\text{HT}})$ based on $\hat{\tilde{\bm{\beta}}}_{kj}$. Then,
$$\E\{\widehat{\Var}(\widehat{\text{DE}}^\pi_{\text{HT}})\} \geq \Var(\widehat{\text{DE}}^\pi_{\text{HT}}).$$
\label{thm_conservative_DE}
\end{theorem}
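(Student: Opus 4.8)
The plan is to reproduce the convexity-and-Jensen argument from the proof of Theorem~\ref{thm_conservative_mu1}; the only new point to verify is that the direct-effect estimator is still a \emph{linear} functional of the coefficient vectors $\{\tilde{\bm{\beta}}_{kj}\}$ under additive interference, so that its design-based variance remains a convex quadratic form in those coefficients, to which Jensen's inequality then applies.

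First I would record that, by definition, $\widehat{\text{DE}}^\pi_{\text{HT}} = \hat{\mu}^\pi_{1,\text{HT}} - \hat{\mu}^\pi_{0,\text{HT}}$, and recall from the proof of Theorem~\ref{thm_conservative_mu1} that, under Assumption~\ref{assump_additive}, each estimator admits the representation $\hat{\mu}^\pi_{a,\text{HT}} = \sum_{k=1}^{K}\sum_{j\in\mathcal{S}_k}\tilde{\bm{\beta}}^\top_{kj}\bm{\psi}^{(a)}_{kj}$, where the vectors $\bm{\psi}^{(a)}_{kj}$ are functions of the assignment $\bm{A}_k$ alone. Subtracting yields
\[
\widehat{\text{DE}}^\pi_{\text{HT}} = \sum_{k=1}^{K}\sum_{j\in\mathcal{S}_k}\tilde{\bm{\beta}}^\top_{kj}\bm{\phi}_{kj}, \qquad \bm{\phi}_{kj} := \bm{\psi}^{(1)}_{kj} - \bm{\psi}^{(0)}_{kj},
\]
so $\widehat{\text{DE}}^\pi_{\text{HT}}$ is again linear in $\{\tilde{\bm{\beta}}_{kj}\}$.

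Next I would stack, within each cluster, $\tilde{\bm{\beta}}_k = (\tilde{\bm{\beta}}^\top_{k1},\dots,\tilde{\bm{\beta}}^\top_{k|\mathcal{S}_k|})^\top$ and $\bm{\Phi}_k = (\bm{\phi}^\top_{k1},\dots,\bm{\phi}^\top_{k|\mathcal{S}_k|})^\top$. Assumption~\ref{assump_indep} makes the clusters independent, so the variance decomposes as
\[
\Var(\widehat{\text{DE}}^\pi_{\text{HT}}) = \sum_{k=1}^{K}\tilde{\bm{\beta}}^\top_k \Var(\bm{\Phi}_k)\,\tilde{\bm{\beta}}_k =: h(\tilde{\bm{\beta}}_1,\dots,\tilde{\bm{\beta}}_K),
\]
a sum of quadratic forms with positive-semidefinite design-based matrices $\Var(\bm{\Phi}_k)$, hence a convex function of the coefficient vectors. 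The estimator $\widehat{\Var}(\widehat{\text{DE}}^\pi_{\text{HT}})$ of Proposition~\ref{prop_varDE_additive} is precisely $h$ evaluated at the plug-in coefficients, i.e.\ $h(\hat{\tilde{\bm{\beta}}}_1,\dots,\hat{\tilde{\bm{\beta}}}_K)$, since each component $\Var$ and $\Cov$ term is formed by substituting $\hat{\tilde{\bm{\beta}}}_{kj}$ for $\tilde{\bm{\beta}}_{kj}$ in the same deterministic quadratic form. Because $\hat{\tilde{\bm{\beta}}}_{kj}$ is design-unbiased for $\tilde{\bm{\beta}}_{kj}$ (established just before Theorem~\ref{thm_conservative_mu1}), Jensen's inequality applied to the convex $h$ gives
\[
\E\{\widehat{\Var}(\widehat{\text{DE}}^\pi_{\text{HT}})\} = \E\{h(\hat{\tilde{\bm{\beta}}}_1,\dots,\hat{\tilde{\bm{\beta}}}_K)\} \geq h(\tilde{\bm{\beta}}_1,\dots,\tilde{\bm{\beta}}_K) = \Var(\widehat{\text{DE}}^\pi_{\text{HT}}),
\]
which is the desired inequality.

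I do not anticipate a genuine obstacle here: the proof is a direct transcription of Theorem~\ref{thm_conservative_mu1}, and the crux is simply that differencing the two Horvitz--Thompson estimators preserves linearity in $\{\tilde{\bm{\beta}}_{kj}\}$, so the variance stays a positive-semidefinite quadratic form. The one step deserving care is the bookkeeping that identifies the estimator of Proposition~\ref{prop_varDE_additive} with $h(\hat{\tilde{\bm{\beta}}}_1,\dots,\hat{\tilde{\bm{\beta}}}_K)$ — namely, verifying that every occurrence of $\tilde{\bm{\beta}}_{kj}$ in the three variance and covariance terms of that proposition is replaced by $\hat{\tilde{\bm{\beta}}}_{kj}$, while the design-based matrices $\Var(\bm{\Phi}_k)$ are treated as known (being expectations of functions of $\bm{A}_k$ under the known mechanism $f_k$) rather than quantities that must themselves be estimated. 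Granting this, the convexity-and-Jensen step closes the proof immediately.
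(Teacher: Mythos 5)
Your proposal is correct and follows essentially the same route as the paper: the paper's proof likewise writes $\widehat{\text{DE}}^\pi_{\text{HT}}$ explicitly in the form $\sum_{k}\sum_{j}\tilde{\bm{\beta}}^\top_{kj}\bm{\psi}_{kj}$ and then appeals directly to the convexity-plus-Jensen argument of Theorem~\ref{thm_conservative_mu1}. Your additional bookkeeping about identifying the plug-in estimator of Proposition~\ref{prop_varDE_additive} with $h(\hat{\tilde{\bm{\beta}}}_1,\dots,\hat{\tilde{\bm{\beta}}}_K)$ is a sound elaboration of what the paper leaves implicit.
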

\begin{proof}
\begin{align}
\widehat{\text{DE}}^\pi_{\text{HT}} & = \sum_{k=1}^{K}\sum_{j \in \mathcal{S}_k}\frac{\mathbbm{1}(A_{ki^*} = 1)\pi_{k}(\bm{A}_{k(-i^*)} \mid A_{ki^*} = 1) - \mathbbm{1}(A_{ki^*} = 0)\pi_{k}(\bm{A}_{k(-i^*)} \mid A_{ki^*} = 0)}{K|\mathcal{S}_k|f_k(\bm{A}_k)}(1,\bm{A}^\top_{k})\tilde{\bm{\beta}}_{kj}  
\end{align}
Thus, akin to $\hat{\mu}^{\pi}_{a,\text{HT}}$, $\widehat{\text{DE}}^\pi_{\text{HT}}$ is of the form $\widehat{\text{DE}}^\pi_{\text{HT}} = \sum_{k}\sum_{j}\tilde{\bm{\beta}}^\top_{kj}\bm{\psi}_{kj}$, for some $|\mathcal{S}_k|\times 1$ vector $\bm{\psi}_{kj}$. Therefore, the proof follows directly from the proof of Theorem \ref{thm_conservative_mu1}.
\end{proof}

Next, we focus on the variance estimation problem for the estimated indirect effect.
Proposition \ref{prop_additive_IE} provides a closed-form expression of $\Var(\widehat{\text{IE}}^{\pi,\tilde{\pi}}_{a,\text{HT}})$.

\begin{proposition} \normalfont
 Under Assumptions~\ref{assump_partial},~\ref{assump_identify},~\ref{assump_indep},~and~\ref{assump_additive},  
\begin{align}
\Var(\widehat{\text{IE}}^{\pi,\tilde{\pi}}_{a,\text{HT}}) =   \frac{1}{K^2}\sum_{k = 1}^{K} \frac{1}{|\mathcal{S}_k|^2}\left(\sum_{j \in \mathcal{S}_k} \Lambda_{1,k,j} + \mathop{\sum\sum}_{j \neq j' \in \mathcal{S}_k}\Lambda_{2,k,j,j'} \right),  
\end{align}
where 
\begin{align}
\Lambda_{1,k,j} \ = \ & \sum_{\bm{s}}\frac{\{\pi_{k}(\bm{A}_{k(-i^*)} = \bm{s}\mid A_{ki^*} = a) - \tilde{\pi}_{k}(\bm{A}_{k(-i^*)} = \bm{s}\mid A_{ki^*} = a)\}^2}{f_k(A_{ki^*} = a, \bm{A}_{k(-i^*)} = \bm{s})}\left\{(1,\bm{a}^\top_{kj})\tilde{\bm{\beta}}_{kj} \right\}^2 \nonumber\\
 & \quad  - \left\{(0,\{\bm{\pi}_k(\cdot|A_{ki^*} = a)-\tilde{\bm{\pi}}_k(\cdot|A_{ki^*} = a)\}^\top)\tilde{\bm{\beta}}_{kj} \right\}^2,\\
       \Lambda_{2,k,j,j'} \ = \ &\sum_{\tilde{\bm{s}}}\frac{\left\{(1,\bm{a}^\top_{kjj'})\tilde{\bm{\beta}}_{kj} \right\}\left\{(1,\bm{a}^\top_{kjj'})\tilde{\bm{\beta}}_{kj'} \right\}}{f_k(A_{ki^*} = a, A_{ki^{*'}} = a, \bm{A}_{k(-i^{*},-i^{*'})} = \tilde{\bm{s}})\pi_k(A_{ki^*} = a)\pi_k(A_{ki^{*'}} = a)} \nonumber\\
    &\hspace{0.5cm} \times \left\{\frac{\pi_k(A_{ki^{*}} = a, A_{ki^{*'}} = a, \bm{A}_{k(-i^{*},-i^{*'})} = \tilde{\bm{s}})}{\pi_k(A_{ki^*} = a)} - \frac{\tilde{\pi}_k(A_{ki^{*}} = a, A_{ki^{*'}} = a, \bm{A}_{k(-i^{*},-i^{*'})} = \tilde{\bm{s}})}{\tilde{\pi}_k(A_{ki^*} = a)} \right\} \nonumber\\
    &\hspace{0.5cm}\times \left\{\frac{\pi_k(A_{ki^{*}} = a, A_{ki^{*'}} = a, \bm{A}_{k(-i^{*},-i^{*'})} = \tilde{\bm{s}})}{\pi_k(A_{ki^{*'}} = a)} - \frac{\tilde{\pi}_k(A_{ki^{*}} = a, A_{ki^{*'}} = a, \bm{A}_{k(-i^{*},-i^{*'})} = \tilde{\bm{s}})}{\tilde{\pi}_k(A_{ki^{*'}} = a)} \right\} \nonumber\\
    & \quad - (0,\{\bm{\pi}_k(\cdot|A_{ki^*} = a)-\tilde{\bm{\pi}}_k(\cdot|A_{ki^*} = a)\}^\top)\tilde{\bm{\beta}}_{kj} \nonumber\\
    & \quad \quad \times (0,\{\bm{\pi}_k(\cdot|A_{ki^{*'}} = a)-\tilde{\bm{\pi}}_k(\cdot|A_{ki^{*'}} = a)\}^\top)\tilde{\bm{\beta}}_{kj'}. 
\end{align}
where $\bm{a}_{kj}$ is the vector of treatment assignments with $A_{ki^{*}} = a$ and $\bm{A}_{k(-i^{*})} = \bm{s}$; $\bm{a}_{kjj'}$ is the vector of treatment assignments with $A_{ki^{*}} = a, A_{ki^{*'}} = a, \bm{A}_{k(-i^{*},-i^{*'})} = \tilde{\bm{s}}$; $\bm{\pi}_k(\cdot|A_{ki^*} = a)$ and $\tilde{\bm{\pi}}_k(\cdot|A_{ki^*} = a)$ are the vectors of conditional probabilities whose $i$th elements are $\pi_k(A_{ki} = 1 \mid A_{ki^*} = a)$ and $\tilde{\pi}_k(A_{ki} = 1 \mid A_{ki^*} = a)$, respectively.
\label{prop_additive_IE}
\end{proposition}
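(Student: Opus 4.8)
The plan is to prove this by reducing the indirect-effect estimator to a difference of two Horvitz--Thompson estimators and then reusing the variance-decomposition machinery already developed for $\hat{\mu}^\pi_{a,\text{HT}}$. Since $\widehat{\text{IE}}^{\pi,\tilde{\pi}}_{a,\text{HT}} = \hat{\mu}^\pi_{a,\text{HT}} - \hat{\mu}^{\tilde{\pi}}_{a,\text{HT}}$, I would first write it, following Equation~\eqref{eq_A_muahat} and the opening of the proof of Proposition~\ref{prop_varIE}, as a single double sum over clusters and units,
\begin{equation*}
\widehat{\text{IE}}^{\pi,\tilde{\pi}}_{a,\text{HT}} = \frac{1}{K}\sum_{k=1}^{K}\frac{1}{|\mathcal{S}_k|}\sum_{j \in \mathcal{S}_k}\sum_{\bm{s}}\mathbbm{1}(A_{ki^*} = a, \bm{A}_{k(-i^*)} = \bm{s})\,\gamma_{\bm{s}i^*}\,Y_{kj}(a,\bm{s}),
\end{equation*}
with contrast weights $\gamma_{\bm{s}i} = \{\pi_k(\bm{A}_{k(-i)} = \bm{s}\mid A_{ki} = a) - \tilde{\pi}_k(\bm{A}_{k(-i)} = \bm{s}\mid A_{ki} = a)\}/f_k(A_{ki} = a, \bm{A}_{k(-i)} = \bm{s})$ and $i^* = i^*(j)$. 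Invoking Assumption~\ref{assump_indep} to eliminate cross-cluster covariances, the within-cluster variance splits into the diagonal contributions $\Lambda_{1,k,j} = \Var(T_{kj})$ and the off-diagonal contributions $\Lambda_{2,k,j,j'} = \Cov(T_{kj},T_{kj'})$, where $T_{kj}$ is the inner sum over $\bm{s}$ attached to unit $j$.

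For the diagonal term I would expand $\Var(T_{kj})$ using the Bernoulli variance and covariance identities for the indicators $\mathbbm{1}(A_{ki^*} = a, \bm{A}_{k(-i^*)} = \bm{s})$, exactly as in the $G_1$ computation of Proposition~\ref{prop_varIE}. The key simplification is $f_k(a,\bm{s})\,\gamma_{\bm{s}i^*} = \pi_k(\bm{s}\mid a) - \tilde{\pi}_k(\bm{s}\mid a)$, which lets the off-diagonal $\bm{s}\neq\bm{s}'$ pieces be completed into a square and produces the general (pre-additive) form $\sum_{\bm{s}}\frac{\{\pi_k(\bm{s}\mid a)-\tilde{\pi}_k(\bm{s}\mid a)\}^2}{f_k(a,\bm{s})}Y_{kj}^2(a,\bm{s}) - \{\sum_{\bm{s}}(\pi_k(\bm{s}\mid a)-\tilde{\pi}_k(\bm{s}\mid a))Y_{kj}(a,\bm{s})\}^2$. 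I would then substitute the additive representation $Y_{kj}(a,\bm{s}) = (1,\bm{a}_{kj}^\top)\tilde{\bm{\beta}}_{kj}$ of Assumption~\ref{assump_additive}. The first summand immediately gives the stated squared linear form; for the second, applying Equation~\eqref{eq_A4.6_1} to both $\pi_k$ and $\tilde{\pi}_k$ shows that the constant (intercept) contributions cancel in the difference, leaving precisely $(0,\{\bm{\pi}_k(\cdot\mid A_{ki^*}=a)-\tilde{\bm{\pi}}_k(\cdot\mid A_{ki^*}=a)\}^\top)\tilde{\bm{\beta}}_{kj}$, matching the claimed $\Lambda_{1,k,j}$.

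The off-diagonal term $\Lambda_{2,k,j,j'}$ is where the bulk of the work lies and is the step I expect to be the main obstacle. Here $j$ and $j'$ generally have distinct key-intervention units $i^*$ and $i^{*'}$, so $T_{kj}$ and $T_{kj'}$ are indexed over different conditioning events, and I must track the joint assignment probabilities $f_k(A_{ki^*}=a,A_{ki^{*'}}=a,\bm{A}_{k(-i^*,-i^{*'})}=\tilde{\bm{s}})$ together with two independent contrasts $\pi_k-\tilde{\pi}_k$, one per unit. I would follow the $G_2$ computation of Proposition~\ref{prop_varIE} and the analogous covariance calculation in Proposition~\ref{prop_varDE_additive}: split $\Cov(\mathbbm{1},\mathbbm{1}')$ into the $\bm{s}=\bm{s}'$ (joint occurrence) and $\bm{s}\neq\bm{s}'$ (product-of-marginals) pieces via $\Cov(\mathbbm{1},\mathbbm{1}') = f_{\text{joint}} - f_k f_k'$, and re-express the surviving factors in terms of $\pi_k(\cdot)/\pi_k(A_{ki^*}=a) - \tilde{\pi}_k(\cdot)/\tilde{\pi}_k(A_{ki^*}=a)$. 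Substituting the additive form and again using Equation~\eqref{eq_A4.6_1} for the product-of-means subtraction then yields the two-unit, two-contrast expression stated in the proposition. The bookkeeping over the three index sets—the two key-intervention units and the remaining coordinates—is the delicate part; everything else is a direct parallel of the already-proven $\mu^\pi_{a,\text{HT}}$ and stratified-interference results.
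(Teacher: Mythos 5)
Your proposal is correct and follows essentially the same route as the paper: the paper's own proof likewise rewrites $\widehat{\text{IE}}^{\pi,\tilde{\pi}}_{a,\text{HT}}$ as a single Horvitz--Thompson-type sum with contrast weights $\{\pi_k(\cdot\mid a)-\tilde{\pi}_k(\cdot\mid a)\}/f_k(\cdot)$ and then invokes the variance decomposition of Theorem~\ref{thm_generalvar} together with the additive substitution and intercept cancellation via Equation~\eqref{eq_A4.6_1} from Proposition~\ref{prop_additive_mu1}. You simply spell out more of the bookkeeping that the paper leaves implicit.
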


\begin{proof}
The estimator of the indirect effect can be written as,
\begin{align}
   & \widehat{\text{IE}}^{\pi,\tilde{\pi}}_{a,\text{HT}} \nonumber\\
   & = \frac{1}{K}\sum_{k=1}^{K}\frac{1}{|\mathcal{S}_k|}\sum_{j \in \mathcal{S}_k}\sum_{\bm{s}}\mathbbm{1}(A_{ki^*} = a, \bm{A}_{k(-i^*)} = \bm{s})\frac{\pi_{k}(\bm{A}_{k(-i^*)} = \bm{s} \mid A_{ki^*} = a) - \tilde{\pi}_{k}(\bm{A}_{k(-i^*)} = \bm{s} \mid A_{ki^*} = a)}{f_k(A_{ki^*} = a, \bm{A}_{k(-i^*)} =\bm{s})} \nonumber\\
   &\hspace{0.5cm}\times Y_{kj}(A_{ki^*} = a, \bm{A}_{k(-i^*)} =\bm{s}).
    \label{eq_A_IEhat}
\end{align}
We note that, $\widehat{\text{IE}}^{\pi,\tilde{\pi}}_{a,\text{HT}}$ has the same form as $\hat{\mu}^{\pi}_{a,\text{HT}}$ in Equation \ref{eq_A_muahat}, with $\pi_{k}(\bm{A}_{k(-i^*)} = \bm{s} \mid A_{ki^*} = a)$ being replaced by $\pi_{k}(\bm{A}_{k(-i^*)} = \bm{s} \mid A_{ki^*} = a) - \tilde{\pi}_{k}(\bm{A}_{k(-i^*)} = \bm{s} \mid A_{ki^*} = a)$. Thus, the desired variance expression can be derived by following the proofs of Theorem \ref{thm_generalvar} an \ref{prop_additive_mu1} exactly.
\end{proof}

Theorem \ref{thm_conservative_IE} shows that the estimated variance of the indirect effect, based on the plug-in regression estimator is conservative in finite samples.
\begin{theorem} \normalfont
Let Assumptions~\ref{assump_partial},~\ref{assump_identify},~\ref{assump_indep},~and~\ref{assump_additive} hold, and let $\widehat{\Var}(\widehat{\text{IE}}^{\pi,\tilde{\pi}}_{a,\text{HT}})$ be the estimator of $\Var(\widehat{\text{IE}}^{\pi,\tilde{\pi}}_{a,\text{HT}})$ based on $\hat{\tilde{\bm{\beta}}}_{kj}$. Then,
$$\E\{\widehat{\Var}(\widehat{\text{IE}}^{\pi,\tilde{\pi}}_{a,\text{HT}})\} \geq \Var(\widehat{\text{IE}}^{\pi,\tilde{\pi}}_{a,\text{HT}}).$$
\label{thm_conservative_IE}
\end{theorem}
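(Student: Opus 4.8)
The plan is to mirror the argument used for Theorems~\ref{thm_conservative_mu1}~and~\ref{thm_conservative_DE}, exploiting the fact that under additive interference the indirect-effect estimator is \emph{linear} in the unknown coefficient vectors $\tilde{\bm{\beta}}_{kj}$, with coefficients that depend on the treatment assignment alone.

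First I would invoke Assumption~\ref{assump_additive} to substitute $Y^{\text{obs}}_{kj} = (1,\bm{A}^\top_k)\tilde{\bm{\beta}}_{kj}$ into the expression for $\widehat{\text{IE}}^{\pi,\tilde{\pi}}_{a,\text{HT}}$ in Equation~\eqref{eq_A_IEhat}. As already observed in the proof of Proposition~\ref{prop_additive_IE}, $\widehat{\text{IE}}^{\pi,\tilde{\pi}}_{a,\text{HT}}$ has exactly the same structural form as $\hat{\mu}^\pi_{a,\text{HT}}$ in Equation~\eqref{eq_A_muahat}, with the conditional weight $\pi_k(\bm{A}_{k(-i^*)} = \bm{s}\mid A_{ki^*} = a)$ replaced throughout by the difference $\pi_k(\bm{A}_{k(-i^*)} = \bm{s}\mid A_{ki^*} = a) - \tilde{\pi}_k(\bm{A}_{k(-i^*)} = \bm{s}\mid A_{ki^*} = a)$. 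Collecting the (random) coefficient multiplying each $\tilde{\bm{\beta}}_{kj}$ into a vector $\bm{\psi}_{kj}$, this yields the representation $\widehat{\text{IE}}^{\pi,\tilde{\pi}}_{a,\text{HT}} = \sum_{k=1}^{K}\sum_{j\in\mathcal{S}_k}\tilde{\bm{\beta}}^\top_{kj}\bm{\psi}_{kj}$, which is precisely the form used in the proof of Theorem~\ref{thm_conservative_mu1}.

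Next, stacking the coefficients within cluster $k$ as $\tilde{\bm{\beta}}_k = (\tilde{\bm{\beta}}^\top_{k1},\ldots,\tilde{\bm{\beta}}^\top_{k|\mathcal{S}_k|})^\top$ and the associated random vectors as $\bm{\Psi}_k$, and invoking the cross-cluster independence of Assumption~\ref{assump_indep}, I would write the variance as a sum of within-cluster quadratic forms, $\Var(\widehat{\text{IE}}^{\pi,\tilde{\pi}}_{a,\text{HT}}) = \sum_{k=1}^{K}\tilde{\bm{\beta}}^\top_k\Var(\bm{\Psi}_k)\tilde{\bm{\beta}}_k =: h(\tilde{\bm{\beta}}_1,\ldots,\tilde{\bm{\beta}}_K)$. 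Since each $\Var(\bm{\Psi}_k)$ is positive semidefinite, $h$ is convex in its arguments, exactly as in Theorem~\ref{thm_conservative_mu1}.

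Finally, because $\hat{\tilde{\bm{\beta}}}_{kj}$ is design-unbiased for $\tilde{\bm{\beta}}_{kj}$ (established following Equation~\eqref{eq_beta_hat}), the plug-in estimator satisfies $\widehat{\Var}(\widehat{\text{IE}}^{\pi,\tilde{\pi}}_{a,\text{HT}}) = h(\hat{\tilde{\bm{\beta}}}_1,\ldots,\hat{\tilde{\bm{\beta}}}_K)$, so Jensen's inequality applied to the convex function $h$ gives $\E\{h(\hat{\tilde{\bm{\beta}}}_1,\ldots,\hat{\tilde{\bm{\beta}}}_K)\} \geq h(\tilde{\bm{\beta}}_1,\ldots,\tilde{\bm{\beta}}_K)$, which is the claimed inequality. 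I expect no substantive obstacle, as the argument reduces to that of Theorem~\ref{thm_conservative_mu1}; the one point requiring care is checking that replacing $\pi_k$ by the signed weight $\pi_k - \tilde{\pi}_k$ preserves both the linearity in $\tilde{\bm{\beta}}_{kj}$ and the positive semidefiniteness of $\Var(\bm{\Psi}_k)$, so that convexity---and hence the Jensen step---transfers verbatim.
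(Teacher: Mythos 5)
Your proposal is correct and follows essentially the same route as the paper: the paper also writes $\widehat{\text{IE}}^{\pi,\tilde{\pi}}_{a,\text{HT}}$ in the linear form $\sum_{k}\sum_{j}\tilde{\bm{\beta}}^\top_{kj}\bm{\psi}_{kj}$ with the signed weight $\pi_k - \tilde{\pi}_k$ and then reduces the claim to the quadratic-form/convexity/Jensen argument of Theorem~\ref{thm_conservative_mu1}. Your added remark about verifying that the positive semidefiniteness of $\Var(\bm{\Psi}_k)$ is preserved under the signed weights is a sensible care point, and it holds automatically since any variance matrix is positive semidefinite regardless of the sign of the weights.
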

\begin{proof}
\begin{align}
\widehat{\text{IE}}^{\pi,\tilde{\pi}}_{a,\text{HT}} & = \sum_{k=1}^{K}\sum_{j \in \mathcal{S}_k}\frac{\mathbbm{1}(A_{ki^*} = a)\pi_{k}(\bm{A}_{k(-i^*)} \mid A_{ki^*} = a) - \mathbbm{1}(A_{ki^*} = a)\tilde{\pi}_{k}(\bm{A}_{k(-i^*)} \mid A_{ki^*} = a)}{K|\mathcal{S}_k|f_k(\bm{A}_k)}(1,\bm{A}^\top_{k})\tilde{\bm{\beta}}_{kj}  
\end{align}
Thus, akin to $\hat{\mu}^{\pi}_{a,\text{HT}}$, $\widehat{\text{IE}}^{\pi,\tilde{\pi}}_{a,\text{HT}}$ is of the form $\widehat{\text{IE}}^{\pi,\tilde{\pi}}_{a,\text{HT}} = \sum_{k}\sum_{j}\tilde{\bm{\beta}}^\top_{kj}\bm{\psi}_{kj}$, for some $|\mathcal{S}_k|\times 1$ vector $\bm{\psi}_{kj}$. Therefore, the proof follows directly from the proof of Theorem \ref{thm_conservative_mu1}.
\end{proof}

\subsubsection{Inference on treatment effects with multiple key-intervention units.}
\label{sec_appendix_multiple}

In this section, we consider the setting with multiple key-intervention units and focus on the Horvitz-Thompson estimator of $\tau^\pi = \frac{1}{K}\sum_{k=1}^{K}\left[\frac{1}{|\mathcal{S}_k|}\sum_{j \in \mathcal{S}_k}\left\{\sum_{\bm{a}\in \{0,1\}^{n_k}}\pi_{k}(\bm{a}|\mathcal{C}_{kj})Y_{kj}(\bm{a})\right\}\right],$ where $\mathcal{C}_{kj} = \{\bm{a} \in \{0,1\}^{n_k}: \sum_{s=1}^{r}A_{ki^*_s}/|\bm{i}^*| = p^*\}$, where $\bm{i}^*$ is the set of key-intervention units of unit $j$ (of size $|\bm{i}^*|$), and $p^* \in [0,1]$. Here, for each unit $j \in \mathcal{S}_k$, the stochastic intervention treats a fixed proportion $p^*$ of its key-intervention units. For simplicity, we set $\pi_k(\cdot\mid\mathcal{C}_{kj}) = f_k(\cdot\mid\mathcal{C}_{kj})$, i.e., given that $p^*$ proportion of key-intervention units are treated, the assignment mechanism under the stochastic intervention
is the same as that under the actual intervention.
The resulting Horvitz-Thompson estimator can be written as,
\begin{align*}
    \hat{\tau}^\pi_{\text{HT}} = \frac{1}{K}\sum_{k=1}^{K}\frac{1}{|\mathcal{S}_k|}\sum_{j \in \mathcal{S}_k}\frac{\mathbbm{1}(\bm{A}^\top_{k\bm{i}^*}\bm{1} = |\bm{i}^*|p^*)}{f_k(\bm{A}^\top_{k\bm{i}^*}\bm{1} = |\bm{i}^*|p^*)}Y^{\text{obs}}_{kj}.
\end{align*}

To point identify the variance of $\hat{\tau}^\pi_{\text{HT}}$ in this case, we consider an analog of the stratified interference assumption for multiple key-intervention units.
\begin{assumption}[Stratified interference for multiple key-intervention units] \normalfont
    For unit $j \in \mathcal{S}_k$, if $\bm{a}, \bm{a}' \in \{0,1\}^{n_k}$ are such that $\bm{a}^\top_{\bm{i}^*}\bm{1} = {\bm{a}'}^\top_{\bm{i}^*}\bm{1}$ and $\bm{a}^\top\bm{1} = \bm{a}'^\top\bm{1}$, then $Y_{kj}(\bm{a}) = Y_{kj}(\bm{a}')$. 
    \label{assump_doublestr}
\end{assumption}
Assumption~\ref{assump_doublestr} states that the potential outcome of a unit $j \in \mathcal{S}_k$ depends on the treatment assignment of the intervention units in cluster $k$ only through the proportion of treated key-intervention units and the proportion of overall treated intervention units. In the single key-intervention unit case, this assumption becomes equivalent to Assumption~\ref{assump_stratified}. 
Under Assumption~\ref{assump_doublestr}, we can write the potential outcome $Y_{kj}(\bm{a})$ as $Y_{kj}\left(\frac{a^\top_{\bm{i}^*}\bm{1}}{|\bm{i}^*|}, \frac{\bm{a}^\top \bm{1}}{n_k}\right)$. 

Similar to the single key-intervention unit case, we now define the pooled potential outcome. However, unlike the previous case, here the pooled potential outcomes are indexed by subsets of the intervention units. Formally, the pooled potential outcome for a subset $\bm{i}$ of $\mathcal{I}_k$ and fixed $p^*,p_k \in (0,1)$ is $\tilde{Y}_{k\bm{i}}(p^*,p_k) = \sum_{j\in \mathcal{S}_k} \mathbbm{1}(j \leftarrow \bm{i})Y_{j}(\frac{a^\top_{\bm{i}^*}\bm{1}}{|\bm{i}^*|} = p^*, \frac{\bm{a}^\top \bm{1}}{n_k} = p_k)$. The corresponding pooled observed outcome is $\tilde{Y}^{\text{obs}}_{k\bm{i}} = \tilde{Y}_{k\bm{i}}(\frac{\bm{A}_{k\bm{i}}^\top\bm{1}}{|\bm{i}|},p_k)$.
Also, let $\mathcal{G}_k = \{\bm{i}\subseteq \mathcal{I}_k: f_k(\bm{A}^\top_{k\bm{i}}\bm{1} = |\bm{i}|p^*)>0\}$ be the subset of intervention units $\bm{i}$ in cluster $k$ for which there is a strictly positive probability of observing $p^*$ proportion of treated units. 
In Theorem \ref{thm_multiple}, we obtain a closed-form expression of the variance of $\hat{\tau}^\pi_{\text{HT}}$. 
\begin{theorem} \normalfont
Under Assumptions~\ref{assump_partial},~\ref{assump_identify},~\ref{assump_indep},~\ref{assump_fixedprop},~\ref{assump_doublestr}, and $\pi_k(\cdot|\mathcal{C}_{kj}) = f_k(\cdot|\mathcal{C}_{kj})$,
\begin{align*}
 \Var(\hat{\tau}^\pi_{\text{HT}}) = \frac{1}{K^2}\sum_{k=1}^{K}\frac{1}{|\mathcal{S}_k|^2}\left[\sum_{\bm{i} \in \mathcal{G}_k}\tilde{c}_{\bm{i}} \tilde{Y}^2_{k\bm{i}}(p^*,p_k) + \mathop{\sum\sum}_{\bm{i} \neq \bm{i}' \in \mathcal{G}_k}\tilde{d}_{\bm{i}\bm{i}'}\tilde{Y}_{k\bm{i}}(p^*,p_k)\tilde{Y}_{k\bm{i}'}(p^*,p_k)\right],   
\end{align*}
where $\tilde{c}_{\bm{i}} = \frac{1}{f_k(\bm{A}^\top_{k\bm{i}}\bm{1} = |\bm{i}|p^*)}-1$ and $\tilde{d}_{\bm{i}\bm{i}'} = \frac{f_k(\bm{A}^\top_{k\bm{i}}\bm{1} = |\bm{i}|p^*, \bm{A}^\top_{k\bm{i}'}\bm{1} = |\bm{i}|p^*)}{f_k(\bm{A}^\top_{k\bm{i}}\bm{1} = |\bm{i}|p^*)f_k(\bm{A}^\top_{k\bm{i}'}\bm{1} = |\bm{i}'|p^*)} - 1$.
    \label{thm_multiple}
\end{theorem}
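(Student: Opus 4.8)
The plan is to mirror the argument for the single key-intervention case in Theorem~\ref{thm_var1}, exploiting the fact that the choice $\pi_k(\cdot\mid\mathcal{C}_{kj}) = f_k(\cdot\mid\mathcal{C}_{kj})$ collapses the inverse-probability weight to a clean indicator. First I would rewrite the estimator in terms of the pooled potential outcomes. Since $\pi_k(\cdot\mid\mathcal{C}_{kj}) = f_k(\cdot\mid\mathcal{C}_{kj})$, the weight $\pi_{kj}(\bm{A}_k)/f_k(\bm{A}_k)$ equals $\mathbbm{1}(\bm{A}^\top_{k\bm{i}^*}\bm{1} = |\bm{i}^*|p^*)/f_k(\bm{A}^\top_{k\bm{i}^*}\bm{1} = |\bm{i}^*|p^*)$, as in the estimator's definition. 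On the event where this indicator is nonzero, Assumptions~\ref{assump_fixedprop}~and~\ref{assump_doublestr} give $Y^{\text{obs}}_{kj} = Y_{kj}(p^*,p_k)$, a fixed constant. Grouping the units $j\in\mathcal{S}_k$ according to their key-intervention subset $\bm{i}=\bm{i}^*(j)$ and using the indicator $\mathbbm{1}(j\leftarrow\bm{i})$ then yields
\begin{equation*}
\hat{\tau}^\pi_{\text{HT}} = \frac{1}{K}\sum_{k=1}^{K}\frac{1}{|\mathcal{S}_k|}\sum_{\bm{i}\in\mathcal{G}_k}\frac{\mathbbm{1}(\bm{A}^\top_{k\bm{i}}\bm{1} = |\bm{i}|p^*)}{f_k(\bm{A}^\top_{k\bm{i}}\bm{1} = |\bm{i}|p^*)}\tilde{Y}_{k\bm{i}}(p^*,p_k),
\end{equation*}
where the restriction to $\bm{i}\in\mathcal{G}_k$ is harmless because any subset with $f_k(\bm{A}^\top_{k\bm{i}}\bm{1}=|\bm{i}|p^*)=0$ contributes a term that is identically zero.

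Next I would compute the variance. By the independence of assignments across clusters (Assumption~\ref{assump_indep}), the variance factorizes as $\frac{1}{K^2}\sum_k |\mathcal{S}_k|^{-2}\Var\bigl(\sum_{\bm{i}\in\mathcal{G}_k}w_{\bm{i}}B_{\bm{i}}\tilde{Y}_{k\bm{i}}(p^*,p_k)\bigr)$, where I abbreviate $B_{\bm{i}}=\mathbbm{1}(\bm{A}^\top_{k\bm{i}}\bm{1}=|\bm{i}|p^*)$ and $w_{\bm{i}}=1/f_k(\bm{A}^\top_{k\bm{i}}\bm{1}=|\bm{i}|p^*)$. Since the $\tilde{Y}_{k\bm{i}}(p^*,p_k)$ are fixed under the design-based framework, expanding the variance of this linear combination of the indicators $B_{\bm{i}}$ produces a diagonal part $\sum_{\bm{i}}w_{\bm{i}}^2\tilde{Y}^2_{k\bm{i}}\Var(B_{\bm{i}})$ and an off-diagonal part $\mathop{\sum\sum}_{\bm{i}\neq\bm{i}'}w_{\bm{i}}w_{\bm{i}'}\tilde{Y}_{k\bm{i}}\tilde{Y}_{k\bm{i}'}\Cov(B_{\bm{i}},B_{\bm{i}'})$.

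Finally I would evaluate the moments of the indicators. Because $\E(B_{\bm{i}}) = f_k(\bm{A}^\top_{k\bm{i}}\bm{1}=|\bm{i}|p^*)$, we have $w_{\bm{i}}^2\Var(B_{\bm{i}}) = f_k(\bm{A}^\top_{k\bm{i}}\bm{1}=|\bm{i}|p^*)^{-1}-1 = \tilde{c}_{\bm{i}}$; and since $\E(B_{\bm{i}}B_{\bm{i}'}) = f_k(\bm{A}^\top_{k\bm{i}}\bm{1}=|\bm{i}|p^*,\bm{A}^\top_{k\bm{i}'}\bm{1}=|\bm{i}'|p^*)$, we get $w_{\bm{i}}w_{\bm{i}'}\Cov(B_{\bm{i}},B_{\bm{i}'}) = \tilde{d}_{\bm{i}\bm{i}'}$. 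Substituting these into the two parts above gives the stated expression. I expect the only delicate step to be the regrouping in the first paragraph: one must verify that the summation over $j$ can be reorganized into a summation over distinct key-intervention subsets carrying the pooled outcomes $\tilde{Y}_{k\bm{i}}$, and that replacing $Y^{\text{obs}}_{kj}$ by the constant $Y_{kj}(p^*,p_k)$ is valid precisely on the support of the retained indicator. The remaining variance computation is a routine extension of the proof of Theorem~\ref{thm_var1}, simplified by the equality $\pi_k(\cdot\mid\mathcal{C}_{kj})=f_k(\cdot\mid\mathcal{C}_{kj})$, which eliminates the conditional-distribution sums over $\bm{s}$ that appear there.
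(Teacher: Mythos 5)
Your proposal is correct and follows essentially the same route as the paper's own proof: rewrite $\hat{\tau}^\pi_{\text{HT}}$ as a cluster-wise linear combination of the indicators $\mathbbm{1}(\bm{A}^\top_{k\bm{i}}\bm{1}=|\bm{i}|p^*)$ weighted by the fixed pooled outcomes $\tilde{Y}_{k\bm{i}}(p^*,p_k)$, then read off the variance from the Bernoulli variance and covariance of those indicators, which gives exactly $\tilde{c}_{\bm{i}}$ and $\tilde{d}_{\bm{i}\bm{i}'}$. The extra care you take with the regrouping over key-intervention subsets and the restriction to $\mathcal{G}_k$ is sound and, if anything, spells out steps the paper leaves implicit.
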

The term $\sum_{\bm{i} \in \mathcal{G}_k}\tilde{c}_{\bm{i}} \tilde{Y}^2_{k\bm{i}}(p^*,p_k)$ can be estimated unbiasedly using the Horvitz-Thompson estimator $\sum_{\bm{i} \in \mathcal{G}_k}\tilde{c}_{\bm{i}} \frac{\mathbbm{1}(\bm{A}^\top_{k\bm{i}}\bm{1} = |\bm{i}|p^*)}{f_k(\bm{A}^\top_{k\bm{i}}\bm{1} = |\bm{i}|p^*)} \tilde{Y}_{k\bm{i}}^2$. 

Similarly, the Horvitz-Thompson estimator $\mathop{\sum\sum}_{\bm{i} \neq \bm{i}' \in \mathcal{G}_k}\tilde{d}_{\bm{i}\bm{i}'}\frac{\mathbbm{1}(\bm{A}^\top_{k\bm{i}}\bm{1} = |\bm{i}|p^*, \bm{A}^\top_{k\bm{i}'}\bm{1} = |\bm{i}'|p^*)}{f_k(\bm{A}^\top_{k\bm{i}}\bm{1} = |\bm{i}|p^*, \bm{A}^\top_{k\bm{i}'}\bm{1} = |\bm{i}'|p^*)} \tilde{Y}^{\text{obs}}_{k\bm{i}}\tilde{Y}^{\text{obs}}_{k\bm{i}'}$ is unbiased for the term $\mathop{\sum\sum}_{\bm{i} \neq \bm{i}' \in \mathcal{G}_k}\tilde{d}_{\bm{i}\bm{i}'}\tilde{Y}_{k\bm{i}}(p^*,p_k)\tilde{Y}_{k\bm{i}'}(p^*,p_k)$, provided the design satisfies the \textit{measurability} condition $f_k(\bm{A}^\top_{k\bm{i}}\bm{1} = |\bm{i}|p^*, \bm{A}^\top_{k\bm{i}'}\bm{1} = |\bm{i}'|p^*)>0$, i.e., for all subsets of intervention units $\bm{i},\bm{i}' \in \mathcal{G}_k$, the design allows for assignments that treat $p^*$ proportion of units in both $\bm{i}$ and $\bm{i}'$. 
If the design is not measurable, then we can instead obtain a conservative estimator of the variance. 
Finally, for some subsets $\bm{i}$, $|\bm{i}|p^*$ may not be an integer. In that case, we replace it with its nearest integer $\text{int}(|\bm{i}|p^*)$. Thus, for a measurable design, we can estimate ${\Var}(\hat{\tau^\pi_{\text{HT}}})$ as 
\begin{align*}
 \widehat{\Var}(\hat{\tau}^\pi_{\text{HT}}) &= \frac{1}{K^2}\sum_{k=1}^{K}\frac{1}{|\mathcal{S}_k|^2}\Big[\sum_{\bm{i} \in \mathcal{G}_k}\tilde{c}_{\bm{i}} \frac{\mathbbm{1}(\bm{A}^\top_{k\bm{i}}\bm{1} = \text{int}(|\bm{i}|p^*))}{f_k(\bm{A}^\top_{k\bm{i}}\bm{1} = \text{int}(|\bm{i}|p^*))} (\tilde{Y}^{\text{obs}}_{k\bm{i}})^2 \nonumber \\
 & \quad + \mathop{\sum\sum}_{\bm{i} \neq \bm{i}' \in \mathcal{G}_k}\tilde{d}_{\bm{i}\bm{i}'}\frac{\mathbbm{1}(\bm{A}^\top_{k\bm{i}}\bm{1} = \text{int}(|\bm{i}|p^*), \bm{A}^\top_{k\bm{i}'}\bm{1} = \text{int}(|\bm{i}'|p^*))}{f_k(\bm{A}^\top_{k\bm{i}}\bm{1} = \text{int}(|\bm{i}|p^*), \bm{A}^\top_{k\bm{i}'}\bm{1} = \text{int}(|\bm{i}'|p^*))} \tilde{Y}^{\text{obs}}_{k\bm{i}}\tilde{Y}^{\text{obs}}_{k\bm{i}'}\Big].   
\end{align*}


\subsubsection{Proof of Theorem \ref{thm_multiple}}

Following the proof of Theorem \ref{thm_var1}, under Assumption \ref{assump_doublestr}, we can write
\begin{align*}
    \hat{\tau}^\pi_{\text{HT}} &= \frac{1}{K}\sum_{k=1}^{K}\frac{1}{|\mathcal{S}_k|}\sum_{\bm{i} \in \mathcal{G}_k}\frac{\mathbbm{1}(\bm{A}^\top_{k\bm{i}}\bm{1} = |\bm{i}|p^*)}{f_k(\bm{A}^\top_{k\bm{i}}\bm{1} = |\bm{i}|p^*)}\Tilde{Y}_{k\bm{i}}(p^*,p_k).
\end{align*}
Thus,
\begin{align*}
\Var(\hat{\tau}^\pi_{\text{HT}}) &= \frac{1}{K^2}\sum_{k=1}^{K}\frac{1}{|\mathcal{S}_k|^2}\Big[ \sum_{\bm{i} \in \mathcal{G}_k}\frac{1 - f_k(\bm{A}^\top_{k\bm{i}}\bm{1} = |\bm{i}|p^*)}{f_k(\bm{A}^\top_{k\bm{i}}\bm{1} = |\bm{i}|p^*)}\tilde{Y}^2_{k\bm{i}}(p^*,p_k) \nonumber\\
& \quad + \mathop{\sum\sum}_{\bm{i} \neq \bm{i}'\in \mathcal{G}_k}\tilde{Y}_{k\bm{i}}(p^*,p_k)\tilde{Y}_{k\bm{i}'}(p^*,p_k) \left\{\frac{\Pr(\bm{A}^\top_{k\bm{i}}\bm{1} = |\bm{i}|p^*,\bm{A}^\top_{k\bm{i}'}\bm{1} = |\bm{i}'|p^*)}{f_k(\bm{A}^\top_{k\bm{i}}\bm{1} = |\bm{i}|p^*)f_k(\bm{A}^\top_{k\bm{i}'}\bm{1} = |\bm{i}'|p^*)} - 1  \right\} \Big]\nonumber \\
& = \frac{1}{K^2}\sum_{k=1}^{K}\frac{1}{|\mathcal{S}_k|^2}\left[\sum_{\bm{i} \in \mathcal{G}_k}\tilde{c}_{\bm{i}} \tilde{Y}^2_{k\bm{i}}(p^*,p_k) + \mathop{\sum\sum}_{\bm{i} \neq \bm{i}' \in \mathcal{G}_k}\tilde{d}_{\bm{i}\bm{i}'}\tilde{Y}_{k\bm{i}}(p^*,p_k)\tilde{Y}_{k\bm{i}'}(p^*,p_k)\right].
\end{align*}
\qed 
\subsubsection{Inference for the H\'{a}jek estimator}
\label{appsec_hajek}
In this section, we derive the design-based variances of the H\'{a}jek estimators.
To this end, we first focus on the H\'{a}jek estimator of $\mu^\pi_{a}$ and note that, $\hat{\mu}^\pi_{a,\text{H\'{a}jek}}$ can be written as the ratio of two Horvitz-Thompson estimators; that is,
$$\hat{\mu}^\pi_{a,\text{H\'{a}jek}} = \frac{\hat{\mu}^\pi_{a,\text{HT}}}{\hat{\lambda}^\pi_{a,\text{HT}}}, \quad \text{where} \quad \hat{\lambda}^\pi_{a,\text{HT}} = \frac{1}{K}\sum_{k=1}^{K}\frac{1}{|\mathcal{S}_k|}\sum_{j \in \mathcal{S}_k}\mathbbm{1}(A_{ki^*} = a)\frac{\pi_{k}(\bm{A}_{k(-i^*)})\mid A_{ki^*} = a)}{f_k(\bm{A}_k)}$$ with $\mathbb{E}(\hat{\lambda}^\pi_{a,\text{HT}})  =1$. Since $\hat{\mu}^\pi_{a,\text{H\'{a}jek}}$ is the ratio of two random quantities, in general, $\hat{\mu}^\pi_{a,\text{H\'{a}jek}}$ is not design-unbiased for $\mu^\pi_a$, and we cannot obtain its design-based variance in closed form. However, we show that it is design-consistent for $\mu^\pi_a$ and approximate its variance by linearization, provided the estimators $\hat{\mu}^\pi_{a,\text{HT}}$ and $\hat{\lambda}^\pi_{a,\text{HT}}$ are design-consistent (see, e.g., \citealt{lohr2021sampling}, Chapter 9, for related analyses). Here, we first illustrate the estimation of this variance under stratified interference.

When $\pi_k(\cdot) = f_k(\cdot)$ and $f_k(\cdot)$ corresponds to a completely randomized experiment,  $\Var(\hat{\mu}_{a,\text{HT}}) = \frac{1}{K^2}\sum_{k=1}^{K} \left(\frac{n_k}{|\mathcal{S}_k|}\right)^2 \left(1- \frac{n_{ka}}{n_k}\right)\frac{\tilde{V}^2_{ka}}{n_{ka}}$. 
Hence, a sufficient condition for design-consistency of $\hat{\mu}^\pi_{a,\text{HT}}$ is that $\frac{1}{K}\sum_{k=1}^{K} \left(\frac{n_k}{|\mathcal{S}_k|}\right)^2 \left(1- \frac{n_{ka}}{n_k}\right)\frac{\tilde{V}^2_{ka}}{n_{ka}}$ is bounded, which holds when, e.g., $n_k,|\mathcal{S}_{k}| \to \infty$, $\frac{n_k}{|\mathcal{S}_k|} \to \gamma (<\infty)$, and $\tilde{V}^2_{ka}$ is bounded. Design-consistency of $\hat{\lambda}^\pi_{a,\text{HT}}$ holds under analogous conditions. 
For general expressions of $\pi_k(\cdot)$ and $f_k(\cdot)$, the following theorem establishes consistency of $\hat{\mu}^\pi_{a,\text{H\'{a}jek}}$ under similar conditions, and provides an approximate closed-form expression of its variance using linearization.
\begin{theorem}[Design-consistent estimator and its variance under stratified interference] \normalfont
Let $D_{ki} = \sum_{j \in \mathcal{S}_k}\mathbbm{1}(j \leftarrow i)$, $i \in \mathcal{I}_k$.  Assume that the second-order terms $\frac{1}{K}\sum_{k=1}^{K}\frac{1}{|\mathcal{S}_k|^2}\Big\{\sum_{i=1}^{n_k}c_{i,a} \tilde{Y}^2_{ki}(a,p_k) + \mathop{\sum\sum}_{i \neq i'}d_{ii',a}\tilde{Y}_{ki}(a,p_k)\tilde{Y}_{ki'}(a,p_k)\Big\}$ and $\frac{1}{K}\sum_{k=1}^{K}\frac{1}{|\mathcal{S}_k|^2}\Big\{\sum_{i=1}^{n_k}c_{i,a} D^2_{ki} + \mathop{\sum\sum}_{i \neq i'}d_{ii',a}D_{ki}D_{ki'}\Big\}$ are bounded.
Then, as $K \to \infty$, under Assumptions~\ref{assump_partial}--\ref{assump_fixedprop}, we have
\begin{equation*}
\hat{\mu}^\pi_{a,\text{H\'{a}jek}} \xrightarrow{P} \mu^\pi_a 
\end{equation*}
and
\begin{align*}
\Var(\hat{\mu}^\pi_{a,\text{H\'{a}jek}}) = \Var(\hat{\mu}^\pi_{a,\text{HT}}) + (\mu^\pi_{a})^2\Var(\hat{\lambda}^\pi_{a,\text{HT}}) - 2\mu^\pi_{a} \Cov(\hat{\mu}^\pi_{a,\text{HT}}, \hat{\lambda}^\pi_{a,\text{HT}}) + o_P(1).     
\end{align*}
    \label{thm_Hajek1}
\end{theorem}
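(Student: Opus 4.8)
The plan is to treat $\hat{\mu}^\pi_{a,\text{H\'{a}jek}}$ as the ratio estimator $\hat{\mu}^\pi_{a,\text{HT}}/\hat{\lambda}^\pi_{a,\text{HT}}$ and proceed by the standard linearization (delta-method) argument for ratios. First I would establish consistency of the two constituent Horvitz--Thompson estimators. By Theorem~\ref{thm_HT_unbiased}, $\E(\hat{\mu}^\pi_{a,\text{HT}}) = \mu^\pi_a$, and by construction $\E(\hat{\lambda}^\pi_{a,\text{HT}}) = 1$. The estimator $\hat{\lambda}^\pi_{a,\text{HT}}$ has exactly the form of $\hat{\mu}^\pi_{a,\text{HT}}$ with every outcome $Y_{kj}$ replaced by the constant $1$, so under stratified interference its pooled outcome is $D_{ki} = \sum_{j \in \mathcal{S}_k}\mathbbm{1}(j \leftarrow i)$. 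Applying Theorem~\ref{thm_var1} once with $\tilde{Y}_{ki}(a,p_k)$ and once with $D_{ki}$, the two boundedness hypotheses of the present theorem give $\Var(\hat{\mu}^\pi_{a,\text{HT}}) = O(1/K)$ and $\Var(\hat{\lambda}^\pi_{a,\text{HT}}) = O(1/K)$. Chebyshev's inequality then yields $\hat{\mu}^\pi_{a,\text{HT}} \xrightarrow{P} \mu^\pi_a$ and $\hat{\lambda}^\pi_{a,\text{HT}} \xrightarrow{P} 1$, and the continuous mapping theorem (the map $(u,v)\mapsto u/v$ being continuous at $(\mu^\pi_a,1)$ since the limiting denominator is nonzero) delivers the consistency claim $\hat{\mu}^\pi_{a,\text{H\'{a}jek}} \xrightarrow{P} \mu^\pi_a$.

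For the variance, I would start from the exact identity
\begin{equation*}
\hat{\mu}^\pi_{a,\text{H\'{a}jek}} - \mu^\pi_a = \frac{\hat{\mu}^\pi_{a,\text{HT}} - \mu^\pi_a\,\hat{\lambda}^\pi_{a,\text{HT}}}{\hat{\lambda}^\pi_{a,\text{HT}}}
\end{equation*}
and define the linearized statistic $Z = (\hat{\mu}^\pi_{a,\text{HT}} - \mu^\pi_a) - \mu^\pi_a(\hat{\lambda}^\pi_{a,\text{HT}} - 1)$, which has mean zero. A direct expansion shows $\hat{\mu}^\pi_{a,\text{HT}} - \mu^\pi_a\hat{\lambda}^\pi_{a,\text{HT}} = Z$, so the identity becomes $\hat{\mu}^\pi_{a,\text{H\'{a}jek}} - \mu^\pi_a = Z/\hat{\lambda}^\pi_{a,\text{HT}} = Z + Z\big(1/\hat{\lambda}^\pi_{a,\text{HT}} - 1\big)$. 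Computing the variance of the leading term gives
\begin{equation*}
\Var(Z) = \Var(\hat{\mu}^\pi_{a,\text{HT}}) + (\mu^\pi_a)^2\Var(\hat{\lambda}^\pi_{a,\text{HT}}) - 2\mu^\pi_a\,\Cov(\hat{\mu}^\pi_{a,\text{HT}}, \hat{\lambda}^\pi_{a,\text{HT}}),
\end{equation*}
which is precisely the claimed expression. It then remains to argue that the remainder $Z(1/\hat{\lambda}^\pi_{a,\text{HT}} - 1)$ contributes only at lower order: from $\Var(Z) = O(1/K)$ we have $Z = O_P(K^{-1/2})$, and from $\hat{\lambda}^\pi_{a,\text{HT}} - 1 = O_P(K^{-1/2})$ together with $\hat{\lambda}^\pi_{a,\text{HT}} \xrightarrow{P} 1$ we obtain $1/\hat{\lambda}^\pi_{a,\text{HT}} - 1 = O_P(K^{-1/2})$, so the product is $O_P(K^{-1})$. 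Hence $\hat{\mu}^\pi_{a,\text{H\'{a}jek}} - \mu^\pi_a$ coincides with its linearization $Z$ up to a term of strictly smaller order, which justifies reporting $\Var(Z)$ as the variance of the H\'{a}jek estimator up to the stated $o_P(1)$ remainder.

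The main obstacle is the final step: making precise the sense in which the variance of the ratio equals the variance of its linearization. The remainder $Z(1/\hat{\lambda}^\pi_{a,\text{HT}} - 1)$ must be shown negligible after the scaling implicit in the variance comparison, and this requires the denominator $\hat{\lambda}^\pi_{a,\text{HT}}$ to be bounded away from zero with high probability — which is exactly what its consistency to $1$ supplies, but which must be invoked with care because the H\'{a}jek estimator is literally undefined on the event $\{\hat{\lambda}^\pi_{a,\text{HT}} = 0\}$. The two boundedness assumptions are what drive both the $O(1/K)$ variance rates and the $O_P(K^{-1/2})$ fluctuation bounds used throughout, so I would state and use them explicitly at each invocation rather than only at the end.
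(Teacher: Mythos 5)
Your proposal is correct and follows essentially the same route as the paper's own proof: both establish consistency of $\hat{\mu}^\pi_{a,\text{HT}}$ and $\hat{\lambda}^\pi_{a,\text{HT}}$ via the $O(1/K)$ variance bounds implied by the boundedness hypotheses and then linearize the ratio (the paper via a generic Taylor expansion of $h(x,y)=x/y$, you via the equivalent exact identity $\hat{\mu}^\pi_{a,\text{H\'{a}jek}}-\mu^\pi_a = Z/\hat{\lambda}^\pi_{a,\text{HT}}$), arriving at the same $\Var(Z)$ expression. Your handling of the rates is in fact slightly more careful — Chebyshev gives $O_P(K^{-1/2})$ fluctuations where the paper loosely writes $O_P(1/K)$ — and the residual imprecision you flag about equating the variance of the ratio with that of its linearization is present in the paper's argument as well.
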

\begin{proof}
    Without loss of generality, we set $a = 1$. Now,
\begin{align}
\Var(\hat{\mu}^\pi_{1,\text{HT}}) = \frac{1}{K^2}\sum_{k=1}^{K}\frac{1}{|\mathcal{S}_k|^2}\left\{\sum_{i=1}^{n_k}c_{i,1} \tilde{Y}^2_{ki}(1,p_k) + \mathop{\sum\sum}_{i \neq i'}d_{ii',1}\tilde{Y}_{ki}(1,p_k)\tilde{Y}_{ki'}(1,p_k)\right\},
\end{align}
By the given condition, we have $\hat{\mu}^\pi_{1,\text{HT}} -\mu^\pi_1 = O_P(1/K)$ and $\hat{\lambda}^\pi_{1,\text{HT}} -1 = O_P(1/K)$ as $K \to  \infty$. Therefore, by Slutsky's theorem, $\hat{\mu}^\pi_{1,\text{H\'{a}jek}} = \frac{\hat{\mu}^\pi_{1,\text{HT}}}{\hat{\lambda}^\pi_{1,\text{HT}}} \xrightarrow{P} \mu^\pi_1$ as $K \to \infty$.
Now, using Taylor's expansion, for $h(\cdot): \mathbb{R}^2 \to \mathbb{R}$,
\begin{align*}
h(\hat{\mu}^\pi_{1,\text{HT}},\hat{\lambda}^\pi_{1,\text{HT}}) & = h(\mu^\pi_1,1) + (\hat{\mu}^\pi_{1,\text{HT}} - \mu^\pi_1, \hat{\lambda}^\pi_{1,\text{HT}} - 1)\nabla h(\mu^\pi_1,1) + O_P(1/K).
\end{align*}
Thus, we have
\begin{align*}
\Var\{h(\hat{\mu}^\pi_{1,\text{HT}},\hat{\lambda}^\pi_{1,\text{HT}})\} =   (\nabla h(\mu^\pi_1,1))^\top \Var\{(\hat{\mu}^\pi_{1,\text{HT}}, \hat{\lambda}^\pi_{1,\text{HT}})^\top\} \nabla h(\mu^\pi_1,1) + o_P(1). 
\end{align*}
Setting $h(x,y) = x/y$, we get $\nabla h(x,y) = (1/y,-x/y^2)$, which implies,
\begin{align*}
 \Var(\hat{\mu}^\pi_{1,\text{H\'{a}jek}}) & = \Var(\hat{\mu}^\pi_{1,\text{HT}}) + (\mu^\pi_{1})^2\Var(\hat{\lambda}^\pi_{1,\text{HT}}) - 2\mu^\pi_{1} \Cov(\hat{\mu}^\pi_{1,\text{HT}}, \hat{\lambda}^\pi_{1,\text{HT}}) + o_P(1).   
\end{align*}

\end{proof}

Leveraging this result, we compute the variance by the plug-in estimator
\begin{align}
    \widehat{\Var}(\hat{\mu}^\pi_{a,\text{H\'{a}jek}}) = \widehat{\Var}(\hat{\mu}^\pi_{a,\text{HT}}) + (\hat{\mu}^\pi_{a,\text{H\'{a}jek}})^2\widehat{\Var}(\hat{\lambda}^\pi_{a,\text{HT}}) - 2\hat{\mu}^\pi_{a, \text{H\'{a}jek}} \widehat{\Cov}(\hat{\mu}^\pi_{a,\text{HT}}, \hat{\lambda}^\pi_{a,\text{HT}}),
\end{align}
where 
\begin{align}
    \widehat{\Var}(\hat{\mu}^\pi_{a,\text{HT}}) = \frac{1}{K^2}\sum_{k=1}^{K}\frac{1}{|\mathcal{S}_k|^2}\left\{\sum_{i=1}^{n_k}\frac{\mathbbm{1}(A_{ki} = a)}{f_k(A_{ki} = a)}c_{i,a}\tilde{Y}_{ki}^2 + \mathop{\sum\sum}_{i \neq i'} \frac{\mathbbm{1}(A_{ki} = a, A_{ki'} = a)}{f_k(A_{ki} = a,A_{ki'} = a)} d_{ii',a}\tilde{Y}_{ki}\tilde{Y}_{ki'}\right\},
    \label{eq_app_muahatht}
\end{align}
$\widehat{\Var}(\hat{\lambda}^\pi_{a,\text{HT}})$ replaces $\tilde{Y}_{ki}$ in Equation~\eqref{eq_app_muahatht} by $D_{ki}$, and 
\begin{align}
\widehat{\Cov}(\hat{\mu}^\pi_{a,\text{HT}},\hat{\lambda}^\pi_{a,\text{HT}}) &= \frac{1}{K^2}\sum_{k=1}^{K}\frac{1}{|\mathcal{S}_k|^2}\Big\{\sum_{i=1}^{n_k}\frac{\mathbbm{1}(A_{ki} = a)}{f_k(A_{ki} = a)}c_{i,a}\tilde{Y}^{\text{obs}}_{ki}D_{ki} \nonumber \\
& \quad + \mathop{\sum\sum}_{i \neq i'} \frac{\mathbbm{1}(A_{ki} = a, A_{ki'} = a)}{f_k(A_{ki} = a,A_{ki'} = a)} d_{ii',a}\tilde{Y}^{\text{obs}}_{ki}D_{ki'}\Big\}.  
\end{align}
The estimators of the above variances and covariances can also be obtained analogously under additive interference. In particular, $\widehat{\Var}(\hat{\mu}^\pi_{a,\text{HT}})$ and $\widehat{\Var}(\hat{\lambda}^\pi_{a,\text{HT}})$ are computed by plugging in the estimated coefficients of the additive model in the general variance expression in Proposition \ref{prop_additive_mu1}. As for the covariance, following the proof of Proposition \ref{prop_additive_mu1}, we get
\begin{align}
\Cov(\hat{\mu}^\pi_{a,\text{HT}}, \hat{\lambda}^\pi_{a,\text{HT}}) =   \frac{1}{K^2}\sum_{k = 1}^{K} \frac{1}{|\mathcal{S}_k|^2}\left(\sum_{j \in \mathcal{S}_k} \Lambda_{1,k,j} + \mathop{\sum\sum}_{j \neq j' \in \mathcal{S}_k}\Lambda_{2,k,j,j'} \right),  
\label{eq_app_cov_mulambda}
\end{align}
where 
\begin{align}
\Lambda_{1,k,j} \ = \ & \sum_{\bm{s}}\frac{\pi^2_{k}(\bm{A}_{k(-i^*)} = \bm{s}\mid A_{ki^*} = a)}{f_k(A_{ki^*} = a, \bm{A}_{k(-i^*)} = \bm{s})}\left\{(1,\bm{a}^\top_{kj})\tilde{\bm{\beta}}_{kj} \right\} - \left\{(1,\bm{\pi}^\top_k(\cdot|A_{ki^*} = a))\tilde{\bm{\beta}}_{kj} \right\},
\label{eq_lambda1_add} \\
       \Lambda_{2,k,j,j'} \ = \ &\sum_{\bm{s}}\frac{\mathbbm{\pi}^2_k(A_{ki^{*}} = a, A_{ki^{*'}} = a, \bm{A}_{k(-i^{*},-i^{*'})} = \bm{s}) \left\{(1,\bm{a}^\top_{kjj'})\tilde{\bm{\beta}}_{kj} \right\}}{f_k(A_{ki^*} = a, A_{ki^{*'}} = a, \bm{A}_{k(-i^{*},-i^{*'})} = \bm{s})\pi_k(A_{ki^*} = a)\pi_k(A_{ki^{*'}} = a)} \nonumber\\
    &\hspace{0.5cm} - \left\{(1,\bm{\pi}^\top_k(\cdot|A_{ki^*} = a))\tilde{\bm{\beta}}_{kj} \right\},
\end{align}
In this case, $\widehat{\Cov}(\hat{\mu}^\pi_{a,\text{HT}}, \hat{\lambda}^\pi_{a,\text{HT}})$ substitutes $\tilde{\bm{\beta}}_{kj}$ by its estimator $\hat{\tilde{\bm{\beta}}}_{kj}$.

We now derive the approximate design-based variance of the H\'{a}jek estimator of $\text{DE}^\pi$.
The H\'{a}jek estimator of $\text{DE}^\pi$ is given by,
\begin{align}
    \widehat{\text{DE}}^\pi_{\text{H\'{a}jek}} &= \hat{\mu}^\pi_{1,\text{H\'{a}jek}} - \hat{\mu}^\pi_{0,\text{H\'{a}jek}} = \frac{\hat{\mu}^\pi_{1,\text{HT}}}{\hat{\lambda}^\pi_{1,\text{HT}}} - \frac{\hat{\mu}^\pi_{0,\text{HT}}}{\hat{\lambda}^\pi_{0,\text{HT}}} .
\end{align}
Under the assumptions of Theorem \ref{thm_Hajek1}, we have, for $a \in \{0,1\}$, $\hat{\mu}^\pi_{a,\text{HT}} - \mu^\pi_a = O_p(1/K)$ and $\hat{\lambda}^\pi_{a,\text{HT}} - \lambda^\pi_a = O_p(1/K)$. Thus, using Taylor expansion, we get
\begin{align*}
&\Var(h(\hat{\mu}^\pi_{0,\text{HT}},\hat{\lambda}^\pi_{0,\text{HT}},\hat{\mu}^\pi_{1,\text{HT}},\hat{\lambda}^\pi_{1,\text{HT}}) \nonumber \\
& = (\nabla h(\mu^\pi_0,1,\mu^\pi_1,1))^\top \Var\{(\hat{\mu}^\pi_{0,\text{HT}},\hat{\lambda}^\pi_{0,\text{HT}},\hat{\mu}^\pi_{1,\text{HT}},\hat{\lambda}^\pi_{1,\text{HT}})^\top \} \nabla h(\mu^\pi_0,1,\mu^\pi_1,1) + o_P(1).
\end{align*}
Now, setting $h(x_0,y_0,x_1,y_1) = (x_1/y_1) - (x_0/y_0)$, we have $\nabla h(\mu^\pi_0,1,\mu^\pi_1,1) = (-1,\mu^\pi_0,1,-\mu^\pi_1)^\top$. Therefore,
\begin{align*}
\Var(\widehat{\text{DE}}^\pi_{\text{H\'{a}jek}}) & =  \Var(\hat{\mu}^\pi_{0,\text{HT}}) + (\mu^\pi_{0})^2\Var(\hat{\lambda}^\pi_{0,\text{HT}}) - 2\mu^\pi_{0} \Cov(\hat{\mu}^\pi_{0,\text{HT}}, \hat{\lambda}^\pi_{0,\text{HT}})  \nonumber \\
& \quad + \Var(\hat{\mu}^\pi_{1,\text{HT}}) + (\mu^\pi_{1})^2\Var(\hat{\lambda}^\pi_{1,\text{HT}}) - 2\mu^\pi_{1} \Cov(\hat{\mu}^\pi_{1,\text{HT}}, \hat{\lambda}^\pi_{1,\text{HT}}) \nonumber \\
& \quad -2\Cov(\hat{\mu}^\pi_{0,\text{HT}},\hat{\mu}^\pi_{1,\text{HT}}) + 2\mu^\pi_1 \Cov(\hat{\mu}^\pi_{0,\text{HT}},\hat{\lambda}^\pi_{1,\text{HT}}) + 2\mu^\pi_0 \Cov(\hat{\lambda}^\pi_{0,\text{HT}},\hat{\mu}^\pi_{1,\text{HT}}) \nonumber\\
& \quad - 2\mu^\pi_0\mu^\pi_1 \Cov(\hat{\lambda}^\pi_{0,\text{HT}},\hat{\lambda}^\pi_{1,\text{HT}}) + o_P(1).
\end{align*}
Following the proof of Theorem \ref{thm_var1} and \ref{thm_varDE}, under stratified interference, we can compute the covariance terms as follows.
\begin{align*}
\Cov(\hat{\mu}^\pi_{a,\text{HT}},\hat{\lambda}^\pi_{a,\text{HT}}) & = \frac{1}{K^2}\sum_{k=1}^{K}\frac{1}{|\mathcal{S}_k|^2}\Big\{\sum_{i=1}^{n_k}c_{i,a}\tilde{Y}_{ki}(a,p_k)D_{ki} + \mathop{\sum\sum}_{i \neq i'} d_{ii',a}\tilde{Y}_{ki}(a,p_k)D_{ki'}\Big\}.  \\
    \Cov(\hat{\mu}^\pi_{1,\text{HT}}, \hat{\mu}^\pi_{0,\text{HT}}) & = \frac{1}{K^2}\sum_{k=1}^{K}\frac{1}{|\mathcal{S}_k|^2}\left[\mathop{\sum\sum}_{i \neq i'}g_{ii'}\tilde{Y}_{ki}(1,p_k)\tilde{Y}_{ki'}(0,p_k) -   \sum_{i = 1}^{n_k} \tilde{Y}_{ki}(1,p_k)\tilde{Y}_{ki}(0,p_k)     \right], \\
    \Cov(\hat{\mu}^\pi_{1,\text{HT}}, \hat{\lambda}^\pi_{0,\text{HT}}) & = \frac{1}{K^2}\sum_{k=1}^{K}\frac{1}{|\mathcal{S}_k|^2}\left[\mathop{\sum\sum}_{i \neq i'}g_{ii'}\tilde{Y}_{ki}(1,p_k)D_{ki'} - \sum_{i = 1}^{n_k} \tilde{Y}_{ki}(1,p_k)D_{ki}\right],\\
    \Cov(\hat{\lambda}^\pi_{1,\text{HT}}, \hat{\mu}^\pi_{0,\text{HT}}) & = \frac{1}{K^2}\sum_{k=1}^{K}\frac{1}{|\mathcal{S}_k|^2}\left[\mathop{\sum\sum}_{i \neq i'}g_{ii'}D_{ki}\tilde{Y}_{ki'}(0,p_k) - \sum_{i = 1}^{n_k} D_{ki}\tilde{Y}_{ki}(0,p_k)\right], \\
    \Cov(\hat{\lambda}^\pi_{1,\text{HT}}, \hat{\lambda}^\pi_{0,\text{HT}}) & = \frac{1}{K^2}\sum_{k=1}^{K}\frac{1}{|\mathcal{S}_k|^2}\left[\mathop{\sum\sum}_{i \neq i'}g_{ii'}D_{ki}D_{ki'} -   \sum_{i = 1}^{n_k} D^2_{ki} \right].
\end{align*}

We estimate the above variance by the plug-in estimator
\begin{align*}
\widehat{\Var}(\widehat{\text{DE}}^\pi_{\text{H\'{a}jek}}) & =  \widehat{\Var}(\hat{\mu}^\pi_{0,\text{HT}}) + (\hat{\mu}^\pi_{0, \text{H\'{a}jek}})^2\widehat{\Var}(\hat{\lambda}^\pi_{0,\text{HT}}) - 2\hat{\mu}^\pi_{0,\text{H\'{a}jek}} \widehat{\Cov}(\hat{\mu}^\pi_{0,\text{HT}}, \hat{\lambda}^\pi_{0,\text{HT}})  \nonumber \\
& + \widehat{\Var}(\hat{\mu}^\pi_{1,\text{HT}}) + (\hat{\mu}^\pi_{1,\text{H\'{a}jek}})^2\widehat{\Var}(\hat{\lambda}^\pi_{1,\text{HT}}) - 2\hat{\mu}^\pi_{1,\text{H\'{a}jek}} \widehat{\Cov}(\hat{\mu}^\pi_{1,\text{HT}}, \hat{\lambda}^\pi_{1,\text{HT}}) \nonumber \\
& -2\widehat{\Cov}(\hat{\mu}^\pi_{0,\text{HT}},\hat{\mu}^\pi_{1,\text{HT}}) + 2\hat{\mu}^\pi_{1,\text{H\'{a}jek}} \widehat{\Cov}(\hat{\mu}^\pi_{0,\text{HT}},\hat{\lambda}^\pi_{1,\text{HT}}) + 2\hat{\mu}^\pi_{0,\text{H\'{a}jek}} \widehat{\Cov}(\hat{\lambda}^\pi_{0,\text{HT}},\hat{\mu}^\pi_{1,\text{HT}}) \nonumber\\
&- 2\hat{\mu}^\pi_{0,\text{H\'{a}jek}}\hat{\mu}^\pi_{1,\text{H\'{a}jek}} \widehat{\Cov}(\hat{\lambda}^\pi_{0,\text{HT}},\hat{\lambda}^\pi_{1,\text{HT}}).
\end{align*}
Here, following the proof of Theorem \ref{thm_var1} and \ref{thm_varDE}, we can estimate the covariance terms as,
\begin{align*}
&\widehat{\Cov} (\hat{\mu}^\pi_{a,\text{HT}},\hat{\lambda}^\pi_{a,\text{HT}})  \nonumber\\
&= \frac{1}{K^2}\sum_{k=1}^{K}\frac{1}{|\mathcal{S}_k|^2}\Big\{\sum_{i=1}^{n_k}\frac{\mathbbm{1}(A_{ki} = a)}{f_k(A_{ki} = a)}c_{i,a}\tilde{Y}^{\text{obs}}_{ki}D_{ki} + \mathop{\sum\sum}_{i \neq i'} \frac{\mathbbm{1}(A_{ki} = a, A_{ki'} = a)}{f_k(A_{ki} = a,A_{ki'} = a)} d_{ii',a}\tilde{Y}^{\text{obs}}_{ki}D_{ki'}\Big\}, \\
\end{align*}
\begin{align*}
\widehat{\Cov}(\hat{\mu}^\pi_{1,\text{HT}}, \hat{\mu}^\pi_{0,\text{HT}}) & = \frac{1}{K^2}\sum_{k=1}^{K}\frac{1}{|\mathcal{S}_k|^2}\Big[\mathop{\sum\sum}_{i \neq i'}g_{ii'}\frac{\mathbbm{1}(A_{ki} = 1, A_{ki'} = 0)}{f_k(A_{ki} = 1,A_{ki'} = 0)}\tilde{Y}^{\text{obs}}_{ki}\tilde{Y}^{\text{obs}}_{ki'} \nonumber\\
& \quad - \frac{1}{2}\sum_{i = 1}^{n_k}\left\{\frac{\mathbbm{1}(A_{ki} = 1)}{f_k(A_{ki} = 1)} + \frac{\mathbbm{1}(A_{ki} = 0)}{f_k(A_{ki} = 0)}\right\} (\tilde{Y}^{\text{obs}}_{ki})^2  \Big], \\
\widehat{\Cov}(\hat{\mu}^\pi_{1,\text{HT}}, \hat{\lambda}^\pi_{0,\text{HT}}) & = \frac{1}{K^2}\sum_{k=1}^{K}\frac{1}{|\mathcal{S}_k|^2}\left[\mathop{\sum\sum}_{i \neq i'}g_{ii'}\frac{\mathbbm{1}(A_{ki} = 1, A_{ki'} = 0)}{f_k(A_{ki} = 1,A_{ki'} = 0)}\tilde{Y}^{\text{obs}}_{ki}D_{ki'} - \sum_{i = 1}^{n_k} \frac{\mathbbm{1}(A_{ki} = 1)}{f_k(A_{ki} = 1)}\tilde{Y}^{\text{obs}}_{ki}D_{ki}\right], \\
\widehat{\Cov}(\hat{\lambda}^\pi_{1,\text{HT}}, \hat{\mu}^\pi_{0,\text{HT}}) & = \frac{1}{K^2}\sum_{k=1}^{K}\frac{1}{|\mathcal{S}_k|^2}\left[\mathop{\sum\sum}_{i \neq i'}g_{ii'}\frac{\mathbbm{1}(A_{ki} = 1, A_{ki'} = 0)}{f_k(A_{ki} = 1,A_{ki'} = 0)}D_{ki}\tilde{Y}^{\text{obs}}_{ki'} - \sum_{i = 1}^{n_k} \frac{\mathbbm{1}(A_{ki} = 0)}{f_k(A_{ki} = 0)}D_{ki}\tilde{Y}^{\text{obs}}_{ki}\right],\\
\widehat{\Cov}(\hat{\lambda}^\pi_{1,\text{HT}}, \hat{\lambda}^\pi_{0,\text{HT}}) & = \frac{1}{K^2}\sum_{k=1}^{K}\frac{1}{|\mathcal{S}_k|^2}\Big[\mathop{\sum\sum}_{i \neq i'}g_{ii'}\frac{\mathbbm{1}(A_{ki} = 1, A_{ki'} = 0)}{f_k(A_{ki} = 1,A_{ki'} = 0)}D_{ki}D_{ki'} \nonumber\\
&- \frac{1}{2}\sum_{i = 1}^{n_k}\left\{\frac{\mathbbm{1}(A_{ki} = 1)}{f_k(A_{ki} = 1)} + \frac{\mathbbm{1}(A_{ki} = 0)}{f_k(A_{ki} = 0)}\right\} (D_{ki})^2  \Big].
\end{align*}

Under additive interference, the above covariance terms can be computed analogously to those in the proofs of Propositions \ref{prop_additive_mu1} and \ref{prop_varDE_additive}. For instance, $\Cov(\hat{\mu}^\pi_{1,\text{HT}},\hat{\lambda}^\pi_{0,\text{HT}})$ has the following closed-form expression.
\begin{align}
\Cov(\hat{\mu}^\pi_{1,\text{HT}},\hat{\lambda}^\pi_{0,\text{HT}}) &=   \frac{1}{K^2}\sum_{k = 1}^{K} \frac{1}{|\mathcal{S}_k|^2}\left(\sum_{j \in \mathcal{S}_k} \Lambda_{1,k,j} + \mathop{\sum\sum}_{j \neq j' \in \mathcal{S}_k}\Lambda_{2,k,j,j'} \right),   
\end{align}
where 
\begin{align}
\Lambda_{1,k,j} \ = \ & - \left\{(1,\bm{\pi}^\top_k(\cdot|A_{ki^*} = 1))\tilde{\bm{\beta}}_{kj} \right\},\\
       \Lambda_{2,k,j,j'} \ = \ &\sum_{\bm{s}}\frac{\mathbbm{\pi}^2_k(A_{ki^{*}} = 1, A_{ki^{*'}} = 0, \bm{A}_{k(-i^{*},-i^{*'})} = \bm{s}) \left\{(1,\bm{a}^\top_{kjj'})\tilde{\bm{\beta}}_{kj} \right\}}{f_k(A_{ki^*} = 1, A_{ki^{*'}} = 0, \bm{A}_{k(-i^{*},-i^{*'})} = \bm{s})\pi_k(A_{ki^*} = 1)\pi_k(A_{ki^{*'}} = 0)} \nonumber\\
    &\hspace{0.5cm} - \left\{(1,\bm{\pi}^\top_k(\cdot|A_{ki^*} = 1))\tilde{\bm{\beta}}_{kj} \right\}. 
\end{align}
Moreover, $\Cov(\hat{\mu}^\pi_{a,\text{HT}}, \hat{\lambda}^\pi_{a,\text{HT}})$ has a closed-form expression given in Equation \ref{eq_app_cov_mulambda}. The estimators of these covariance terms are obtained by plugging in the estimators of $\tilde{\bm{\beta}}_{kj}$.

Finally, we conclude the section by considering the multiple key-intervention unit case as in Section \ref{sec_appendix_multiple}.
The H\'{a}jek estimator of $\tau^\pi$ is given by
\begin{align*}
\hat{\tau}^\pi_{\text{H\'{a}jek}} =   \frac{\sum_{k=1}^{K}\frac{1}{|\mathcal{S}_k|}\sum_{j \in \mathcal{S}_k}\frac{\mathbbm{1}(\bm{A}^\top_{k\bm{i}^*}\bm{1} = |\bm{i}^*|p^*)}{f_k(\bm{A}^\top_{k\bm{i}^*}\bm{1} = |\bm{i}^*|p^*)}Y^{\text{obs}}_{kj} } {\sum_{k=1}^{K}\frac{1}{|\mathcal{S}_k|}\sum_{j \in \mathcal{S}_k}\frac{\mathbbm{1}(\bm{A}^\top_{k\bm{i}^*}\bm{1} = |\bm{i}^*|p^*)}{f_k(\bm{A}^\top_{k\bm{i}^*}\bm{1} = |\bm{i}^*|p^*)}} = \frac{\hat{\tau}^\pi_{\text{HT}}}{\hat{\lambda}^\pi_{\text{HT}}}.  
\end{align*}
The form of the variance of $\hat{\tau}^\pi_{\text{H\'{a}jek}}$ and its estimator is analogous to those in Theorem \ref{thm_Hajek1}, where $\hat{\mu}^\pi_{a,\text{HT}}$ is replaced by  $\hat{\tau}^\pi_{\text{HT}}$ and $\hat{\lambda}^\pi_{a,\text{HT}}$ is replaced by  $\hat{\lambda}^\pi_{\text{HT}}$, and the derivation is analogous to the proof of Theorem~\ref{thm_Hajek1}. 


\subsection{Additional results from the simulation study}
\label{sec_add_simu}

\begin{figure}[H]
    \centering
    \includegraphics[scale = 0.53]{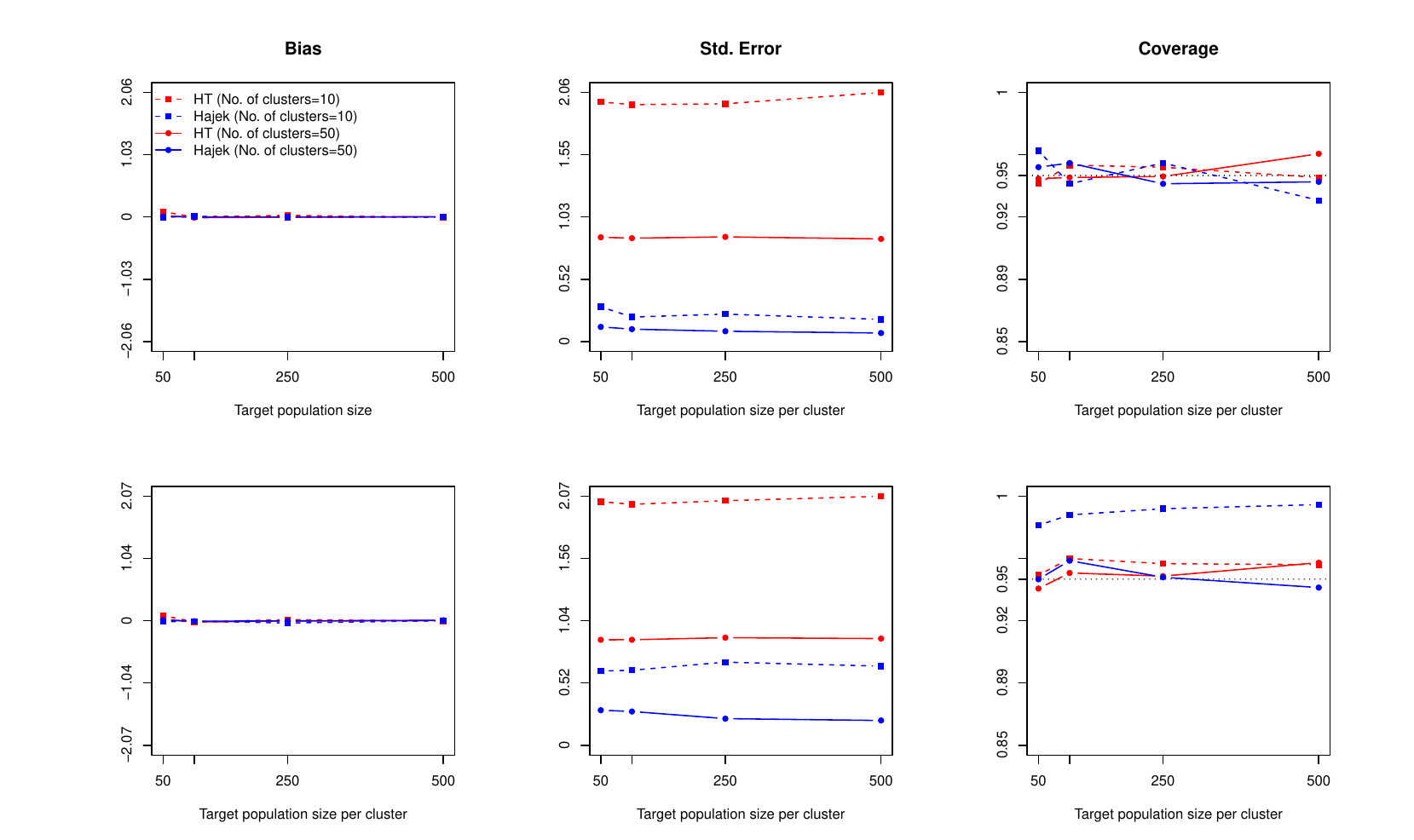}
    \caption{Bias, standard error, and coverage of 95\% confidence intervals for the Horvitz-Thompson and H\'{a}jek estimators of $\mu^\pi_1$ under outcome models M1 and M2 and stochastic intervention $\pi^{(2)}_k(\cdot)$. The first and second row correspond to outcome models M1 and M2, respectively.}
    \label{fig:simu_mu1_strat}
\end{figure}
\begin{figure}[H]
    \centering
    \includegraphics[scale = 0.53]{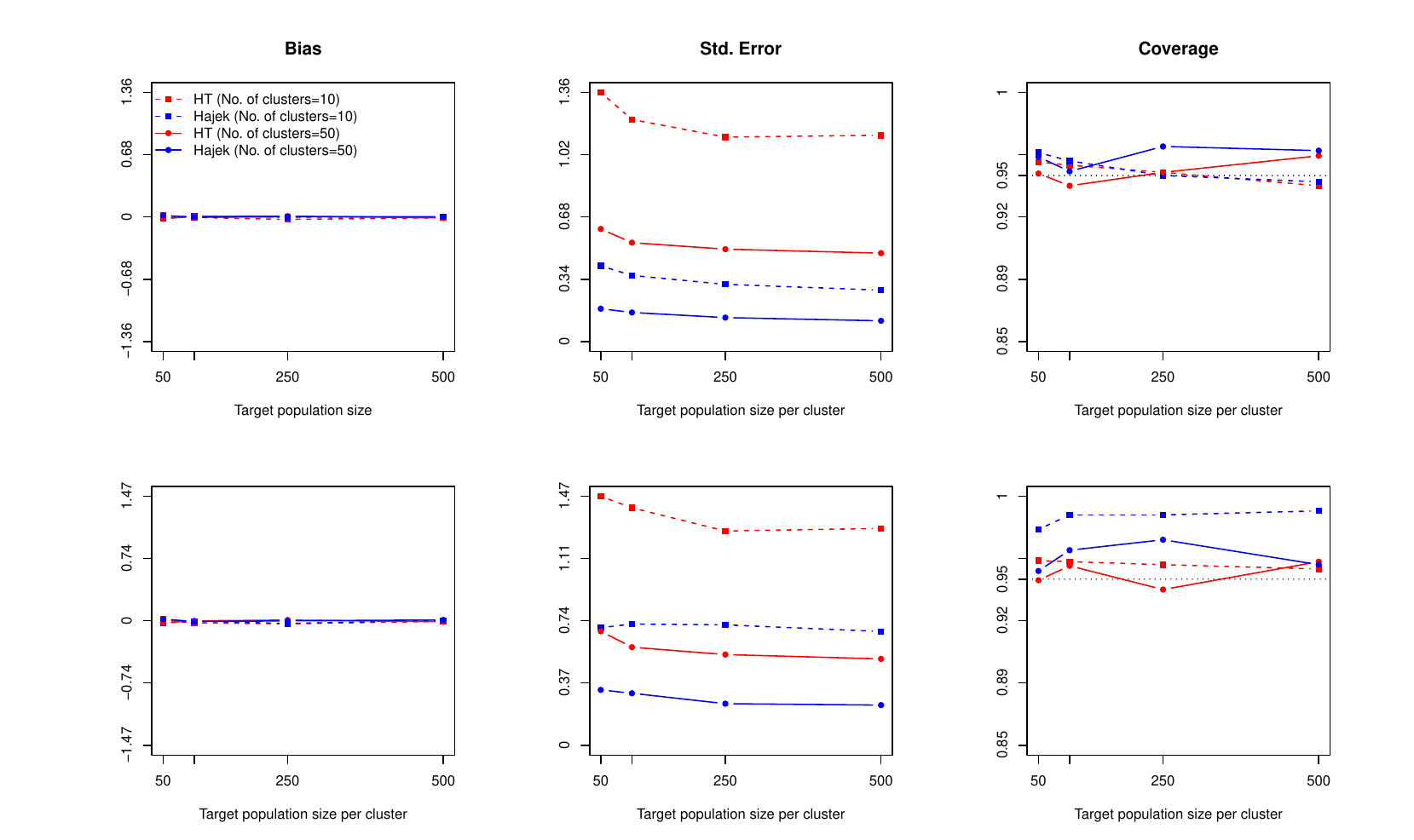}
    \caption{Bias, standard error, and coverage of 95\% confidence intervals for the Horvitz-Thompson and H\'{a}jek estimators of $\text{DE}^\pi$ under outcome models M1 and M2 and stochastic intervention $\pi^{(2)}_k(\cdot)$. The first and second row correspond to outcome models M1 and M2, respectively.}
    \label{fig:simu_DE_strat}
\end{figure}

\subsection{Additional results from the case study}
\label{sec_add_case}

\begin{singlespacing}
\begin{table}[!ht]
\centering
\scalebox{0.75}{
\begin{tabular}{cccccccc}
\toprule
  &   &  \multicolumn{3}{c}{Seed-ineligible population} &  \multicolumn{3}{c}{Seed-eligible population}\\
  \cline{3-8}
Outcome &  & Estimate & Std. Error & 95\% CI & Estimate & Std. Error & 95\% CI \\
\hline
\multirow{4}{*}{Talking about conflict} & $\hat{\mu}^{\pi}_{1,\text{HT}}$ & 0.41 & 0.51 & ( -0.6 , 1.41 ) & 0.44 & 0.56 & ( -0.65 , 1.54 ) \\ 
& $\hat{\mu}^{\pi}_{1,\text{H\'{a}jek}}$ & 0.40 & 0.51 & ( -0.61 , 1.41 ) & 0.44 & 0.56 & ( -0.65 , 1.54 ) \\ 
& $\widehat{\text{DE}}^{\pi}_{\text{HT}}$ & 0.04 & 0.13 & ( -0.23 , 0.3 ) & 0.07 & 0.17 & ( -0.25 , 0.4 ) \\ 
& $\widehat{\text{DE}}^{\pi}_{\text{H\'{a}jek}}$ & 0.02 & 0.14 & ( -0.25 , 0.28 ) & 0.07 & 0.17 & ( -0.25 , 0.4 ) \\ 
\hline
\multirow{4}{*}{Wearing anti-conflict wristbands} & $\hat{\mu}^{\pi}_{1,\text{HT}}$ & 0.19 & 0.25 & ( -0.3 , 0.68 ) & 0.28 & 0.30 & ( -0.31 , 0.87 ) \\ 
& $\hat{\mu}^{\pi}_{1,\text{H\'{a}jek}}$ & 0.19 & 0.25 & ( -0.3 , 0.68 ) & 0.28 & 0.30 & ( -0.31 , 0.87 ) \\ 
& $\widehat{\text{DE}}^{\pi}_{\text{HT}}$ & 0.03 & 0.07 & ( -0.12 , 0.17 ) & 0.14 & 0.15 & ( -0.15 , 0.42 ) \\ 
  & $\widehat{\text{DE}}^{\pi}_{\text{H\'{a}jek}}$ & 0.02 & 0.07 & ( -0.13 , 0.16 ) & 0.14 & 0.15 & ( -0.15 , 0.42 ) \\ 
  \hline
\multirow{4}{*}{Cases of conflict  }  & $\hat{\mu}^{\pi}_{1,\text{HT}}$  & 0.16 & 0.30 & ( -0.42 , 0.75 ) & 0.15 & 0.33 & ( -0.51 , 0.8 ) \\ 
  & $\hat{\mu}^{\pi}_{1,\text{H\'{a}jek}}$ & 0.16 & 0.30 & ( -0.43 , 0.74 ) & 0.15 & 0.33 & ( -0.51 , 0.8 ) \\ 
   & $\widehat{\text{DE}}^{\pi}_{\text{HT}}$ & 0.00 & 0.15 & ( -0.3 , 0.3 ) & 0.00 & 0.24 & ( -0.47 , 0.48 ) \\ 
 & $\widehat{\text{DE}}^{\pi}_{\text{H\'{a}jek}}$ & -0.01 & 0.15 & ( -0.3 , 0.29 ) & 0.00 & 0.24 & ( -0.47 , 0.48 ) \\

\bottomrule
\end{tabular}
}
\caption{Estimates, standard errors (SE) and $95\%$ confidence intervals (CI) of the average potential outcomes and direct effects under additive interference for the two target populations, where the stochastic intervention equals the actual intervention.}
\label{tab:q1_additive}
\end{table}
\end{singlespacing}

\begin{figure}[!ht]
    \centering
    \includegraphics[scale = 0.8]{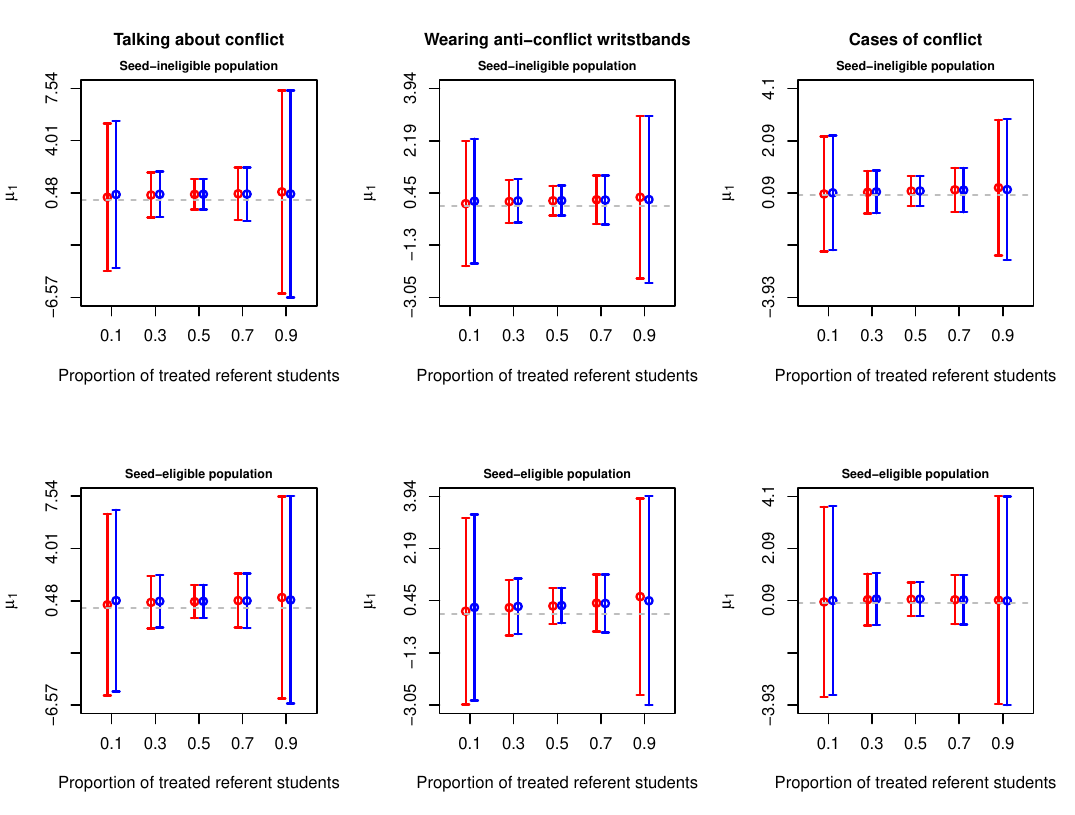}
    \caption{Point estimates and $95\%$ confidence intervals under additive interference for the Horvitz-Thompson (red) and H\'{a}jek (blue) estimators of $\mu^\pi_1$ for two target populations.}
    \label{fig:mu1_estimates_additive}
\end{figure}

\begin{figure}[!ht]
    \centering
    \includegraphics[scale = 0.8]{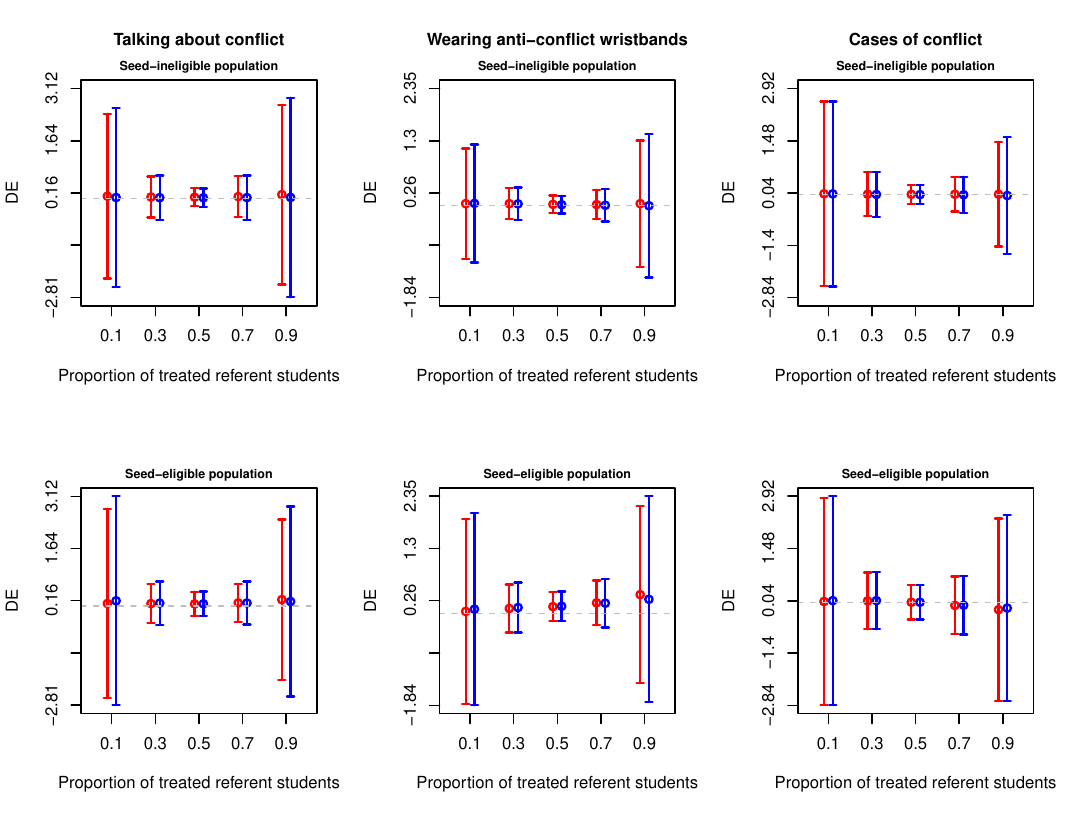}
    \caption{Point estimates and $95\%$ confidence intervals under additive interference for the Horvitz-Thompson (red) and H\'{a}jek (blue) estimators of $\text{DE}^\pi$ for two target populations.}
    \label{fig:DE_estimates_additive}
\end{figure}


\end{document}